\newtheorem{theorem}{Theorem}
\newtheorem{remark}{Remark}
\newtheorem{lemma}{Lemma}
\newtheorem{proposition}{Proposition}
\newtheorem{corollary}{Corollary}
\newcommand{\size}[1]{\lvert#1\rvert}
\newcommand{\condi}[0]{\mid}
\newcommand{\st}[1]{\mathnormal{#1}}
\newcommand{\init}[0]{\st{o}}
\newcommand{\stateset}[0]{\st{Q}}
\newcommand{\prematchset}[0]{\st{F}}
\newcommand{\cState}[0]{q}
\newcommand{\cPos}[0]{p}
\DeclareMathOperator{\mnshft}{minshift}
\newcommand{\next}[0]{\boldsymbol\alpha}
\newcommand{\trans}[0]{\boldsymbol\delta}
\newcommand{\shift}[0]{\boldsymbol\gamma}
\newcommand{\alp}[0]{\mathcal{A}}
\newcommand{\piid}[0]{\pi}
\newcommand{\defi}[1]{\textit{#1}}
\newcommand{\matchine}[0]{\Gamma}
\newcommand{\prob}[0]{p}
\newcommand{\p}[0]{\mathrm{P}}
\newcommand{\Hstate}[1][H]{Q_{#1}}
\newcommand{\Hinit}[1][H]{\pi_{#1}}
\newcommand{\Htrans}[1][H]{\delta_{#1}}
\newcommand{\Hemis}[1][H]{\phi_{#1}}
\newcommand{\HemisBis}[1][H]{\psi_{#1}}
\newcommand{\model}[0]{\mathcal{M}}
\newcommand{\okw}[1][w]{$w$-consistent}
\newcommand{\expect}[0]{\mathbf{E}}
\newcommand{\tac}[1]{a_{#1}}
\newcommand{\as}[2]{\mathrm{A\!S}_{#1}(#2)}
\newcommand{\asC}[3]{\mathrm{A\!S}^{(#1)}_{#2}(#3)}
\newcommand{\alg}[0]{\mathbf{A}}
\newcommand{\proba}[1]{\mathrm{p}_{#1}}
\newcommand{\probaS}[1]{\mathrm{p}^{\star}_{#1}}
\newcommand{\Mstate}[1][M]{\mathcal{Q}_{#1}}
\newcommand{\Minit}[1][M]{\pi_{#1}}
\newcommand{\Mtrans}[1][M]{\delta_{#1}}
\newcommand{\ordmatchine}[1][\matchine]{O_{#1}}
\newcommand{\readset}[1]{R_{#1}}
\newcommand{\sink}{\odot}
\newcommand{\expan}[1]{\widehat{#1}}
\newcommand{\mem}[2][\matchine]{\mathbf{h}_{#1}(#2)}
\newcommand{\increas}{\phi}
\newcommand{\subinc}[2][\kappa]{S_{#1}(#2)}
\newcommand{\kshifted}[2]{\stackrel{\scriptscriptstyle#1}{\overleftarrow{#2}}}
\newcommand{\myarrow}[0]{{\scriptscriptstyle\triangleright}}
\newcommand{\redir}[3]{#1_{#3\myarrow#2}}
\newcommand{\remo}[2]{#1_{{\scriptsize\bcancel{#2}}}}
\newcommand{\freq}[0]{\alpha}
\newcommand{\curState}[3]{\mathbf{\cState}_{#1}^{#2}(#3)}
\newcommand{\curPos}[3]{\mathbf{\cPos}_{#1}^{#2}(#3)}
\newcommand{\curShift}[3]{\mathbf{s}_{#1}^{#2}(#3)}
\newcommand{\myshift}[0]{\phi}
\newcommand{\typa}[1]{\dot{#1}}
\newcommand{\typb}[1]{\ddot{#1}}
\newcommand{\sta}[1][q]{\typa{#1}}
\newcommand{\stb}[1][q]{\typb{#1}}
\newcommand{\fhsgn}{f}
\newcommand{\fhn}[4][n]{\fhsgn^{(#1)}_{#2}(#3,#4)}
\newcommand{\fh}[3]{\fhsgn_{#1}(#2,#3)}
\newcommand{\fon}[3][n]{g^{(#1)}_{#2}(#3)}
\newcommand{\fo}[2]{g_{#1}(#2)}
\newcommand{\fha}[3]{\typa{\fhsgn}_{#1}(#2,#3)}
\newcommand{\fhb}[3]{\typb{\fhsgn}_{#1}(#2,#3)}
\newcommand{\mm}[0]{M}
\newcommand{\modb}[0]{\typb{M}}
\newcommand{\moda}[0]{\typa{M}}
\newcommand{\set}[1]{\mathcal{#1}}
\newcommand{\subcase}[1]{\paragraph{\textit{#1}}}
\newcommand{\rvna}[1]{\typa{N}_{#1}}
\newcommand{\rvp}[1]{A_{#1}}
\newcommand{\rvpa}[1]{\typa{A}_{#1}}
\newcommand{\rvia}[1]{{I}^{\sta}_{#1}}
\newcommand{\rvib}[1]{{I}^{\stb}_{#1}}
\newcommand{\rvea}[1]{{F}^{\sta}_{#1}}
\newcommand{\rveb}[1]{{F}^{\stb}_{#1}}
\newcommand{\rvnia}[1]{{N}^{\sta}_{#1}}
\newcommand{\rvnib}[1]{{N}^{\stb}_{#1}}
\newcommand{\rvn}[1]{N_{#1}}
\newcommand{\rvr}[1]{D_{#1}}
\newcommand{\rvra}[1]{\typa{D}_{#1}}
\newcommand{\rvc}[1]{C_{#1}}
\newcommand{\rvca}[1]{\typa{C}_{#1}}
\newcommand{\rvcb}[1]{\typb{C}_{#1}}
\newcommand{\rva}{\typa{V}}
\newcommand{\stfreqa}{\alpha_{\sta}}
\newcommand{\stfreqb}{\alpha_{\stb}}
\newcommand{\exst}[0]{\mathbf{a}}
\newcommand{\frst}[0]{\mathbf{f}}
\newcommand{\curWin}[1]{\mathbf{d}(#1)}
\newcommand{\posit}[1]{#1^{{+}}}
\newcommand{\expstate}[1]{E(#1)}
\newcommand{\setMinus}[2]{#1\setminus#2}
\title{Optimal pattern matching algorithms}
\author{Gilles Didier\\
\small Aix-Marseille Universit\'e, CNRS, Centrale Marseille, I2M UMR7373, Marseille, France\\ 
\small E-mail: \url{gilles.didier@univ-amu.fr}}
\begin{document}
\maketitle
\begin{abstract}
We study a class of finite state machines, called \defi{$w$-matching machines}, which yield to simulate the behavior of pattern matching algorithms while searching for a pattern $w$.
They can be used to compute the asymptotic speed, i.e. the limit of the expected ratio of the number of text accesses to the length of the text, of algorithms while parsing an iid text to find the pattern $w$.

Defining the order of a matching machine or of an algorithm as the maximum difference between the current and accessed positions during a search (standard algorithms are generally of order $|w|$), we show that being given a pattern $w$, an order $k$ and an iid model, there exists an optimal $w$-matching machine, i.e. with the greatest asymptotic speed under the model among all the machines of order $k$, of which the set of states belongs to a finite and enumerable set.

It shows that it is possible to determine: 1) the greatest asymptotic speed among a large class of algorithms, with regard to a pattern and an iid model, and 2) a $w$-matching machine, thus an algorithm, achieving this speed.
\end{abstract}
\section{Introduction}

The problem of pattern matching consists in reporting all, and only the occurrences of a (short) word, a \defi{pattern}, $w$ in a (long) word, a \defi{text}, $t$. This question dates back to the early days of computer science. Since then, dozens of algorithms have been, and are still proposed, to solve it \cite{Faro2013}. Pattern matching has a wide range of applications: text, signal and image processing, database searching, computer viruses detection, genetic sequences analysis etc. Moreover, as a classic algorithmic problem, it may serve to introduce new ideas and paradigms in this field.  
Though optimal algorithms, in the sense of worst case analysis, have been developed forty years ago \cite{Knuth1977}, there exists as yet no algorithm which is fully efficient in all the various situations encountered in practice: large or small alphabets, long or short patterns etc. (see \cite{Faro2013}). 

The worst case analysis does not say much about the general behavior of algorithm in practical situations. In particular, the Knuth-Morris-Pratt algorithm is not much faster than the naive one, and even slower in average than certain algorithms with quadratic worst case complexity.
A more accurate measure of the algorithm efficiency in real situations is the average complexity on random texts or, equivalently, the expected complexity under a probabilistic  model of text.
The question of average case analysis of pattern matching algorithms was raised since at least \cite{Knuth1977}, in which the complexity of  pattern matching algorithms is conveniently expressed in terms of number of text accesses. A seminal work shows that, under the assumption that both the symbols of the pattern $w$ and the text are independently drawn uniformly from a finite alphabet, the minimum expectation of text accesses needed to search $w$ in $t$ is $\mathbf{O}\left(\frac{\size{t}\log\size{w}}{\size{w}}\right)$ \cite{Yao1979}.
Since then, several works studied the average complexity of some pattern matching algorithms, mainly Boyer-Moore-Horspool and Knuth-Morris-Pratt \cite{Yao1979,Guibas1981,Barth1984,BaezaYates1992,Mahmoud1997,Regnier1998,Smythe2001,Tsai2006,Marschall2008,Marschall2010,Marschall2011,Marschall2012}.
Different tools have been used to carry out these analysis, notably generating functions and Markov chains. In particular, G. Barth used Markov chains to compare the Knuth-Morris-Pratt and the naive algorithms \cite{Barth1984}. More recently, T. Marschall and S. Rahmann provided a general framework based on the same underlying ideas, for performing  statistical analysis of pattern matching algorithms, notably for computing the exact distributions of the number of text accesses of several pattern matching algorithms on iid texts \cite{Marschall2008,Marschall2010,Marschall2011,Marschall2012}. 

Following the same ideas, we consider finite state machines, called a \defi{$w$-matching machines}, which yield to simulate the behavior of pattern matching algorithms while searching a given pattern $w$.
They are used for studying the asymptotic behavior of pattern matching algorithms, namely the limit expectation of the ratio of the text length to the number of text accesses performed by an algorithm for searching a given pattern $w$ in iid texts, which we call the \defi{asymptotic speed} of the algorithm with regard to $w$ and the iid model. We show that the sequence of states of a  $w$-matching machine parsed while searching in an iid text follows a Markov chain, which yields to compute their asymptotic speed.

We next focus our interest in optimal $w$-matching machines, i.e. those with the greatest asymptotic speed with regard to an iid model (and the pattern $w$). The order of a $w$-matching machine (or of an algorithm with regard to $w$) is defined as the maximum difference between the current and the accessed positions during a search. Most of the $w$-matching   machines corresponding to standard algorithms are of order $\size{w}$ (a few of them have order $\size{w}+1$). We prove that, being given a pattern $w$, an order $k$ and an iid model, there exists an optimal $w$-matching machine of order $k$ in which the set of states is in bijection with the set of partial functions from $\{0, \ldots, k\}$ to the alphabet. It makes it possible to compute the greatest speed which can be achieved under a large class of algorithms (including all the pre-existing algorithms), and a $w$-machine achieving this speed. This optimal matching machine can be seen as a \textit{de novo} pattern matching algorithm which is optimal with regard to the pattern and the model. Some of the methods presented here have been implemented in the companion paper \cite{DidierY}. The software is available at \url{https://github.com/gilles-didier/Matchines.git}. 

The rest of the paper is organized as follows. Section \ref{s:defi} gives some basic notations and definitions. In Section  \ref{s:matching}, we present the $w$-matching machines and some of their properties. Next, we present three standard probabilistic models of text, define the asymptotic speed of an algorithm and show that the sequence of internal states of a $w$-matching machine follows a Markov chain on iid texts (Section \ref{s:models}). Last, in Section \ref{s:states}, we show that any $w$-matching machine can be ``simplified'' into a no slower $w$-matching machine of order $k$ with a set of states in bijection with the set of partial functions from $\{0, \ldots, k\}$ to the alphabet.

\section{Definitions and notations}\label{s:defi}
An \defi{alphabet} is a finite set $\alp$ of elements called \defi{letters} or \defi{symbols}.

A \defi{word}, a \defi{text} or a \defi{pattern} on $\alp$ is a finite sequence of symbols of $\alp$.
We put $\size{v}$ for the length of a word $v$ and $\size{v}_{w}$ for the number of occurrences of the word $w$ in $v$. The cardinal of the set $\set{S}$ is also noted $\size{\set{S}}$. Words are indexed from $0$, i.e. $v = v_{0}v_{1}\ldots v_{\size{v}-1}$. We put $v_{[i,j]}$ for the subword of $v$ starting at the position $i$ and ending at the position $j$, i.e. $v_{[i,j]} = v_{i}v_{i+1}\ldots v_{j}$. The concatenate of two words $u$ and $v$ is the word $uv=u_{0}u_{1}\ldots u_{\size{u}-1}v_{0}v_{1}\ldots v_{\size{v}-1}$.

For any length $n\geq 0$, we put $\alp^{n}$ for the set of words of length $n$ on $\alp$ and $\alp^{\star}$, for the set of finite words on $\alp$, i.e. $\alp^{\star} = \bigcup_{n=0}^{\infty} \alp^{n}$.

Unless otherwise specified, all the texts and patterns considered below are on a fixed alphabet $\alp$.

\section{Matching machines}\label{s:matching}

Let $w$ be a pattern on an alphabet $\alp$. A \defi{$w$-matching machine} is $6$-uple $(\stateset, \init, \prematchset, \next, \trans, \shift)$ where

\begin{itemize}
\item $\stateset$ is a finite number of states,
\item $\init\in\stateset$ is the initial state,
\item $\prematchset\subset\stateset$ is the subset of pre-match states,
\item $\next:\stateset\to \mathbb{N}$ is the next-position-to-check function, which is such that for all $q\in\prematchset$, $\next(q)<\size{w}$,
\item $\trans:\stateset\times\alp\to \stateset$ is the transition state function,
\item $\shift:\stateset\times\alp\to \mathbb{N}$ is the shift function.
\end{itemize}

By convention, the set of states of a matching machine always contains a \defi{sink state} $\sink$, which is such that, for all symbols $x\in\alp$,  $\trans(\sink, x) = \sink$ and $\shift(\sink, x) = 0$.

The \defi{order} $\ordmatchine$ of a matching machine $\matchine=(\stateset, \init, \prematchset, \next, \trans, \shift)$ is its greatest next-position-to-check, i.e. $\ordmatchine = \max_{q\in\stateset}\{\next(q)\}$.

Remark that $(\stateset, \init, \prematchset, \trans)$ is a deterministic finite automaton. The $w$-matching machines carry the same information as the \defi{Deterministic Arithmetic Automata} defined in \cite{Marschall2010,Marschall2011}.

\subsection{Generic algorithm}\label{secGen}
Algorithm \ref{algogen}, which will be referred to as the \defi{generic algorithm}, takes a $w$-matching machine and a text $t$ as input and is expected to output all the occurrence positions of $w$ in $t$.

\begin{algorithm}[htp]
\SetKw{and}{and}
\SetKw{print}{print}
\SetKwInOut{Input}{input}
\SetKwInOut{Output}{output}
	\SetAlgoLined\DontPrintSemicolon
	\Input{a $w$-matching machine $(\stateset, \init, \prematchset, \next, \trans, \shift)$ and a text $t$}
	\Output{all the occurrence positions of $w$ in $t$ (hopefully)}
	\vskip 0.2cm	
		$(\cState, \cPos) \leftarrow (\init, 0)$\;
		\While{$\cPos\leq\size{t}-\size{w}$} {
			\If{$\cState\in\prematchset$ \and $t_{\cPos+\next(\cState)} = w_{\next(\cState)}$}{
				\print ``~occurrence at position $\cPos$~''\;
			}
			$(\cState, \cPos) \leftarrow (\trans(\cState, t_{\cPos+\next(\cState)}), \cPos+\shift(\cState, t_{\cPos+\next(\cState)}))$\;
		}
		\vskip 0.3cm
\caption{Generic algorithm}\label{algogen}
\end{algorithm}

We put $\curState{\matchine}{t}{i}$ (resp. $\curPos{\matchine}{t}{i}$) for the state $\cState$ (resp. for the position $\cPos$) at the beginning of the $i^{\mbox{\tiny th}}$ iteration of the generic algorithm on the input $(\matchine, t)$.  We put $\curShift{\matchine}{t}{i}$ for the shift at the end of the $i^{\mbox{\tiny th}}$ iteration, i.e. $\curShift{\matchine}{t}{i} = \shift(\curState{\matchine}{t}{i}, t_{\curPos{\matchine}{t}{i}+\next(\curState{\matchine}{t}{i})})$. By convention, the generic algorithm starts with the iteration $0$.

A $w$-matching machine $\matchine$ is \defi{redundant} if there exist a text $t$ and two indexes $i<j$ such that
\begin{dmath*}
\curPos{\matchine}{t}{j}+\next(\curState{\matchine}{t}{j}) = \curPos{\matchine}{t}{i}+\next(\curState{\matchine}{t}{i})+\sum_{k=i}^{j-1}\curShift{\matchine}{t}{k}.
\end{dmath*}
In plain text, a matching machine $\matchine$ is redundant if there exists a text $t$ for which a position is accessed more than once during an execution of the generic algorithm on the input $(\matchine,t)$.

A $w$-matching machine $\matchine$ is \defi{valid} if, for all texts $t$, the execution of the generic algorithm on the input $(\matchine, t)$ outputs all, and only the occurrence positions of $w$ in $t$.

\begin{remark}\label{remValid}
If a matching machine is valid then 
\begin{enumerate}
\item its order is greater than or equal to $\size{w}-1$,
\item there is no text $t$ such that for some $j>i$, we have $\curState{\matchine}{t}{i}=\curState{\matchine}{t}{j}$ and $\shift(\curState{\matchine}{t}{k}, t_{\curPos{\matchine}{t}{k}+\next(\curState{\matchine}{t}{k})})=0$ for all $i\leq k \leq j$. In particular, the sink state is never reached during an execution of the generic algorithm with a valid machine.
\end{enumerate}
\end{remark}
\begin{proof}
Condition 1 comes from the fact that it is necessary  to check the position $(i+\size{w}-1)$  of the text to make sure whether $w$ occurs at $i$ or not. If Condition 2 is not fulfilled, an infinite loop starts at the $i^{\mbox{\tiny th}}$ iteration  of the generic algorithm on the input $(\matchine, t)$. In particular, the last occurrence of the pattern $w$ in the text $tw$ will never be reported.
\end{proof}

A \defi{match transition} is a transition going from a state $\cState\in\prematchset$ to the state $\trans(\cState,  w_{\next(\cState)})$. It corresponds to an actual match if the machine is valid.

A $w$-matching machine $\matchine$ is \defi{equivalent} to a $w$-matching machine $\matchine'$ if, for all texts $t$, the text accesses performed by the generic algorithm on the input $(\matchine, t)$ are the same as those performed on the input $(\matchine', t)$.
The machine $\matchine$ is \defi{faster} than $\matchine'$ if, for all texts $t$, the number of iterations of the generic algorithm on the input $(\matchine, t)$ is  smaller than that on the input $(\matchine', t)$.

We claim that, for all pre-existing pattern matching algorithms and all patterns $w$, there exists a $w$-matching machine $\matchine$ which is such that the text accesses performed by the generic algorithm on the input $(\matchine, t)$ are the exact same as those performed by the pattern matching algorithm on the input $(w, t)$.
Without giving a formal proof, this holds for any algorithm such that:
\begin{enumerate}
\item The current position in the text is stored in an internal variable which never decreases during their execution.
\item All the other internal variables, which will be refer to as \defi{state variables}, are bounded independently of the texts in which the pattern is searched.
\item The difference between the position accessed and the current position only depends on the state variables.
\end{enumerate}
We didn't find a pattern matching algorithm which not satisfies the conditions above.

Being given a pattern $w$, let us consider the $w$-matching machine where the set of states is made of the combinations of the possible values of the state variables, which are in finite number from Feature 2. Feature 3 ensures that we can define a next-position-to-check from the states of the machine, which is bounded independently from the input text. Last, the only changes which may occur between two text accesses during an execution of the algorithm, are an increment of the current position (Feature 1) and/or a certain number of modifications of the state variables, which ends up to change the state of the $w$-matching machine.
For instance the $w$-matching machine $\matchine$ associated to the naive algorithm has $\size{w}$ states  with 
\begin{itemize}
\item $\stateset = \{q_{0}, \ldots, q_{\size{w}-1}\}$,
\item $\init = q_{0}$, 
\item $\prematchset = \{q_{\size{w}-1}\}$,
\item $\next(q_{i}) = i$ for all indexes $0\leq i<w$,
\item $\trans(q_{i}, a) = \left\{\begin{array}{ll} q_{i+1}& \mbox{if $i<\size{w}-1$ and $a=w_{i}$,}\\
q_{0} & \mbox{otherwise,}\end{array}\right.$
\item $\shift(q_{i}, a) = \left\{\begin{array}{ll}0& \mbox{if $i<\size{w}-1$ and $a=w_{i}$,}\\
1& \mbox{otherwise.}\end{array}\right.$
\end{itemize}

A state $q$ of the matching machine $\matchine$ is \defi{reachable} in $\matchine$ if there exists a text $t$  such that $q$ is the current state of an iteration of the generic algorithm on the input $(\matchine, t)$. Unless otherwise specified or for temporary constructions, we will only consider matching machines $\matchine$ in which  all the states but the sink are reachable. Below, stating ``removing all the unreachable states'' will have to be understood as ``removing all the unreachable states but the sink''. Remark that all reachable states $q$ of a valid $w$-matching machine are such that there exists a text $t$ and two indexes $i\leq j$ such that $\curState{\matchine}{t}{i} = q$ and $\curState{\matchine}{t}{j} \in\prematchset$. In the same way,  a transition between two given states is reachable if there exists a text $t$ for which the transition occurs during the execution of the generic algorithm on the input $(\matchine, t)$.

We assume that for all pre-match states $q$ of $\matchine$, there exists a text $t$ such that a match transition starting from $q$ occurs during the execution of the generic algorithm on the input $(\matchine, t)$.

\subsection{Full-memory expansion -- standard matching machines}\label{secFull}
For all positive integers $n$, $\readset{n}$ denotes the set of subsets $H$ of  $\{0,\ldots, n\}\times\alp$ such that, for all $i\in\{0, \ldots, n\}$, there exists at most one pair in $H$ with $i$ as first entry. In other words, $\readset{n}$ is the set of partial functions from $\{0, \ldots, n\}$ to $\alp$.

For $H\in\readset{n}$, we put $\frst(H)$ for the set consisting of the first entries (i.e. the \defi{position entries}) of the pairs in $H$, namely  
\begin{dmath*}\frst(H)=\{i \hiderel{\condi}\exists x\hiderel{\in}\alp\mbox{ with }(i,x)\hiderel{\in} H\}.\end{dmath*} 

Let $k$ be a non-negative integer and $H\in\readset{n}$, the \defi{$k$-shifted} of $H$ is defined by 
\begin{dmath*}
\kshifted{k}{H} \hiderel{=} \{(u-k,y) \hiderel{\condi}  (u, y)\hiderel{\in} H \mbox{ with } u\hiderel{\geq} k\}.
\end{dmath*}
In plain text, $\kshifted{k}{H}$ is obtained by subtracting $k$ from the position entries of the pairs in $H$ and by keeping only the pairs with non-negative positive entries.

The \defi{full memory expansion} of a $w$-matching machine $\matchine = (\stateset, \init, \prematchset, \next, \trans, \shift)$ is the $w$-matching machine $\expan{\matchine}$ obtained by removing the unreachable states of the $w$-matching machine $\matchine' = (\stateset', \init', \prematchset', \next', \trans', \shift')$, defined as:
\begin{itemize}
\item $\stateset' = \stateset\times\readset{\ordmatchine}$,
\item $\init' = (\init, \emptyset)$,
\item $\next'((q,H)) = \next(q)$,
\item $\shift'((q,H), x) = \shift(q, x)$,
\item $\prematchset' = \prematchset\times\readset{\ordmatchine}$,
\item \begin{dmath*}\trans'((q,H), x) = \left\{\begin{array}{ll}
(\trans(q,x),  \kshifted{\shift(q, x)}{H\cup\{(\next(q),x)\}}) & \mbox{if $\next(q)\not\in\frst(H)$,}\\
\sink & \mbox{if $\exists a\neq x$ s.t. $(\next(q),a)\in H$,}\\
(\trans(q,x),  \kshifted{\shift(q, x)}{H}) & \mbox{if $(\next(q),x)\in H$.}
\end{array}\right.\end{dmath*}
\end{itemize}

\begin{remark}\label{remPos}
At the beginning of the $i^{\mbox{\tiny th}}$ iteration of the generic algorithm on the input $(\expan{\matchine}, t)$, if the current state is $(q, H)$ then the positions of $\{(j+\curPos{\expan{\matchine}}{t}{i}) \condi j\in\frst(H)\}$ are exactly the positions of $t$ greater than  $\curPos{\expan{\matchine}}{t}{i}$ which were accessed so far, while the second entries of the corresponding elements of $H$ give the symbols read.
\end{remark}

\begin{proposition}\label{propFull}
The $w$-matching machines $\matchine$ and $\expan{\matchine}$ are equivalent. In particular $\matchine$  is valid (resp. non-redundant) if and only if $\expan{\matchine}$ is valid (resp. non-redundant).
\end{proposition}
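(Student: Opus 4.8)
The plan is to run the generic algorithm on $(\matchine,t)$ and on the machine $\matchine'=(\stateset',\init',\prematchset',\next',\trans',\shift')$ of the definition (before pruning) in parallel, and to prove by induction on the iteration index that the two executions stay in lockstep in a precise sense; since deleting the unreachable states of $\matchine'$ changes no execution of the generic algorithm, every conclusion below then transfers verbatim from $\matchine'$ to $\expan{\matchine}$. Writing $\curState{\matchine'}{t}{i}=(q_{i},H_{i})$, the invariant to establish is: for every $i$, either both runs reach the beginning of iteration $i$ or neither does, and in the former case
\begin{align*}
q_{i}&=\curState{\matchine}{t}{i},\qquad \curPos{\matchine'}{t}{i}=\curPos{\matchine}{t}{i},\\
H_{i}&=\bigl\{(j-\curPos{\matchine}{t}{i},\,t_{j})\ :\ j\geq\curPos{\matchine}{t}{i}\text{ and $j$ is accessed before iteration $i$}\bigr\}.
\end{align*}
The last equality is precisely the content of Remark~\ref{remPos} (the restriction $j\geq\curPos{\matchine}{t}{i}$ just records that such an access still lies inside the current window).

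For $i=0$ both runs begin in their initial states with current position $0$ and with no access performed, so $H_{0}=\emptyset$, as required. Assume the invariant at iteration $i$ and that both runs execute iteration $i$. Since $\next'((q_{i},H_{i}))=\next(q_{i})$ and the current positions coincide, both runs access the same position $\curPos{\matchine}{t}{i}+\next(q_{i})$ and read the same letter $x:=t_{\curPos{\matchine}{t}{i}+\next(q_{i})}$, so the text access of this iteration is the same. By the displayed form of $H_{i}$, one has $\next(q_{i})\in\frst(H_{i})$ exactly when that position has already been accessed (and is still in the window), and then the letter recorded for it in $H_{i}$ is $x$, because it was read from $t$ at that very position; hence the middle, sink-producing branch of $\trans'$, which requires a recorded letter different from $x$, never fires over a genuine text, and the applicable branch replaces $q_{i}$ by $\trans(q_{i},x)=\curState{\matchine}{t}{i+1}$ and $H_{i}$ by either $\kshifted{\shift(q_{i},x)}{H_{i}}$ or $\kshifted{\shift(q_{i},x)}{H_{i}\cup\{(\next(q_{i}),x)\}}$. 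As $\shift'((q_{i},H_{i}),x)=\shift(q_{i},x)=\curShift{\matchine}{t}{i}$, the current positions agree again at iteration $i+1$, and a direct check that the $k$-shift operation with $k=\shift(q_{i},x)$ subtracts $k$ from the recorded relative positions, discards those that would fall behind the new current position, and (in the first branch) inserts $(\next(q_{i}),x)$ precisely when that position had not yet been recorded, shows that $H_{i+1}$ has the announced form. The loop guard $\cPos\leq\size{t}-\size{w}$ is therefore tested on equal values in the two runs, so iteration $i$ is executed in one run iff it is in the other, which closes the induction. (If $\matchine$ reaches its own sink, the argument only degenerates: both runs then stay forever on that sink, resp.\ on a state whose first component is that sink, and repeat the same access.)

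The invariant yields at once that, on every text $t$, the generic algorithm performs the same sequence of text accesses on $(\matchine,t)$ and on $(\expan{\matchine},t)$, which is exactly the equivalence of $\matchine$ and $\expan{\matchine}$. For validity, the position printed at iteration $i$ depends only on whether the current state lies in the pre-match set and whether the read letter equals $w_{\next(\cdot)}$: since $\prematchset'=\prematchset\times\readset{\ordmatchine}$ we have $(q_{i},H_{i})\in\prematchset'$ iff $q_{i}\in\prematchset$, while $\next'((q_{i},H_{i}))=\next(q_{i})$ and the same letter is read in both runs, so the two executions print exactly the same positions on every text; hence $\matchine$ is valid iff $\expan{\matchine}$ is. Finally, being redundant is by definition the existence of a text on which some position is accessed twice by the generic algorithm, a property of the text-access sequence alone, which we have just shown to be identical for $\matchine$ and $\expan{\matchine}$; hence $\matchine$ is non-redundant iff $\expan{\matchine}$ is. The one delicate point is the induction step, namely checking that the three branches of $\trans'$ together with the $k$-shift operation preserve exactly the window-memory description of $H_{i}$, and that the sink branch is vacuous over an actual text; the remainder is bookkeeping.
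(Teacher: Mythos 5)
Your proof is correct and follows essentially the same route as the paper, which disposes of the statement with the very induction you spell out (lockstep of state, position and shift between $\matchine$ and $\expan{\matchine}$), using the window-memory description of $H$ that the paper records as Remark~\ref{remPos}. You merely make explicit the details the paper calls ``straightforward'' --- in particular that the sink branch of $\trans'$ cannot fire on an actual text and that pruning unreachable states affects no execution --- so no further changes are needed.
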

\begin{proof}
It is straightforward to prove by induction that, for all iterations $i$, if $\curState{\expan{\matchine}}{t}{i} = (q,H)$ then $\curState{\matchine}{t}{i} = q$, reciprocally, there exists $H\in\readset{\ordmatchine}$ such that $\curState{\expan{\matchine}}{t}{i} = (\curState{\matchine}{t}{i},H)$, $\curShift{\expan{\matchine}}{t}{i} = \curShift{\matchine}{t}{i}$ and $\curPos{\expan{\matchine}}{t}{i} = \curPos{\matchine}{t}{i}$.
\end{proof}

A $w$-matching machine $\matchine$ is \defi{standard} if its set of states has the same cardinal as that of its full memory expansion or, equivalently, if each state $q$ of $\matchine$ appears in a unique pair/state of its full memory expansion. For all states $q$ of a standard matching machine $\matchine$, we put $\mem{q}$ for the second entry of the unique pair/state of $\expan{\matchine}$ in which $q$ appears.

\begin{remark}\label{remStand}
Let $\matchine=(\stateset, \init, \prematchset, \next, \trans, \shift)$ be a standard matching machine. For all paths $q_{0}, \ldots, q_{n}$ of states in $\stateset_{\setminus{\{\sink\}}}$ in the DFA $(\stateset, \init, \stateset, \trans)$, there exists a text $t$ such that
$\curState{\matchine}{t}{i} = q_{i}$ for all $0\leq i \leq n$.
\end{remark}

\begin{theorem}\label{theoStdValid}
A standard and non-redundant $w$-matching machine $\matchine=(\stateset, \init, \prematchset, \next, \trans, \shift)$ is valid if and only if, for all $q\in\stateset$,
\begin{enumerate}
\item $q\in\prematchset$ if and only if we have $(j,w_{j})\in\mem{q}$ for all $j\in\setMinus{\{0,\ldots,\size{w}-1\}}{\{\next(q)\}}$,
\item \begin{dmath*}\shift(q,x)\leq\left\{\begin{array}{ll}\min\{k\geq 1 \condi w_{i-k} = y \mbox{ for all } (i,y)\in\mem{q}\cup\{(\next(q), x)\} \mbox{ with } k\leq i <k+\size{w}\} &\mbox{if } q\in\prematchset,\\
\min\{k\geq 0 \condi w_{i-k} = y \mbox{ for all } (i,y)\in\mem{q}\cup\{(\next(q), x)\} \mbox{ with } k\leq i <k+\size{w}\} &\mbox{otherwise,}
\end{array}\right.\end{dmath*}
\item there is no path $(q_{0},\ldots,q_{\ell})$ such that
\begin{itemize}
\item $q_{i}\neq q_{j}$ for all $0\leq i\neq j\leq \ell$,
\item there exists a word $v$ such that
\begin{itemize}
\item $q_{i+1} = \trans(q_{i}, v_{i})$ and $\shift(q_{i}, v_{i}) = 0$ for all $0\leq i <\ell$,
\item $q_{0} = \trans(q_{\ell}, v_{\ell})$ and $\shift(q_{\ell}, v_{\ell}) = 0$.
\end{itemize}
\end{itemize}
\end{enumerate}
\end{theorem}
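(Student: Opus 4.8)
The plan is to prove both directions by carefully tracking, for a standard machine, what the sole memory set $\mem{q}$ records and using Remark~\ref{remStand} to realize arbitrary DFA-paths as actual runs of the generic algorithm. Throughout I will work with the full memory expansion $\expan{\matchine}$, which by Proposition~\ref{propFull} is equivalent to $\matchine$ and (since $\matchine$ is standard) has the same states; by Remark~\ref{remPos}, when the generic algorithm is in state $q$ with current position $\cPos$, the set of previously-accessed text positions exceeding $\cPos$ together with the symbols read there is exactly $\{(\cPos+j, y)\condi (j,y)\in\mem{q}\}$. The key bookkeeping fact, provable by an easy induction on the run using the definition of $\trans'$ and $\kshifted{k}{\cdot}$, is that a position in the ``window'' $[\cPos,\cPos+\ordmatchine]$ is revisited with a \emph{consistent} symbol only so long as the run has never left that window, and that the machine is valid precisely when (a) it declares an occurrence at $\cPos$ exactly when the $\size{w}$ window positions all match $w$, and (b) it never shifts past a position whose required alignment with $w$ has not been verified, and (c) it does not loop forever without shifting.

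For the forward direction ($\Rightarrow$): assume $\matchine$ is valid. Condition~1 is forced because $q\in\prematchset$ means the generic algorithm, upon reading $w_{\next(q)}$ at position $\next(q)$, must output an occurrence at the current position; by Remark~\ref{remStand} we can realize any path reaching $q$, and correctness of the output demands that all the \emph{other} positions $0,\ldots,\size{w}-1$ of the window already carry the letters of $w$ — i.e.\ $(j,w_j)\in\mem{q}$ for $j\neq\next(q)$ — and conversely that if they do, $q$ must be a pre-match state (otherwise a genuine occurrence goes unreported on some text). Condition~2 is the ``no over-shift'' requirement: if $\shift(q,x)=k$ exceeded the stated minimum, there would be some offset $i$ with $(i,y)\in\mem{q}\cup\{(\next(q),x)\}$ and $w_{i-k}\neq y$ (with $k\le i<k+\size{w}$), meaning that after the shift the generic algorithm would never check position $\cPos+k+(i-k)$ against $w_{i-k}$ although that position already holds $y\neq w_{i-k}$; on the text obtained by placing $w$ starting at $\cPos+k$ (extending the realizing path of Remark~\ref{remStand}), a real occurrence is skipped — contradiction. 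The pre-match versus non-pre-match split in the bound (allowing $k\ge 1$ versus $k\ge 0$) reflects that a pre-match state has just been given the chance to report the occurrence at the current position, so a zero shift is harmless there but forbidden in general. Condition~3 is exactly Remark~\ref{remValid}(2) rephrased for a standard machine via Remark~\ref{remStand}: a simple zero-shift cycle in the DFA can be realized by some word $v$, yielding an infinite loop of the generic algorithm and a missed final occurrence on $tw$.

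For the converse ($\Leftarrow$): assume Conditions~1--3. Condition~3 guarantees the generic algorithm halts on every text (no zero-shift cycle, hence the current position strictly increases within any $\size{\stateset}$ consecutive iterations), so in particular the sink is never reached and every run is a genuine DFA-path to which Remark~\ref{remPos} applies. One then argues the loop invariant: whenever the algorithm is at position $\cPos$ in state $q$, every text position in $[0,\cPos-1]$ that could start an occurrence of $w$ has already been correctly classified, and the verified portion of the current window agrees with $\mem{q}$. Condition~2 is what preserves this invariant across a shift — it ensures no unverified ``potential occurrence'' is jumped over — and Condition~1 is what makes the reporting step at a pre-match state sound and complete. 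Combining: the algorithm outputs an occurrence at $\cPos$ iff $w$ occurs at $\cPos$, so $\matchine$ is valid. The main obstacle I expect is the careful statement and induction of the loop invariant in the $(\Leftarrow)$ direction — precisely formalizing ``every earlier potential occurrence has been ruled out'' and showing Condition~2's minimum is exactly the largest safe shift — and, on the $(\Rightarrow)$ side, making rigorous the use of Remark~\ref{remStand} to extend a path-realizing text by a copy of $w$ in the right place without disturbing the already-fixed prefix; both are conceptually clear but require attention to the window/shift arithmetic.
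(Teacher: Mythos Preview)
Your forward direction ($\Rightarrow$) matches the paper: both argue by contrapositive, assuming each condition fails in turn, realizing a DFA-path as an actual run via Remark~\ref{remStand}, and exhibiting a text on which the machine misbehaves. Your reverse direction ($\Leftarrow$) is genuinely different. You propose a direct loop-invariant proof (``every position in $[0,\cPos-1]$ has been correctly classified, and the window agrees with $\mem{q}$''), whereas the paper again works by contrapositive: it assumes $\matchine$ is invalid, fixes a text $t$ and a mis-handled position $m$, and case-splits --- if the current position never passes $m$, pigeonhole on $\size{\stateset}+1$ consecutive zero-shift iterations yields a simple zero-shift cycle (condition~3 fails); if it does pass $m$, one inspects the last iteration $k$ with $\curPos{\matchine}{t}{k}\le m$ and reads off a violation of condition~1 (false positive) or condition~2 (missed occurrence). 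The paper's route sidesteps the invariant entirely, which is exactly where you anticipate the work; your invariant, once stated precisely, is the more reusable and explanatory object, but the contrapositive is shorter.

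Two local fixes in your forward direction. First, your argument for condition~2 has its variables tangled: if the shift exceeds the minimum, call that minimum $k^\star$; then $k^\star$ is \emph{compatible} with $\mem{q}\cup\{(\next(q),x)\}$, so you can place $w$ at $\cPos+k^\star$ on unaccessed positions (non-redundancy is used here) and the over-shift skips it --- you do not want to argue that the shift value itself is incompatible. Second, your gloss on the $k\ge 1$ versus $k\ge 0$ split is inverted: condition~2 is an \emph{upper} bound on the shift, so for a non-pre-match state the minimum may be $0$ (forcing a zero shift when the current position is still a candidate occurrence), while for a pre-match state the current position has just been handled and the bound starts at $1$.
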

\begin{proof}
We recall our implicit assumption that all the states of $\stateset$ are reachable. 
Let us assume that the property 1 of the theorem is not granted. Either there exists a state $q\in\prematchset$ and a position $j\in\setMinus{\{0,\ldots,\size{w}-1\}}{\{\next(q)\}}$ such that $(j,w_{j})\not\in\mem{q}$ or there exists a state $q\not\in\prematchset$ with  $(j,w_{j})\in\mem{q}$ for all $j\in\setMinus{\{0,\ldots,\size{w}-1\}}{\{\next(q)\}}$. From the implicit assumption, there exists a text $t$ and an iteration $i$ such that  $\curState{\matchine}{t}{i} = q$. Since $\matchine$ is non-redundant, the position $\curPos{\matchine}{t}{i}+\next(q)$ was not accessed before the iteration $i$ and we can assume that $t_{\curPos{\matchine}{t}{i}+\next(q)} = w_{\next(q)}$.
If $q\in\prematchset$ then the generic algorithm reports an occurrence of $w$ at $\curPos{\matchine}{t}{i}$. Furthermore, since $\matchine$ is standard, if there exists $j\in\setMinus{\{0,\ldots,\size{w}-1\}}{\{\next(q)\}}$ such that $(j,w_{j})\not\in\mem{q}$, then either the position $\curPos{\matchine}{t}{i}+j$ was accessed with $t_{\curPos{\matchine}{t}{i}+j}\neq w_{j}$ or it was not accessed and we can choose $t_{\curPos{\matchine}{t}{i}+j}\neq w_{j}$. In both cases, $w$ does not occur at $\curPos{\matchine}{t}{i}$ thus $\matchine$ is not valid. Let now assume that $q\not\in\prematchset$ and $(j,w_{j})\in\mem{q}$ for all $j\in\setMinus{\{0,\ldots,\size{w}-1\}}{\{\next(q)\}}$. This implies that $w$ does occur at the position $\curPos{\matchine}{t}{i}$ which is not reported at the iteration $i$. Since, from the definition of $w$-matching machines, the states $q'$ of $\prematchset$ are such that $\next(q')<\size{w}$ and $\matchine$ is non-redundant, the states parsed at iterations $j>i$ and such that $\curPos{\matchine}{t}{j}=\curPos{\matchine}{t}{i}$ are not in $\prematchset$. It follows that the position $\curPos{\matchine}{t}{i}$ is not reported at any further iteration. Again, $\matchine$ is not valid.

If the property 2 is not granted, it is straightforward to build a text $t$ for which an occurrence position of $w$ is not reported.

From the second item of Remark \ref{remValid}, if the property 3 is not granted then $\matchine$ is not valid.

Reciprocally, if $\matchine$ is not valid, there exist a text $t$ and a position $m$ for whose one of the following assertions holds:
\begin{enumerate}
\item the pattern $w$ occurs at the position $m$ of $t$ and $m$ is not reported by the generic algorithm on the input $(\matchine, t)$,
\item the generic algorithm reports the position $m$ on the input $(\matchine, t)$ but the pattern $w$ does not occurs at $m$.
\end{enumerate}
Let us first assume that the generic algorithm is such that $\curPos{\matchine}{t}{i}<m$ for all iterations $i$. Considering an iteration $i>(m+1)(\size{\stateset}+1)$, there exists an iteration $k\leq i$ such that for all $0\leq \ell\leq \size{\stateset}+1$, we have $\curShift{\matchine}{t}{k+\ell} = 0$, i.e. there exists a path $(q_{0},\ldots,q_{\ell})$ negating the property 3.

Let us now assume that there exists an iteration $i$ with $\curPos{\matchine}{t}{i}>m$ and let $k$ be the greatest index such that $\curPos{\matchine}{t}{k}\leq m$.
If $w$ occurs at the position $m$ which is not reported then the fact that  $\curPos{\matchine}{t}{k}\leq m$ and $\curPos{\matchine}{t}{k+1}>m$ contradicts the property 2.
Let us assume that $w$ does not occur at $m$ which is reported during the execution. We have necessarily $\curPos{\matchine}{t}{k}=m$, $\curState{\matchine}{t}{k}\in\prematchset$ and $\next(\curState{\matchine}{t}{k})<\size{w}$. If $w$ does not occur at $m$, then there exists a position $j\in\setMinus{\{0,\ldots,\size{w}-1\}}{\{\next(q)\}}$ such that 
$t_{\curPos{\matchine}{t}{k}+j}\neq w_{j}$ and, from Remark \ref{remPos}, we get $(j,w_{j})\not\in\mem{\curState{\matchine}{t}{k}}$ thus a contradiction with the property 1.
\end{proof}

Let $\matchine = (\stateset, \init, \prematchset, \next, \trans, \shift)$ be a matching machine and $\sta$ and $\stb$ be two states of $\stateset$. The \defi{redirected matching machine} $\redir{\matchine}{\sta}{\stb}$ is constructed from $\matchine$ by redirecting all the transitions that end with $\stb$, to $\sta$. Namely, the matching machine $\redir{\matchine}{\sta}{\stb}$ is obtained by removing the unreachable states of $\matchine' = (\stateset', \init', \prematchset', \next', \trans', \shift')$, defined for all $q\in\stateset_{\setminus \{\stb\}}$ and all symbols $x$, as:
\begin{itemize}
\item $\stateset' = \stateset_{\setminus \{\stb\}}$,
\item $\init' = \left\{\begin{array}{ll} \sta & \mbox{if } \stb=\init,\\ \init &\mbox{otherwise,}\end{array}\right.$
\item $\prematchset' =  \left\{\begin{array}{ll}
\prematchset_{\setminus \{\stb\}}\cup\{\sta\} & \mbox{if }  \stb\in\prematchset,\\
\prematchset&\mbox{otherwise,}
\end{array}\right.$
\item $\next'(q) = \next(q)$,
\item $\shift' = \shift(q, x)$,
\item $\trans'(q, x) = \left\{\begin{array}{ll}
\sta & \mbox{if }  \trans(q,x) = \stb,\\
\trans(q,x)&\mbox{otherwise.}
\end{array}\right.$
\end{itemize}

\begin{lemma}\label{lemmaRedir}
Let $\matchine$ be a standard $w$-matching machine and $\sta$ and $\stb$ be two states of $\stateset$ such that $\mem{\sta} = \mem{\stb}$. 
The redirected machines $\redir{\matchine}{\sta}{\stb}$ and $\redir{\matchine}{\stb}{\sta}$ are both standard. 
Moreover, if $\matchine$ is valid then both $\redir{\matchine}{\sta}{\stb}$  and $\redir{\matchine}{\stb}{\sta}$ are valid.
\end{lemma}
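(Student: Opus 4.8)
Since $\redir{\matchine}{\stb}{\sta}$ is $\redir{\matchine}{\sta}{\stb}$ with the names $\sta$ and $\stb$ interchanged (and the hypothesis $\mem{\sta}=\mem{\stb}$ is symmetric), I would argue everything for $\matchine':=\redir{\matchine}{\sta}{\stb}$ only, writing $\prematchset'$, $\init'$, $\next'$, $\trans'$, $\shift'$ for its data and $H:=\mem{\sta}=\mem{\stb}$. For the standardness part, the plan is to show that the assignment $q\mapsto\mem{q}$, which is well defined on the states of $\matchine'$ precisely because $\mem{\sta}=\mem{\stb}$, is the memory function of $\matchine'$: that is, in $\expan{\matchine'}$ every state reached during a run of the generic algorithm is of the form $(q,\mem{q})$. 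I would prove this by induction on the iteration. The base case holds since $\mem{\init'}=\emptyset$ (when $\init=\stb$ this uses $H=\mem{\stb}=\emptyset$). For the inductive step, from a state $(q,\mem{q})$ reading a symbol $x$ at the checked position: by Remark~\ref{remPos} the pairs of $\mem{q}$ record exactly the positions already accessed ahead of the current one together with the symbols found there, so $\expan{\matchine'}$ does not move to $\sink$ and its new memory is $\kshifted{\shift(q,x)}{\mem{q}\cup\{(\next(q),x)\}}$ or $\kshifted{\shift(q,x)}{\mem{q}}$; since $\matchine$ is standard this equals $\mem{\trans(q,x)}$, and since $\trans'(q,x)$ coincides with $\trans(q,x)$ except that it equals $\sta$ when $\trans(q,x)=\stb$ — a case in which $\mem{\sta}=\mem{\stb}$ — the new state is indeed $(\trans'(q,x),\mem{\trans'(q,x)})$. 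By Proposition~\ref{propFull} the reachable states of $\expan{\matchine'}$ therefore biject with those of $\matchine'$, so $\matchine'$ is standard and each surviving state keeps its memory ($\mem[\matchine']{\sta}=H$ in particular); the same reasoning handles $\redir{\matchine}{\stb}{\sta}$.

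For validity, assume $\matchine$ valid; I would conclude by applying Theorem~\ref{theoStdValid} to $\matchine'$. That theorem applies because $\matchine'$ is standard (above) and non-redundant: a standard machine is redundant exactly when some reachable state $q$ re-reads a recorded position, i.e.\ $\next(q)\in\frst(\mem{q})$ (Remark~\ref{remPos}), which never happens in $\matchine'$ since it never does in $\matchine$ and the pair $(\next(q),\mem{q})$ attached to each surviving state is unchanged. It then remains to transfer conditions~1--3 of Theorem~\ref{theoStdValid} from $\matchine$ to $\matchine'$. Condition~3 is automatic for a non-redundant machine (a zero-shift transition strictly enlarges the memory, so no zero-shift cycle can exist). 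Conditions~1 and~2 refer only to $\mem{\cdot}$, $\next$, $\shift$ and to membership in the pre-match set, and the first three are unchanged, so the point is that a surviving state lies in $\prematchset'$ iff it lay in $\prematchset$. This is immediate for states other than $\sta$, and for $\sta$ the single case that needs work is $\stb\in\prematchset$, $\sta\notin\prematchset$; I would rule it out: if $\stb\in\prematchset$ then condition~1 of Theorem~\ref{theoStdValid} for $\matchine$ gives $(j,w_j)\in\mem{\stb}=\mem{\sta}$ for every $j\in\setMinus{\{0,\ldots,\size{w}-1\}}{\{\next(\stb)\}}$, so by non-redundancy ($\next(\sta)\notin\frst(\mem{\sta})$) one must have $\next(\sta)=\next(\stb)$, and then the same condition applied to $\sta$ also puts $\sta$ in $\prematchset$. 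Hence $\prematchset'=\setMinus{\prematchset}{\{\stb\}}$, conditions~1 and~2 carry over verbatim, and $\matchine'$ is valid; interchanging $\sta$ and $\stb$ settles $\redir{\matchine}{\stb}{\sta}$.

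The hardest part will be exactly this last pre-match bookkeeping. An uncontrolled redirection could turn $\sta$ into a reporting state that reports non-occurrences, and it is precisely non-redundancy that forces $\sta$ and $\stb$ to agree on their pre-match status (and on their next-position-to-check when they are pre-match), which is what keeps the redirection from losing or inventing occurrence reports; the remaining steps are routine propagation along the two equivalent full memory expansions.
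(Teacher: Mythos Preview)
Your standardness argument is correct and spells out what the paper dismisses as ``straightforward''. For validity, however, you take a genuinely different route. The paper argues by contraposition on the operational side: assuming $\redir{\matchine}{\sta}{\stb}$ fails on some text $t$ at some position $m$, it takes the last iteration $j$ at which the redirected run is in state $\sta$ with current position $\leq m$, splices a prefix of a text that reaches $\sta$ in $\matchine$ onto $t_{[\curPos{\redir{\matchine}{\sta}{\stb}}{t}{j},\,|t|-1]}$, and observes (via Remark~\ref{remPos} and standardness) that the run of $\matchine$ on the spliced text reproduces the faulty behaviour. This uses only that $\matchine$ is standard. You instead transfer the three conditions of Theorem~\ref{theoStdValid}.

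There is a genuine gap in your route. Theorem~\ref{theoStdValid} applies only to standard \emph{non-redundant} machines, and you obtain non-redundancy of $\matchine'$ from that of $\matchine$ --- but the lemma does not assume $\matchine$ non-redundant, only standard and valid. A valid standard machine may well re-read a position already recorded in its memory (wasteful but not incorrect), so the step ``which never happens in $\matchine'$ since it never does in $\matchine$'' is not justified by the hypotheses, and without it conditions~1--3 cannot even be invoked. The paper's text-splicing argument is insensitive to redundancy. Your proof does go through verbatim under the added hypothesis that $\matchine$ is non-redundant, and since the downstream uses of the validity half are on machines that are also compact (hence non-redundant by Remark~\ref{remComp}), that restricted statement would suffice for the paper's purposes; but it is not a proof of the lemma as stated.

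A smaller point in the pre-match bookkeeping you flag as hardest: from $(j,w_j)\in\mem{\sta}$ for all $j\in\{0,\ldots,|w|-1\}\setminus\{\next(\stb)\}$ together with $\next(\sta)\notin\frst(\mem{\sta})$ you conclude $\next(\sta)=\next(\stb)$, but this only follows when $\next(\sta)<|w|$. If $\next(\sta)\geq|w|$ the deduction fails, and then the redirected object would carry a pre-match state with next-position $\geq|w|$, which is not a $w$-matching machine at all --- arguably a wrinkle in the paper's definitions rather than in your argument specifically.
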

\begin{proof}
The fact that $\redir{\matchine}{\sta}{\stb}$ and $\redir{\matchine}{\stb}{\sta}$ are standard  comes straightforwardly from the fact that $\mem{\sta} = \mem{\stb}$. 

Let us assume that $\redir{\matchine}{\sta}{\stb}$ is not valid. There exist a text $t$ and a position $m$ for whose one of the following assertions holds:
\begin{enumerate}
\item the pattern $w$ occurs at the position $m$ of $t$ and $m$ is not reported by the generic algorithm on the input $(\redir{\matchine}{\sta}{\stb}, t)$,
\item the generic algorithm reports the position $m$ on the input $(\redir{\matchine}{\sta}{\stb}, t)$ but the pattern $w$ does not occurs at $m$.
\end{enumerate}

By construction, the smallest index $k$ such that $\curState{\redir{\matchine}{\sta}{\stb}}{t}{k}\neq \curState{\matchine}{t}{k}$  verifies $\curState{\redir{\matchine}{\sta}{\stb}}{t}{k}=\sta$, $\curState{\matchine}{t}{k}=\stb$ and $\curPos{\redir{\matchine}{\sta}{\stb}}{t}{i} = \curPos{\matchine}{t}{i}$ for all $i\leq k$.
If there is no iteration $j$ such that both $\curState{\redir{\matchine}{\sta}{\stb}}{t}{j} = \sta$ and $\curPos{\redir{\matchine}{\sta}{\stb}}{t}{j}\leq m$ then the executions of the standard algorithm coincide beyond the position $m$ on the inputs $(\matchine, t)$ and $(\redir{\matchine}{\sta}{\stb}, t)$. If $\redir{\matchine}{\sta}{\stb}$ is not valid then $\matchine$ is not valid.

Let us now assume that $q$ is reached before parsing the position $m$ on the input $(\redir{\matchine}{\sta}{\stb}, t)$ and let $j$ be the greatest index such that $\curState{\redir{\matchine}{\sta}{\stb}}{t}{j}=\sta$ and $\curPos{\redir{\matchine}{\sta}{\stb}}{t}{j}\leq m$.
Since the state $\sta$ is reachable with $\matchine$, there exists a text $u$ and an index $i$ such that $\sta$ is the current state of the $i^{\mbox{\tiny th}}$ iteration of the standard algorithm on the input $(\matchine, u)$. Let now define $v = u_{[0,\curPos{\matchine}{t}{i}-1]}t_{[\curPos{\redir{\matchine}{\sta}{\stb}}{t}{j}, \size{t}-1]}$. 
Since $\matchine$ is standard, the positions greater than $\curPos{\matchine}{u}{i}$ accessed by the generic algorithm on the input $(\matchine, u)$ at the $i^{\mbox{\tiny th}}$ iteration are  $\{(k+\curPos{\expan{\matchine}}{t}{i}) \condi k\in\frst(\mem{\sta})\}$ and the positions  greater than $\curPos{\redir{\matchine}{\sta}{\stb}}{t}{j}$ accessed by the generic algorithm at the $j^{\mbox{\tiny th}}$ on the input $(\redir{\matchine}{\sta}{\stb}, t)$ iteration are  $\{(k+\curPos{\redir{\matchine}{\sta}{\stb}}{t}{j} \condi k\in\frst(\mem{\sta})\}$ (Remark \ref{remPos}). When considered relatively to the current positions $\curPos{\expan{\matchine}}{t}{i}$ and $\curPos{\redir{\matchine}{\sta}{\stb}}{t}{j}$, the accessed positions greater than these current positions are the same. The positions accessed until the $i^{\mbox{\tiny th}}$ iteration on the inputs  $(\matchine, u)$ and $(\matchine, v)$, coincide. In particular, we have $\curState{\matchine}{v}{i}=\curState{\matchine}{u}{i}=\sta$. With the definitions of the text $v$, the execution of the generic algorithm from the $i^{\mbox{\tiny th}}$ on the input $(\matchine, v)$ does coincide with the execution of from the $j^{\mbox{\tiny th}}$ iteration on the input $(\redir{\matchine}{\sta}{\stb}, t)$. Again, if $\redir{\matchine}{\sta}{\stb}$ is not valid then $\matchine$ is not valid.
\end{proof}

\begin{lemma}\label{lemmaStandard}
Let $\matchine$ be a $w$-matching machine which is both valid and standard. For all states $q$ and all symbols $x$ and $y$, if $\trans(q, x) = \trans(q, y) \neq \sink$  then $\shift(q, x) = \shift(q, y)$.
\end{lemma}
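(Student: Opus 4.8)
The plan is to argue by contradiction inside the full memory expansion $\expan{\matchine}$, exploiting that $\matchine$ is standard. We may assume $x\neq y$, since otherwise the claim is trivial. Write $q':=\trans(q,x)=\trans(q,y)\neq\sink$, $s_x:=\shift(q,x)$ and $s_y:=\shift(q,y)$, and suppose for contradiction that $s_x\neq s_y$; by symmetry we may assume $s_x<s_y$. Because $\matchine$ is standard, $q$ and $q'$ each occur in a unique state of $\expan{\matchine}$, namely $(q,\mem{q})$ and $(q',\mem{q'})$, and $(q,\mem{q})$ is reachable.

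First I would check that $\next(q)\notin\frst(\mem{q})$: if $\next(q)\in\frst(\mem{q})$, then since $\trans(q,x)\neq\sink$ the definition of $\trans'$ in the full memory expansion forces $(\next(q),x)\in\mem{q}$, and likewise $(\next(q),y)\in\mem{q}$; as $\mem{q}$ is a partial function this gives $x=y$, a contradiction. Hence the first case of the definition of $\trans'$ applies for both symbols, and since the $x$- and $y$-transitions out of $(q,\mem{q})$ both reach the unique state of $\expan{\matchine}$ with first coordinate $q'$, we obtain
\[
\mem{q'}\;=\;\kshifted{s_x}{\mem{q}\cup\{(\next(q),x)\}}\;=\;\kshifted{s_y}{\mem{q}\cup\{(\next(q),y)\}}.
\]
Put $A:=\frst(\mem{q})\cup\{\next(q)\}$, the common set of position entries on both sides before shifting. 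Taking first entries in the displayed equality gives $\{a-s_x\condi a\in A,\ a\geq s_x\}=\{a-s_y\condi a\in A,\ a\geq s_y\}$. If $\max A\geq s_x$, the left-hand set is nonempty with maximum $\max A-s_x$, so equality forces $\max A\geq s_y$ and then $\max A-s_x=\max A-s_y$, i.e. $s_x=s_y$ — a contradiction. Therefore $\max A<s_x$: every position entry of $\mem{q}$, and $\next(q)$ itself, is strictly below $s_x$, and $\mem{q'}=\emptyset$.

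It remains to contradict validity. Using reachability, pick a text $t$ and an iteration $i$ with $\curState{\matchine}{t}{i}=q$, and let $\cPos:=\curPos{\matchine}{t}{i}$. By Remark \ref{remPos} the positions of $t$ above $\cPos$ accessed before iteration $i$ are $\{\cPos+j\condi j\in\frst(\mem{q})\}$, all of which are $<\cPos+s_x$; and (since $\next(q)\notin\frst(\mem{q})$) the position $\cPos+\next(q)$ is accessed for the first time at iteration $i$. I would then modify $t$ so that $t_{\cPos+\next(q)}=y$ and so that $t$ spells $w$ on the block $[\cPos+s_x,\,\cPos+s_x+\size{w}-1]$, extending $t$ so that it is long enough; because every position accessed up to and including iteration $i$ lies below $\cPos+s_x$, these changes leave iterations $0,\dots,i$ untouched. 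On the modified text, iteration $i$ reads $y$ at $\cPos+\next(q)$, moves to state $q'$, and sets the current position to $\cPos+s_y>\cPos+s_x$. Since the current position is nondecreasing and jumps from $\cPos$ directly to $\cPos+s_y$, no iteration ever has current position $\cPos+s_x$, so the occurrence of $w$ at position $\cPos+s_x$ is never reported, contradicting the validity of $\matchine$. Hence $s_x=s_y$.

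The heart of the proof is this last step: the degenerate configuration $\mem{q'}=\emptyset$ is \emph{not} excluded by standardness alone — a standard machine may well perform needlessly small shifts — so the validity hypothesis is genuinely indispensable there, through the explicit bad-text construction. The only loose end is the edge case $\next(q)=0$, where the assertion that $\cPos+\next(q)$ is fresh at iteration $i$ needs a small separate argument (a careful choice of the realizing text); the remaining manipulations, involving only the $k$-shift operator and Remark \ref{remPos}, are routine.
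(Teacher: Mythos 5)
Your argument is correct and is essentially the paper's own proof: standardness gives $\mem{\trans(q,x)}=\mem{\trans(q,y)}$, comparing the largest position entries under the two different shifts forces this common memory to be empty and both shifts to exceed $d=\max(\frst(\mem{q})\cup\{\next(q)\})$, and a text reaching $q$ that reads $y$ at offset $\next(q)$ with an occurrence of $w$ planted just beyond the accessed window is then jumped over, contradicting validity (your placement of the occurrence at $\cPos+\shift(q,x)$ even avoids the overlap with the already-read position at offset $d$ that the paper's placement at $\curPos{\matchine}{t}{i}+d$ glosses over). The one shaky step is your elimination of the case $\next(q)\in\frst(\mem{q})$, where you deduce $(\next(q),x)\in\mem{q}$ from $\trans(q,x)\neq\sink$ although the definition only constrains the expansion transition $\trans'$, not $\trans$; this is a conflation, but the paper's proof silently relies on the very same fact when it chooses $t_{\curPos{\matchine}{t}{i}+\next(q)}=y$, so it does not separate your route from the paper's.
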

\begin{proof}
Let us first remark that, since $\matchine$ is standard, the fact that $\trans(q, x) = \trans(q, y) \neq \sink$ implies that $\mem{\trans(q, x)} = \mem{\trans(q, y)}$. It follows that both $\shift(q, x)$ and $\shift(q, y)$ are strictly greater than $\next(q)$.

Let $d$ be the greatest position entry of the elements of $\mem{q}\cup\{(\next(q),x)\}$, i.e. $d = \max(\frst(\mem{q})\cup\{\next(q)\})$. By construction if $\mem{\trans(q, x)}$ (resp. $\mem{\trans(q, y)}$) is not empty then the greatest position entry of its elements is $d-\shift(q, x)$ (resp. $d-\shift(q, y))$. It follows that $\mem{\trans(q, x)} = \mem{\trans(q, y)}$ and $\shift(q, x)\neq\shift(q, y)$ is only possible if $\mem{\trans(q, x)} = \mem{\trans(q, y)}=\emptyset$, which implies that both $\shift(q, x)$ and $\shift(q, y)$ are strictly greater than $d$. 

Let us assume that $\shift(q, x)<\shift(q, y)$. We then have that $\shift(q, y)>d+1$. Let $t$ be a text such that there is a position $i$ with $\curState{\matchine}{t}{i} = q$, $t_{\curPos{\matchine}{t}{i}+\next(q)} = y$ and $w$ occurs at the position $\curPos{\matchine}{t}{i}+d$. Such a text $t$ exists since the state $q$ is reachable (with our implicit assumption) and the only positions of $t$ that we set, are not accessed until iteration $i$. Since $\shift(q, y)>d+1$, the occurrence of $w$ at the position $\curPos{\matchine}{t}{i}+d$ cannot be reported, which contradicts the assumption that $\matchine$ is valid.
\end{proof}

\subsection{Compact matching machines}
A $w$-matching machine $\matchine$ is \defi{compact} if it does not contain a state $\sta$ such that one of the following assertions holds:
\begin{enumerate}
	\item there exists a symbol $x$ with $\trans(\sta,x)\neq\sink$ and $\trans(\sta,y)=\sink$ for all symbols $y\neq x$;
	\item for all symbols $x$ and $y$, we have both $\trans(\sta,x) = \trans(\sta,y)$ and $\shift(\sta,x) = \shift(\sta,y)$.
\end{enumerate}

Let $\matchine = (\stateset, \init, \prematchset, \next, \trans, \shift)$ be a non-compact $w$-matching machine, $\sta$ be a state verifying one of the two assertions making $\matchine$ non-compact. 
If $\sta$ verifies the assertion 1 and  $x$ is the only symbol such that $\trans(\sta,x)\neq\sink$, we set $\trans(\sta,.)=\trans(\sta,x)$ and $\shift(\sta,.) = \shift(\sta,x)$. If  $\sta$ verifies the assertion 2, we set $\trans(\sta,.)=\trans(\sta,x)$ and $\shift(\sta,.) = \shift(\sta,x)$, by picking any symbol $x$.
The $w$-matching machine $\remo{\matchine}{\sta} = (\remo{\stateset}{\sta}, \remo{\init}{\sta}, \remo{\prematchset}{\sta}, \remo{\next}{\sta}, \remo{\trans}{\sta}, \remo{\shift}{\sta})$ is defined, for all states $q\in\stateset_{\setminus \{\sta\}}$ and all symbols $x$, as
\begin{itemize}
	\item $\remo{\stateset}{\sta} = \stateset_{\setminus \{\sta\}},$
	\item $\remo{\init}{\sta} = \left\{\begin{array}{ll} \init & \mbox{if }\sta\neq \init,\\
	\trans(\sta,.) & \mbox{otherwise,}\end{array}\right.$
	\item $\remo{\prematchset}{\sta} = \left\{\begin{array}{ll} \prematchset & \mbox{if }\sta\not\in\prematchset,\\ 
	\prematchset_{\setminus\{\sta\}} \cup \{q \condi\exists x\in\alp\mbox{ with } \trans(q,x) = \sta\} & \mbox{otherwise,}\end{array}\right.$
	\item $\remo{\next}{\sta}(q) = \next(q)$,\\
	\item $\remo{\trans}{\sta}(q,x) = \left\{\begin{array}{ll}
	 \trans(q,x) & \mbox{if }\trans(q,x)\neq \sta,\\
	 \trans(\sta,.) & \mbox{otherwise,}\end{array}\right.$
	\item $\remo{\shift}{\sta}(q,x) = \left\{\begin{array}{ll}
	 \shift(q,x) & \mbox{if }\trans(q,x)\neq \sta,\\
	 \shift(q,x) + \shift(\sta,.) & \mbox{otherwise.}\end{array}\right.${}
\end{itemize}
If all the states of $\matchine$ are reachable, then so are all the states of $\remo{\matchine}{\sta}$.

The following lemma ensures that any standard machine can be made compact and that this operation cannot deteriorate its efficiency.

\begin{lemma}\label{lemmaComp}
Let $\matchine$ be a $w$-matching machine which is made non-compact by a state $\sta$.{}
\begin{enumerate}
	\item If $\matchine$ is standard then $\remo{\matchine}{\sta}$ is standard.
	\item If $\matchine$ is valid then $\remo{\matchine}{\sta}$ is valid.
	\item $\remo{\matchine}{\sta}$ is faster than $\matchine$.
\end{enumerate}
\end{lemma}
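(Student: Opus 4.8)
The plan is to treat the three parts largely independently, exploiting the fact that the construction of $\remo{\matchine}{\sta}$ changes the behavior of the generic algorithm only by ``splicing out'' every visit to the state $\sta$: whenever an execution on $\matchine$ would enter $\sta$, read the symbol at the next-position-to-check, and leave $\sta$, the execution on $\remo{\matchine}{\sta}$ instead performs this in a single iteration, with the two shifts composed. So the first step is to make this coupling precise: for every text $t$, I would exhibit an order-preserving injection from the iterations of the generic algorithm on $(\remo{\matchine}{\sta},t)$ into those on $(\matchine,t)$, such that corresponding iterations have the same current position and (for states other than $\sta$) the same current state, the only difference being that iterations landing in $\sta$ on $\matchine$ have no image. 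The key observation enabling this is that, because $\sta$ satisfies one of the two non-compactness assertions, the transition and shift taken out of $\sta$ do not depend on which symbol is actually read there (in case 2 this is immediate; in case 1 it holds because a valid machine never reaches $\sink$, by Remark \ref{remValid}, so from $\sta$ only the symbol $x$ with $\trans(\sta,x)\neq\sink$ can occur) --- hence collapsing $\sta$ loses no information about the text access performed at $\sta$ except the access itself.

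For part 1 (standardness is preserved), I would argue that $\mem{\cdot}$ is unaffected by the collapse for the surviving states. Concretely, if $q\in\stateset_{\setminus\{\sta\}}$ and $\trans(q,x)=\sta$, then in $\remo{\matchine}{\sta}$ the transition $q\xrightarrow{x}\trans(\sta,.)$ carries shift $\shift(q,x)+\shift(\sta,.)$, which is exactly the composition of the two steps in $\matchine$; by Remark \ref{remPos} the memory $\mem{\cdot}$ records precisely the positions accessed above the current one, and composing the two $k$-shifted operations $\kshifted{\shift(\sta,.)}{\kshifted{\shift(q,x)}{\cdot}} = \kshifted{\shift(q,x)+\shift(\sta,.)}{\cdot}$ shows the resulting memory is the same one $\remo{\matchine}{\sta}$ would attach. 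Since every surviving state keeps a unique memory, $\remo{\matchine}{\sta}$ is standard. (One should also note that in case 2, if $\sta$ is the initial state, $\remo{\init}{\sta}=\trans(\sta,.)$ already carries the right memory because $\mem{\init}=\emptyset$ and the shift out of $\sta$ has been absorbed; this needs a brief check but no real work.)

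For part 2 (validity is preserved), I would use the coupling above together with the characterization already available. Given a text $t$, the positions accessed by the generic algorithm on $(\remo{\matchine}{\sta},t)$ form a subset of those accessed on $(\matchine,t)$ --- precisely, they omit exactly the positions accessed while in state $\sta$. But reported occurrences depend only on pre-match states and the symbols read there, and no iteration in state $\sta$ on $\matchine$ can itself be a reporting iteration unless $\sta\in\prematchset$; the definition of $\remo{\prematchset}{\sta}$ pushes the pre-match status of $\sta$ back onto every predecessor $q$ with $\trans(q,x)=\sta$, which is exactly where it is needed so that the occurrence is still detected. So one checks: (i) every occurrence reported on $\matchine$ is reported on $\remo{\matchine}{\sta}$ and conversely, by tracing through the finitely many cases of whether the ``missing'' iteration was a reporting one; and (ii) the machine still terminates, i.e. no zero-shift cycle is created --- but composing shifts can only increase them, and a cycle in $\remo{\matchine}{\sta}$ with all shifts zero lifts to one in $\matchine$ with all shifts zero, contradicting validity of $\matchine$ via Remark \ref{remValid}(2). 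Part 3 is then immediate from the coupling: each iteration on $(\remo{\matchine}{\sta},t)$ corresponds to at least one iteration on $(\matchine,t)$, so the iteration count never increases, and it strictly decreases on any text that ever visits $\sta$ (such a text exists since all states are reachable), giving ``faster'' in the sense defined.

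The main obstacle I anticipate is part 2 under assertion 1: there I must rule out that collapsing $\sta$ causes an occurrence to be wrongly reported because some position that $\matchine$ would have accessed in $\sta$ (and found mismatching, thereby routing to $\sink$) is now never accessed. The resolution is that a \emph{valid} machine never routes to $\sink$ on any text (Remark \ref{remValid}(2)), so on every text the symbol read in $\sta$ is forced to be the unique $x$ with $\trans(\sta,x)\neq\sink$; the access in $\sta$ is therefore ``predictable'' and carries no discriminating information --- exactly the content needed to push its memory contribution into the composed shift without changing which occurrences are valid. Making this last point airtight, i.e. that $\mem{\sta}$'s contribution is already implied by $\mem{q}$ for predecessors $q$, is the one place where the argument needs care rather than bookkeeping.
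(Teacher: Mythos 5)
Your plan is essentially the paper's proof: the paper also couples the two executions by observing that the state sequence of $\remo{\matchine}{\sta}$ on any text is obtained from that of $\matchine$ by deleting the visits to $\sta$ (your ``splicing out'' injection), deduces validity and the speed comparison from that coupling, and obtains preservation of standardness by noting that redirecting the transitions entering $\sta$ while adding $\shift(\sta,.)$ to their shifts leaves every memory $\mem{q}$ unchanged. The one point you explicitly defer (``that $\mem{\sta}$'s contribution is already implied by $\mem{q}$'') is exactly the observation the paper makes at the outset, and it is what closes your shift-composition step for part 1 \emph{without} invoking validity: if $\matchine$ is standard and $\sta$ is non-compact through assertion 1, then in the full-memory expansion transitions out of $(\sta,\mem{\sta})$ go to $\sink$ precisely on the symbols contradicting a pair already recorded, so $\trans(\sta,y)=\sink$ for all $y\neq x$ forces $(\next(\sta),x)\in\mem{\sta}$; consequently the access made at $\sta$ adds no new pair, the memory update at $\sta$ is a pure $\shift(\sta,x)$-shift, and your identity $\kshifted{\shift(\sta,.)}{\kshifted{\shift(q,y)}{H}}=\kshifted{\shift(q,y)+\shift(\sta,.)}{H}$ then really does produce the memory that $\remo{\matchine}{\sta}$ attaches --- and it is also the precise sense in which the skipped access ``carries no discriminating information'' for part 2. (The paper in fact treats only assertion 1 in part 1, claiming assertion 2 cannot make a standard machine non-compact; if you keep your two-case treatment, the assertion-2 case needs the parallel remark that standardness forces the newly read pair to be dropped by the shift, i.e. $\next(\sta)<\shift(\sta,.)$ whenever $\next(\sta)\notin\frst(\mem{\sta})$.) Apart from supplying that one fact, your write-up contains everything in the paper's much terser argument, plus details (the zero-shift-cycle check for termination and the discussion of where reports are emitted when $\sta\in\prematchset$) that the paper leaves implicit.
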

\begin{proof}
We start by noting that if $\matchine$ is both standard and non-compact then there exist a state $\sta$ and a symbol $x$ such that $\trans(\sta,x)\neq\sink$ and $\trans(\sta,y)=\sink$ for all symbols $y\neq x$ (the other property leading to the non-compactness is excluded if $\matchine$ is standard). It follows that we have $(\next(\sta), x)\in\mem{\sta}$.  Redirecting all the transitions that end with $\sta$, to $\trans(\sta,x)$ and incrementing the shifts accordingly does not change the set $\mem{\trans(\sta,x)}$, nor any set $\mem{q}$. The matching machine $\remo{\matchine}{\sta}$ is still standard.

Let now assume that $\matchine$ is valid. In particular, the transitions to the sink state are never encountered (Remark \ref{remValid}). By construction, the sequence of states parsed during an execution of the generic algorithm with $\remo{\matchine}{\sta}$, can be obtained by withdrawing all the positions in which $\sta$ occurs from the sequence observed with $\matchine$. The machine $\remo{\matchine}{\sta}$ is thus valid and faster than the initial one.
\end{proof}

\begin{remark}\label{remComp}
	If a $w$-matching machine  $\matchine$  is both standard and compact then it is not redundant. 
\end{remark}

\begin{proposition}\label{propComp}
If $\matchine$ is a valid $w$-matching machine then there exists a standard, compact and valid $w$-matching machine $\matchine'$ which is faster or equivalent to $\matchine$.
\end{proposition}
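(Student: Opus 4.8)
The plan is to first make the machine standard by passing to its full memory expansion, and then to make it compact by repeatedly applying the operation $\matchine \mapsto \remo{\matchine}{\sta}$ of Lemma \ref{lemmaComp}.

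First I would replace $\matchine$ by $\expan{\matchine}$. By Proposition \ref{propFull} the machine $\expan{\matchine}$ is equivalent to $\matchine$, hence valid, and performs exactly the same text accesses on every text, so it has the same number of iterations as $\matchine$ on every text. I would then argue that $\expan{\matchine}$ is standard. Using the induction in the proof of Proposition \ref{propFull}, any reachable state $((q,H),H')$ of $\expan{\expan{\matchine}}$ occurs only at iterations where $\expan{\matchine}$ is in state $(q,H)$; since the executions of $\expan{\matchine}$ and $\expan{\expan{\matchine}}$ coincide, Remark \ref{remPos} applied to both machines says that $\frst(H)$ and $\frst(H')$, together with the associated symbols, both record the positions accessed so far that lie beyond the current one, whence $H'=H$. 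Thus each state of $\expan{\matchine}$ appears in a unique state of $\expan{\expan{\matchine}}$, which is precisely the definition of being standard.

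Next I would iterate the compactification. If $\expan{\matchine}$ is already compact we take $\matchine' = \expan{\matchine}$, which is equivalent to $\matchine$, and we are done. Otherwise pick a state $\sta$ witnessing non-compactness; by Lemma \ref{lemmaComp} the machine $\remo{\expan{\matchine}}{\sta}$ is standard, valid, faster than $\expan{\matchine}$, and still has all its states reachable. Repeating with $\remo{\expan{\matchine}}{\sta}$ in place of $\expan{\matchine}$, and so on, each step deletes one state, so after finitely many steps we reach a machine $\matchine'$ that admits no witness of non-compactness, i.e. is compact; by construction it is also standard and valid. Finally, chaining the instances of ``faster than'' (which is transitive) along the sequence, and using that $\expan{\matchine}$ has the same iteration count as $\matchine$, we get that $\matchine'$ is faster than $\matchine$ --- or equivalent to it in the degenerate case where no compactification step was needed.

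The proposition is essentially an assembly of the previously established facts, so I do not expect a genuine obstacle; the only points that need a little care are the standardness of $\expan{\matchine}$ (handled above via Remark \ref{remPos}) and checking that the invariants ``standard'', ``valid'' and ``all states reachable'' survive every iteration, which is exactly what Lemma \ref{lemmaComp} and the paragraph preceding it provide, together with the fact that $\size{\stateset}$ strictly decreases at each step so that the loop terminates.
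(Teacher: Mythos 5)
Your proof is correct and follows essentially the same route as the paper: take the full memory expansion $\expan{\matchine}$ (standard, valid, equivalent by Proposition \ref{propFull}) and then repeatedly apply Lemma \ref{lemmaComp} until compactness is reached. The only difference is that you spell out the standardness of $\expan{\matchine}$ (via Remark \ref{remPos}) and the termination of the compactification loop, which the paper treats as immediate ``by construction''.
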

\begin{proof}
By construction and from Proposition \ref{propFull}, the full memory expansion of $\matchine$ is both standard, valid and equivalent to $\matchine$. Next, applying Lemma \ref{lemmaComp} as long as there exist a state $q$ and a symbol $x$ such that $\trans(q,x)\neq\sink$ and $\trans(q,y)=\sink$ for all symbols $y\neq x$, leads to a compact, standard and valid $w$-matching machine faster  or equivalent to $\matchine$.
\end{proof}

\section{Random text models and asymptotic speed}\label{s:models}

\subsection{Text models}
A text model on an alphabet $\alp$ defines a probability distribution on $\alp^{n}$ for all lengths $n$.
Two text models are said \defi{equivalent} if they define the same probability distributions on $\alp^{n}$ for all lengths $n$.

We present three embedded classes of random text models, namely independent identically distributed, a.k.a. Bernoulli, Markov and Hidden Markov models.

An \defi{independent identically distributed (iid) model} is fully specified by a probability distribution $\piid$ on the symbols  of the alphabet. It will be simply referred to as ``$\piid$''. Under the model $\piid$, the probability of a text $t$ is 
\begin{dmath*}
	\proba{\piid}(t) = \prod_{i=0}^{\size{t}-1} \piid(t_{i}).
\end{dmath*}

A \defi{Markov model} $M$ of order $n$ is a $2$-uple $(\Minit, \Mtrans)$, where $\Minit$ is a probability distribution on the words of length $n$ of the alphabet (the initial distribution) and $\Mtrans$ associates a pair made of a word $u$ of length $n$ and a symbol $x$ with the probability for $u$ to be followed by $x$ (the transition probability). Under a Markov model $M = (\Minit, \Mtrans)$ of order $n$, the probability of a text $t$ of length greater than $n$ is 
\begin{dmath*}
\proba{M}(t) = \Minit(t_{[0,n-1]}) \prod_{i=n}^{\size{t}-1} \Mtrans(t_{[i-n,i-1]}, t_{i}).
\end{dmath*}

The probability distributions of words of length smaller than $n$ are obtained by marginalizing the distribution $\Minit$. Under this definition, Markov models are homogeneous (i.e. such that the transition probabilities do not depend on the position). ``Markov model'' with no order specified stands for ``Markov model of order 1''.

A \defi{Hidden Markov model (HMM)} $H$ is a $4$-uple $(\Hstate, \Hinit, \Htrans, \Hemis)$  where  $\Hstate$ is a set of (hidden) states, $(\Hinit, \Htrans)$ is a Markov model of order $1$ on $\Hstate$, and $\Hemis$ associates a pair made of a state $q$ and of a symbol $x$ of the text alphabet with the probability for the state $q$ to emit $x$ (i.e. $\Hemis(q,.)$ is a probability distribution on the text alphabet). Under a HMM $H$, the probability of a text $t$ is 
\begin{dmath*}
\proba{H}(t) = \sum_{q\in\Hstate^{\size{t}}} \Hinit(q_{0})\Hemis(q_{0}, t_{0})\prod_{i=1}^{\size{t}-1} \Htrans(q_{i-1}, q_{i})\Hemis(q_{i}, t_{i}).
\end{dmath*}

We will often consider HMMs $H=(\Hstate, \Hinit, \Htrans, \Hemis)$ with \defi{deterministic emission functions}, i.e. such that for all states $d\in\Hstate$ there exists a unique symbol $x$ with $\Hemis(d,x)>0$, i.e. with $\Hemis(d,x)=1$. In this case, for all states $d$, we will put $\HemisBis(d)$ for the unique symbol such that $\Hemis(d,\HemisBis(d))>0$ ($\HemisBis$ is just a map from $\Hstate$ to the alphabet). Remark that for all HMM $H$, there exists a HMM $H'$ with a deterministic emission function which is equivalent to $H$ (it is obtained by splitting the hidden states according to the symbols emitted and by setting the probability transitions accordingly). 
In \cite{Marschall2010,Marschall2011}, authors define the \defi{finite-memory text models} which are essentially HMMs with an additional emission function.

Basically, iid models are special cases of Markov models which are themselves special cases of HMMs.

The next theorem is essentially a restatement of  Item 1 of Lemma 3 in \cite{Marschall2011}, for matching machines and HMMs.

\begin{theorem}[\cite{Marschall2011}]\label{theoHMM}
	Let $\matchine = (\stateset, \init, \prematchset, \trans, \next, \shift)$ be a $w$-matching machine. If a text $t$ follows an HMM then there exists a Markov model $(\Hinit[H'], \Htrans[H'])$  of state set $\Hstate[H']$ such that there exist:
	\begin{itemize}
		\item a map $\HemisBis[H']^{[t]}$ from $\Hstate[H']$ to $\alp$ such that $t$ follows the HMM with deterministic emission $(\Hstate[H'], \Hinit[H'], \Htrans[H'], \HemisBis[H']^{[t]})$,
		\item a map $\HemisBis[H']^{[\stateset]}$ from $\Hstate[H']$ to $\stateset$ such that $(\curState{\matchine}{t}{i})_{{i}}$ follows the HMM with deterministic emission  $(\Hstate[H'], \Hinit[H'], \Htrans[H'], \HemisBis[H']^{[\stateset]})$
		\item a map $\HemisBis[H']^{[s]}$ from $\Hstate[H']$ to $\{0,\ldots,\size{w}\}$ such that $(\curShift{\matchine}{t}{i})_{{i}}$ follows  the HMM with deterministic emission $(\Hstate[H'], \Hinit[H'], \Htrans[H'], \HemisBis[H']^{[s]})$
	\end{itemize}
	
%
\end{theorem}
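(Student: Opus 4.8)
The plan is to reduce the problem to a single, clean construction: given an HMM producing the text $t$, build one Markov chain whose states simultaneously encode (a) enough of the hidden state to reconstruct the emitted symbol, (b) the current state of the matching machine $\matchine$, and (c) the shift just performed by $\matchine$. Since the generic algorithm's transition at each iteration depends only on the current machine state $q$ and the single symbol $t_{\cPos+\next(q)}$ it reads, the obstacle is purely bookkeeping: the symbol read at iteration $i$ is \emph{not} $t_i$ but a symbol at a position that jumps forward by $\curShift{\matchine}{t}{i}$ each step. So the ``time axis'' of the matching machine is a reparametrized, sped-up version of the text's time axis, and the hidden chain must be advanced by a variable number of steps per matching-machine iteration.

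First I would invoke the remark already recorded in the excerpt that every HMM is equivalent to one with a deterministic emission function, so without loss of generality start from $H = (\Hstate, \Hinit, \Htrans, \HemisBis)$ with $\HemisBis : \Hstate \to \alp$; this is the model producing $t$. Second, I would form the product construction: let $k = \ordmatchine$ be the order of $\matchine$, and take as new hidden-state space tuples that record a window of the $H$-chain of length $k+1$ (equivalently, the last $k+1$ hidden states), the current machine state $q \in \stateset$, and the last shift value $s \in \{0,\dots,k\}$. Here $k+1$ is exactly what is needed because $\next(q) \le \ordmatchine = k$, so the symbol the algorithm reads from its current position lies within a window of width $k+1$ ahead; the $k$-shifted bookkeeping in the definition of the full-memory expansion is the same phenomenon. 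The transition of the new chain from such a tuple reads off $\next(q)$, looks up the emitted symbol $x = \HemisBis$ of the appropriate hidden state in the window, sets the new machine state to $\trans(q,x)$, the new recorded shift to $\shift(q,x)$, and advances the $H$-window by exactly $\shift(q,x)$ steps, each advance weighted by the corresponding $\Htrans$ probabilities (a product of $\shift(q,x)$ transition factors, or a single factor $1$ when the shift is $0$, in which case the window does not move but the machine state does). The initial distribution $\Hinit[H']$ is obtained by running $H$ for $k$ steps from $\Hinit$ to fill the initial window, pairing with $\init$ and shift $0$ (or $\next(\init)$-consistent bookkeeping). Define $\HemisBis[H']^{[t]}$ to return $\HemisBis$ of the \emph{first} hidden state in the window, $\HemisBis[H']^{[\stateset]}$ to return the recorded $q$, and $\HemisBis[H']^{[s]}$ to return the recorded $s$.

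The third and final step is to verify the three claimed distributional identities, which I would do by a joint induction on the iteration index $i$. The induction hypothesis is the coupling statement: under the law of the new chain, the marginal on the window component, read out position by position, reproduces the law of $t$ shifted to start at $\curPos{\matchine}{t}{i}$, the $q$-component equals $\curState{\matchine}{t}{i}$, and the $s$-component equals $\curShift{\matchine}{t}{i-1}$ (or the appropriate boundary value at $i=0$). The step uses nothing beyond the definition of the generic algorithm (Section~\ref{secGen}) for the $q$- and $s$-updates, and the Markov property of $H$ together with the observation that $\shift(q,x)$ many genuine $H$-steps have been inserted, so the window after the update is distributed as the $H$-chain advanced by that many steps — which is precisely the law of the relevant length-$(k+1)$ factor of $t$. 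The main obstacle is getting the synchronization of indices exactly right: the text follows $H'$ \emph{shifted to the current position}, not $H$ from the origin, and one must check that summing the inserted $\Htrans$ factors telescopes to give the correct marginal probability of the text window under $\piid$-free (i.e. general HMM) reasoning — in particular that no probability mass is lost or double-counted at shift-$0$ steps, where the window is stationary but time still advances for the machine. Once that bookkeeping lemma is in place, the three emission maps are just three different coordinate projections of the \emph{same} chain $(\Hstate[H'], \Hinit[H'], \Htrans[H'])$, which is exactly the form of the statement, and the proof is complete.
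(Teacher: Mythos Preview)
Your proposal is correct and follows essentially the same approach as the paper: reduce to a deterministic-emission HMM, form the product state space consisting of a length-$(\ordmatchine+1)$ window of hidden states together with the current machine state, let the transition advance the window by $\shift(q,\HemisBis(d_{\next(q)}))$ many $H$-steps, and verify the coupling by induction on the iteration index. The only cosmetic difference is that you carry the last shift $s$ as an explicit coordinate of the hidden state, whereas the paper computes it on the fly from $(d,q)$ via $\shift(q,\HemisBis(d_{\next(q)}))$; since the shift is a deterministic function of the window and the machine state, this redundancy is harmless.
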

\begin{proof}
We assume without loss of generality that $t$ follows an HMM $H$ with a deterministic emission, $H=(\Hstate, \Hinit, \Htrans, \HemisBis)$.

We set $\Hstate[H'] = \Hstate^{\ordmatchine+1}\times\stateset$. Let $(\Hinit[H'], \Htrans[H'])$ be the Markov model on $\Hstate[H']$ such that for all $d, d'\in\Hstate^{\ordmatchine+1}$ and all $q, q'\in\stateset$, we have

\begin{dmath*}
	\Hinit[H']([d, q]) = \left\{\begin{array}{ll} \proba{(\Hinit, \Htrans)}(d) & \mbox{if $q = \init$,}\\
0 & \mbox{otherwise, and}\end{array}\right.
\end{dmath*}
\begin{dmath*}	
\Htrans[H']([d, q], [d',q']) = \left\{\begin{array}{ll} 
0 & \mbox{if }q'\neq \trans(q, \HemisBis(d_{{\next(q)}})),\\
1 & \mbox{if }q'= \trans(q, \HemisBis(d_{\next(q)})), d'= d\mbox{ and } \shift(q, \HemisBis(d_{\next(q)})) = 0,\\
\probaS{(\Hinit, \Htrans)}(d,d',\shift(q, \HemisBis(d_{\next(q)}))) & \mbox{if } q'= \trans(q, \HemisBis(d_{\next(q)}))\mbox{ and } \shift(q, \HemisBis(d_{\next(q)}))>0,
 \end{array}\right.
 \end{dmath*}
 where $\probaS{(\Hinit, \Htrans)}(d,d',\ell)$ is the probability of observing $d'$ given that  $d$ occurs $\ell$ positions before, under the Markov model $(\Hinit, \Htrans)$,


Since the emission function of $H$ is deterministic, a sequence of hidden states $z$ of $\Hstate$ determines the emitted text $t^{z} = \HemisBis(z)$, which itself determines the sequence $\left(\curState{\matchine}{t^{z}}{i}, \curShift{\matchine}{t^{z}}{i}\right)_{i}$ of pairs state-shift parsed on the input $(\matchine, t^{z})$. Let us verify that if $z$ follows the Markov model $(\Hinit, \Htrans)$, then the Markov model $(\Hinit[H'], \Htrans[H'])$ models the sequence 
$\left([\curWin{i}, \curState{\matchine}{t^{z}}{i}]\right)_{i}$,
where $\curWin{i} = z_{[\curPos{\matchine}{t^{z}}{i},\curPos{\matchine}{t^{z}}{i}+\ordmatchine]}$.

Under the current assumptions and since the generic algorithm always starts with $\init$, we have $\curState{\matchine}{t^{z}}{0}=\init$ and $\p\left([\curWin{0}, \curState{\matchine}{t^{z}}{0}]\right) = \proba{(\Hinit, \Htrans)}(z_{[0,\ordmatchine]})$. The initial state of the sequence $\left([\curWin{i}, \curState{\matchine}{t^{z}}{i}]\right)_{i}$ does follow the distribution  $\Hinit[H']$. 

Let us assume that, for $j\geq 0$, the probability of  $\left([\curWin{i}, \curState{\matchine}{t^{z}}{i}]\right)_{[0, j]}$ is 
\begin{dmath*}
\proba{(\Hinit', \Htrans')}\left(([\curWin{i}, \curState{\matchine}{t^{z}}{i}])_{[0, j]}\right).
\end{dmath*}

Both the next state  $\curState{\matchine}{t^{z}}{j+1}$ and the shift $\curShift{\matchine}{t^{z}}{j}$ only depend on $\curState{\matchine}{t^{z}}{j}$ and on the symbol $x_{j} = t^{z}_{\curPos{\matchine}{t^{z}}{j}+\next(\curState{\matchine}{t^{z}}{j})}$. Both are fully determined by the current state $[\curWin{j}, \curState{\matchine}{t^{z}}{j}]$ of $\Hstate[H']$.
In particular, for all $d\in\Hstate^{\ordmatchine+1}$, if we have $q'\neq\trans(\curState{\matchine}{t^{z}}{j}, x_{j})$ then 
\begin{dmath*}
[\curWin{j+1}, \curState{\matchine}{t^{z}}{j+1}]\neq[d, q'].
 \end{dmath*}
 
If  $\curShift{\matchine}{t^{z}}{j} = 0$ then we have
\begin{dmath*}
[\curWin{j+1}, \curState{\matchine}{t^{z}}{j+1}]=[\curWin{j}, \trans(\curState{\matchine}{t^{z}}{j}, x_{j})]
\condition{with probability $1$.}
 \end{dmath*}

Otherwise, by setting
 $\curPos{\matchine}{t^{z}}{j+1} = \curPos{\matchine}{t^{z}}{j}+\curShift{\matchine}{t^{z}}{j}$, we get 
\begin{dmath*}
[\curWin{j+1}, \curState{\matchine}{t^{z}}{j+1}]=[\curWin{j+1}, \trans(\curState{\matchine}{t^{z}}{j}, x_{j})]\condition{with probability $\probaS{(\Hinit, \Htrans)}(\curWin{j},\curWin{j+1},\curShift{\matchine}{t^{z}}{j})$.}
\end{dmath*}

Altogether, we get that the probability of  $\left([\curWin{i}, \curState{\matchine}{t^{z}}{i}]\right)_{[0, j+1]}$ is equal to
\begin{dmath*}
\proba{(\Hinit', \Htrans')}\left(([\curWin{i}, \curState{\matchine}{t^{z}}{i}])_{[0, j+1]}\right).
\end{dmath*}

The sequence $\left([\curWin{i}, \curState{\matchine}{t^{z}}{i}]\right)_{i}$ follows the Markov model $(\Hinit[H'], \Htrans[H'])$. By construction, .
\end{proof}

Theorem \ref{theoHMM} holds for both Markov and iid models and implies that both  the sequence of state and the sequence of shifts follow an HMM. If $t$ follows a Markov model of order $n$, one can prove in the same way that the sequence $(t_{[k_{i}, k_{i}+L-1]},\curState{\matchine}{t}{i})_{i}$ with $L = \max\{\ordmatchine, n\}$, follows a Markov model, which may emit the sequence of states and that of shifts. More interestingly, if $t$ follows an iid model and $\matchine$ is non-redundant or standard then the sequence of states parsed on the input $(\matchine, t)$ directly follows a Markov model.

\begin{theorem}\label{theoiid}
	Let $\matchine = (\stateset, \init, \prematchset, \next, \trans, \shift)$ be a $w$-matching machine. If a text $t$ follows an iid model and $\matchine$ is non-redundant (resp. standard) then the sequence of states parsed by the generic algorithm on the input $(\matchine, t)$ follows a Markov model $M = (\Minit, \Mtrans)$, where for all states $q$ and $q'$,
	\begin{itemize}
		\item $\Minit(q)=\left\{\begin{array}{ll} 1 & \mbox{if $q = 0$,} \\ 0 & \mbox{otherwise;}\end{array}\right.$
		\item $\Mtrans(q,q') = \displaystyle\smashoperator[r]{\sum_{x,\trans(q,x) = q'}} \piid(x)$ if $\matchine$ is not redundant;
		\item $\Mtrans(q,q') = \frac{\displaystyle\smashoperator[r]{\sum_{{x, \trans(q,x) = q'}}} \piid(x)}{\displaystyle\smashoperator[r]{\sum_{{x,\trans(q,x) \neq \sink}}} \piid(x)}$ if $\matchine$ is standard.
	\end{itemize}
\end{theorem}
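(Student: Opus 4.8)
The plan is to present the parsed state sequence as the deterministic image of the sequence of symbols actually read from the text, and then — the text being iid — to read off the Markov property from the fact that the read symbols are ``fresh'' (hence iid) in the non-redundant case, and either fresh or state-forced in the standard case. Concretely, I would unwind the generic algorithm: writing $q_i=\curState{\matchine}{t}{i}$, $p_i=\curPos{\matchine}{t}{i}$, $P_i:=p_i+\next(q_i)$ for the position read at iteration $i$, and $Y_i:=t_{P_i}$, one has $(q_0,p_0)=(\init,0)$, $q_{i+1}=\trans(q_i,Y_i)$ and $p_{i+1}=p_i+\shift(q_i,Y_i)$, so $q_i$, $p_i$, $P_i$ and the whole history $(q_0,\dots,q_i)$ are deterministic functions of $(Y_0,\dots,Y_{i-1})$. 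Because $t$ is iid, conditionally on $\{Y_0=y_0,\dots,Y_{i-1}=y_{i-1}\}$ the number $P_i$ is fixed, and either $P_i\notin\{P_0,\dots,P_{i-1}\}$, in which case $Y_i\sim\piid$ independently of $(Y_0,\dots,Y_{i-1})$, or $P_i=P_k$ for some $k<i$ and $Y_i=y_k$. (I take $t$ long enough that the iterations in question occur; for a finite text the claim concerns the iterations actually performed, and both sides of the joint-probability identity to be proved vanish simultaneously on histories of probability $0$.)

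If $\matchine$ is non-redundant, no position is read twice, so the first alternative always holds and an induction on $i$ gives $\p(Y_0=x_0,\dots,Y_i=x_i)=\prod_{j\le i}\piid(x_j)$: the $Y_i$ are iid with law $\piid$. Since $(q_0,\dots,q_i)$ is a function of $(Y_0,\dots,Y_{i-1})$ and $Y_i$ is independent of the latter, $Y_i$ is independent of $(q_0,\dots,q_i)$, whence $\p(q_{i+1}=q'\mid q_0,\dots,q_i)=\p(\trans(q_i,Y_i)=q'\mid q_i)=\sum_{x\,:\,\trans(q_i,x)=q'}\piid(x)$, depending on the past only through $q_i$; by the chain rule this is exactly the Markov model $(\Minit,\Mtrans)$ of the statement. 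A non-redundant machine never reaches the sink — otherwise the shift-$0$ self-loop of $\sink$ would force a position to be read twice — so the unnormalised transition weights are legitimate.

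If $\matchine$ is standard, positions may repeat, so I would condition on the current state. Since $\matchine$ is standard, $\mem{q}$ is a function of $q$, and by Remark~\ref{remPos} the set $\frst(\mem{q})$ is exactly the set of already-read positions lying at or beyond the current one; hence, for a history ending at a state $r$, the position $P_i$ is new precisely when $\next(r)\notin\frst(\mem{r})$, and otherwise $Y_i$ is forced to be the unique symbol $a$ with $(\next(r),a)\in\mem{r}$. In the forced case $q_{i+1}=\trans(r,a)$ deterministically; and since $\matchine$ (being standard, hence isomorphic to its own full-memory expansion) sends every $x\ne a$ to $\sink$ from $r$, the right-hand side of the stated formula equals $\piid(a)/\piid(a)=1$ at $q'=\trans(r,a)$ and $0$ at every other non-sink state, as required. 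In the new case, summing over the read-sequences compatible with the state history (for each of which $Y_i\sim\piid$ independently) gives $\p(Y_i=x\mid q_0,\dots,q_i)=\piid(x)$, and since now no symbol leads to $\sink$ the denominator $\sum_{x\,:\,\trans(r,x)\ne\sink}\piid(x)$ equals $1$, so the formula reduces to $\sum_{x\,:\,\trans(r,x)=q'}\piid(x)$. Either way the conditional law of $q_{i+1}$ depends on the past only through $q_i$, i.e.\ the state sequence is Markov with transition matrix $\Mtrans$; the denominator simply renormalises away the phantom transitions to $\sink$, which are never taken (they would require reading a symbol differing from one previously read at the same position), and a forced symbol $a$ with $\piid(a)=0$ labels a probability-$0$ state, so the $0/0$ is immaterial.

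The step I expect to be the main obstacle is the interplay between conditioning and the fact that the sequence of queried positions is random and adaptive: one cannot condition on the state history and ``fix the positions'', because the positions depend on the exact symbols read and not merely on the states visited; the safe route is to condition on the exact read-sequence $(Y_0,\dots,Y_{i-1})$, exploit independence of the not-yet-inspected coordinate $Y_i$, and only then push forward through the deterministic state map. In the standard case the additional point to get right is that ``new vs.\ forced'', together with the forced value, are functions of the current state alone — this is where standardness and Remark~\ref{remPos} are used — since that is exactly what keeps the chain Markov and what the normalisation in $\Mtrans$ compensates for. Everything else reduces to routine inductions.
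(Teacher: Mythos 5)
Your proposal is correct and follows essentially the same route as the paper's own (much terser) proof: in the non-redundant case the accessed positions are all fresh, so the read symbols are iid and the unnormalised sums give the transition probabilities, while in the standard case the dichotomy ``$\next(q)\in\frst(\mem{q})$ forces the symbol (probability $1$, with the other symbols going to $\sink$) versus fresh position (plain $\piid$-draw, no transition to $\sink$)'' yields exactly the normalised formula. Your extra care in conditioning on the exact read sequence $(Y_0,\dots,Y_{i-1})$ rather than on the state history is simply a rigorous write-up of what the paper asserts in one line (``the symbols read at each text access are independently drawn from $\piid$ \ldots independently of the previous states''), so no comparison beyond this is needed.
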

\begin{proof}
Whatever the text model and the matching machine, the sequence of states always starts with the state $\init$ with probability $1$. We have $\Minit(o)=1$ and $\Minit(q)=0$ for all $q\neq o$.

If the positions of $t$ are iid with distribution $\piid$ and if $\matchine$ is non-redundant then the symbols read at each text access are independently drawn from $\piid$. It follows that the probability that the state $q'$ follows the state $q$ at any iteration is 
\begin{dmath*}
\Mtrans(q,q') = \displaystyle\smashoperator[r]{\sum_{x,\trans(q,x) = q'}} \piid(x),\end{dmath*} independently of the previous states.

Let us now assume that $\matchine$ is standard and that the text $t$ still follows an iid model $\piid$. 
By construction, the probability $\Mtrans(q,q')$ that the state $q'$ follows the state $q$ during the execution of the generic algorithm on the input $(\matchine, t)$, is equal to:
\begin{itemize}
\item $1$, if there exists a symbol $x$ such that $(\next(q), x)\in\mem{q}$ and $\trans(q,x) = q'$,
\item $\displaystyle\smashoperator[r]{\sum_{x,\trans(q,x) = q'}} \piid(x)$, otherwise,
\end{itemize}
 independently of the previous states.
If there exists a symbol $x$ such that $(\next(q), x)\in\mem{q}$, the we have $\trans(q,y) = \sink$ for all symbols $y\neq x$. Otherwise, since $\matchine$ is valid, there is no symbol $y$ such that $\trans(q,y) = \sink$. In both cases, we have that 
\begin{dmath*}
\Mtrans(q,q') = \frac{\displaystyle\smashoperator[r]{\sum_{{x, \trans(q,x) = q'}}} \piid(x)}{\displaystyle\smashoperator[r]{\sum_{{x,\trans(q,x) \neq \sink}}} \piid(x)}
\end{dmath*}
\end{proof}

\subsection{Asymptotic speed}

In \cite{Marschall2010,Marschall2011}, the authors studied the exact distribution of the number of text accesses of some classical algorithms seeking for a pattern in Bernoulli random texts of a given length. We are here rather interested in the asymptotic behavior of algorithms, still in terms of text accesses.

Let $\model$ be a  text model and $\alg$ be an algorithm. The \defi{asymptotic speed} of $\alg$ with respect to $w$ and under $\model$ is the limit, when $n$ goes to infinity,  of the expectation of the ratio of $n$ to the number of text accesses performed by $\alg$ by parsing a text of length $n$ drawn from $\model$. Formally, by putting $\tac{\alg}(t)$ for the number of text accesses performed by $\alg$ to parse $t$, the asymptotic speed of $\alg$ under $\model$ is
\begin{dmath*}
\as{\model}{\alg} = \lim_{n\to\infty}\sum_{t\in\alp^{n}} \frac{\size{t}}{\tac{\alg}(t)}\prob_{\model}(t).
\end{dmath*}

In order to make the notations less cluttered, $w$ does not appear neither on $\as{\model}{\alg}$ nor on $\tac{\alg}(t)$, but these two quantities actually depend on $w$.
At this point, nothing ensures that the limit above exists.

For all $w$-matching machines $\matchine$, we put $\tac{\matchine}$ for the number of text accesses and $\as{\model}{\matchine}$ for the asymptotic speed of the generic algorithm with $\matchine$ as first input. For a matching machine, the number of text accesses coincides with the number of iterations.

The following remark is a direct consequence of the definition of redundancy and of Remark \ref{remComp}.

\begin{remark}
If it exists, the asymptotic speed of a non-redundant matching machine is greater than $1$.
\end{remark}

In particular, the remark above holds for $w$-matching machines which are both standard and compact (Remark \ref{remComp}). It implies that any matching machine can be turned into a matching machine with an asymptotic speed greater than $1$ (Proposition \ref{propComp}).

\begin{lemma}\label{lemmaShift}
Let $\matchine = (\stateset, \init, \prematchset, \next, \trans, \shift)$ be a $w$-matching machine. If $\matchine$ is valid then we have for all texts $t$,
\begin{dmath*}
\left\lfloor\frac{\size{t}}{\size{w}}\right\rfloor\hiderel{\leq}\tac{\matchine}(t)\hiderel{\leq}(\size{t}+1)(\size{\stateset}+1).
\end{dmath*}
\end{lemma}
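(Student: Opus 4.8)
The plan is to prove the two inequalities separately, both by tracking how the current position $\curPos{\matchine}{t}{i}$ evolves over the iterations of the generic algorithm on a valid machine $\matchine$.

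For the lower bound $\lfloor\size{t}/\size{w}\rfloor \leq \tac{\matchine}(t)$, I would argue that each iteration can advance the current position by at most $\size{w}$. This is where validity is used: if some state $q$ had $\shift(q,x) > \size{w}$ for a reachable transition, then by choosing a text where $w$ occurs at a position in the ``jumped over'' window, the generic algorithm would skip a genuine occurrence, contradicting validity. (More carefully, one checks $\shift(q,x) \leq \size{w}$ for every reachable state and symbol; for a pre-match state the bound is even $\leq\size{w}$ as well, using that the occurrence-checking position is $\next(q) < \size{w}$.) Granting that each $\curShift{\matchine}{t}{i} \leq \size{w}$, the loop must run until $\cPos > \size{t} - \size{w}$, and since $\cPos$ starts at $0$ and increases by at most $\size{w}$ per iteration, at least $\lceil (\size{t}-\size{w}+1)/\size{w}\rceil$ iterations are needed, which is at least $\lfloor\size{t}/\size{w}\rfloor$. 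I would also handle the degenerate case $\size{t} < \size{w}$ separately, where the bound reads $0 \leq \tac{\matchine}(t)$ and is trivial.

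For the upper bound $\tac{\matchine}(t) \leq (\size{t}+1)(\size{\stateset}+1)$, the key point is Condition 2 of Remark \ref{remValid}: on a valid machine there is no text and no pair of iterations $i < j$ with the same state and all intervening shifts zero. Equivalently, between any two consecutive strictly-positive shifts, the machine visits each state at most once, so it performs at most $\size{\stateset}$ consecutive zero-shift iterations before a positive shift must occur. Partition the run into blocks delimited by the iterations at which the shift is positive. Since the current position never exceeds $\size{t}-\size{w}+1 \leq \size{t}+1$ at any iteration where the loop body runs and each positive shift is at least $1$, there are at most $\size{t}+1$ positive-shift iterations, hence at most $\size{t}+1$ blocks; each block has at most $\size{\stateset}$ zero-shift iterations plus the one terminating positive-shift iteration, giving at most $(\size{t}+1)(\size{\stateset}+1)$ iterations in all, and since for a matching machine the number of text accesses equals the number of iterations, this bounds $\tac{\matchine}(t)$.

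The main obstacle is making the block-counting argument for the upper bound fully rigorous: one must be careful that the bound ``at most $\size{\stateset}$ zero-shift iterations in a row'' really follows from Remark \ref{remValid} Condition 2 (which forbids a repeated state with all zero shifts in the closed interval $[i,j]$), and one must check the bookkeeping at the boundaries — the final (incomplete) block before the loop exits, and the fact that the counting of positive-shift iterations uses $\curPos{\matchine}{t}{i} \leq \size{t}$ throughout the loop. These are routine but need stating with the right indices; the lower bound's only subtlety is the validity-based bound $\curShift{\matchine}{t}{i}\leq\size{w}$, which I would prove by the skipped-occurrence argument sketched above.
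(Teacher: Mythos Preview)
Your upper bound is essentially the paper's argument: both use Condition~2 of Remark~\ref{remValid} to bound each run of zero-shift iterations by $\size{\stateset}$, and then count positive-shift iterations by the total position range. The packaging differs (you partition into blocks; the paper argues by contradiction via pigeonhole), but the content is the same.

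Your lower bound, however, has a genuine gap. The claim that every reachable transition satisfies $\shift(q,x)\le\size{w}$ is \emph{false} for valid $w$-matching machines of order strictly greater than $\size{w}-1$. Such a machine may access positions beyond $\size{w}-1$ relative to the current window, certify that no occurrence of $w$ starts in a stretch longer than $\size{w}$, and then legitimately shift by more than $\size{w}$. For a concrete example, take $w=\symb{ab}$ and a machine that, from some state, has already read relative positions $0,1,2$ as $\symb{a},\symb{a},\symb{a}$ and now reads position $3$ as $\symb{a}$: every potential start in $\{0,1,2\}$ is ruled out (each would require a $\symb{b}$ somewhere we have seen an $\symb{a}$), so a shift of $3>\size{w}=2$ is consistent with validity. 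Your ``skipped-occurrence'' argument breaks precisely here: the jumped-over window has already been accessed with values incompatible with any occurrence of $w$, so you cannot freely place $w$ there.

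The paper's argument for the lower bound is different and sidesteps this issue entirely: it reasons about the \emph{set of accessed text positions} rather than the evolution of the current position. If $\tac{\matchine}(t)<\lfloor\size{t}/\size{w}\rfloor$, then fewer than $\lfloor\size{t}/\size{w}\rfloor$ positions of $t$ are accessed, and a pigeonhole argument produces $\size{w}$ consecutive unaccessed positions; modifying $t$ on exactly those positions to contain a copy of $w$ leaves the run of the generic algorithm unchanged and yields an unreported occurrence, contradicting validity. That counting-on-accessed-positions idea is what your argument is missing.
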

\begin{proof}
If there exists a text $t$ such that $\tac{\matchine}(t)<\left\lfloor\frac{\size{t}}{\size{w}}\right\rfloor$ then there exists $\size{w}$ successive positions of $t$ which are not accessed during the execution of the generic algorithm on the input $(\matchine, t)$ \cite{Knuth1977}. They may contain an occurrence of $w$ which wouldn't be reported.

If there exists a text $t$ such that $\tac{\matchine}(t)>(\size{t}+1)(\size{\stateset}+1)$ then there exists an iteration $i\leq \tac{\matchine}(t)-\size{\stateset}-1$ such that $\curShift{\matchine}{t}{j}=0$ for all $i\leq j \leq i+\size{\stateset}$. Since there are only $\size{\stateset}$ states, there exist two integers $k$ and $\ell$ such that $i\leq k<\ell \leq i+\size{\stateset}$ and $\curState{\matchine}{t}{k}=\curState{\matchine}{t}{\ell}$, which contradicts the validity of $\matchine$ (Item 2 of Remark \ref{remValid}).
\end{proof}

We will need the following technical lemma.

\begin{lemma}\label{lemmaConv}
Let $M= (\Minit, \Mtrans)$ be a Markov model on an alphabet $\Mstate$ and $\increas$ be a map from $\Mstate$ to $\mathbb{N}$. 
Let us assume that we have
\begin{dmath*}
\lim_{n\to\infty} \sum_{i=0}^{n}\increas(V_{i}) = \infty\condition[]{with probability $1$,}
\end{dmath*}
where $(V_{i})_{i}$ is the Markov chain in which $V_{0}$ has the probability distribution $\Minit$ and, for all $i\geq 0$, $\p\{V_{i+1} = b  \condi V_{i} = a\} = \Mtrans(a,b)$.

By setting  $\subinc{n} = \{v\in\Mstate^{*} \condi\sum_{i=0}^{\size{v}-2}\increas(v_{i})+\kappa< n \leq \sum_{i=0}^{\size{v}-1}\increas(v_{i})+\kappa\}$ where $\kappa$ is a non-negative number, the sum
\begin{dmath*}
\sum_{v\in\subinc{n}} \frac{\size{v}_{x}}{\size{v}}\proba{M}(v)
\end{dmath*}
converges for all states $x\in\Mstate$ as $n$ goes to infinity, to
\begin{dmath*}
\lim_{k\to\infty}\sum_{v\in\Mstate^{k}} \frac{\size{v}_{x}}{k}\proba{M}(v).
\end{dmath*}
\end{lemma}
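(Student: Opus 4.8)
The plan is to deduce the convergence of the windowed average $\sum_{v\in\subinc{n}}\frac{\size{v}_x}{\size{v}}\proba{M}(v)$ from the ordinary ergodic-type limit $\lim_{k\to\infty}\sum_{v\in\Mstate^k}\frac{\size{v}_x}{k}\proba{M}(v)$ by a coupling/renewal argument: for a fixed realization $(V_i)_i$, the random length $N_n$ of the prefix that lands in $\subinc{n}$ (i.e. the unique $m$ with $\sum_{i=0}^{m-2}\increas(V_i)+\kappa< n\leq\sum_{i=0}^{m-1}\increas(V_i)+\kappa$) is well defined and finite almost surely precisely because $\sum_{i}\increas(V_i)\to\infty$ a.s., and $N_n\to\infty$ a.s. as $n\to\infty$. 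So the quantity in question is exactly $\expect\!\big[\frac{1}{N_n}\sum_{i=0}^{N_n-1}\mathbf{1}\{V_i=x\}\big]$, and I want to show this converges to the same limit $\alpha_x$ as $\expect\!\big[\frac{1}{k}\sum_{i=0}^{k-1}\mathbf{1}\{V_i=x\}\big]$.

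First I would establish that the latter limit $\alpha_x$ exists. The cleanest route is: the Markov chain $(V_i)$ on the finite state space $\Mstate$ decomposes into transient states and recurrent communicating classes; conditional on the (a.s.\ eventual) absorption into a recurrent class $\mathcal C$, the Ces\`aro averages $\frac1k\sum_{i<k}\mathbf 1\{V_i=x\}$ converge a.s.\ to the stationary probability $\pi_{\mathcal C}(x)$ of $x$ in $\mathcal C$ (or to $0$ if $x\notin\mathcal C$ or $x$ transient), by the standard ergodic theorem for finite Markov chains. Since these averages are bounded by $1$, bounded convergence gives convergence of $\expect[\cdot]$ to $\alpha_x=\sum_{\mathcal C}\p\{\text{absorbed in }\mathcal C\}\,\pi_{\mathcal C}(x)$. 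The very same a.s.\ Ces\`aro convergence, composed with the a.s.\ divergence $N_n\to\infty$, yields $\frac{1}{N_n}\sum_{i<N_n}\mathbf 1\{V_i=x\}\to\alpha_x$ a.s.; bounded convergence again upgrades this to convergence of the expectation. This is the whole argument in outline.

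The step I expect to require the most care is the passage from ``$\frac1k\sum_{i<k}\mathbf 1\{V_i=x\}\to\alpha_x$ a.s.'' and ``$N_n\to\infty$ a.s.'' to ``$\frac{1}{N_n}\sum_{i<N_n}\mathbf1\{V_i=x\}\to\alpha_x$ a.s.'': one must note that a sequence of reals $a_k\to\alpha$ composed with \emph{any} integer-valued sequence $N_n\to\infty$ satisfies $a_{N_n}\to\alpha$, and this applies pointwise on the full-probability event where both hold — so there is no genuine difficulty, only the bookkeeping of intersecting the two almost-sure events and checking $N_n$ is finite and nondecreasing in $n$ (which follows since $\increas\geq 0$, so the partial sums $\sum_{i=0}^{m-1}\increas(V_i)+\kappa$ are nondecreasing in $m$, and strictly increase past any level only when $\increas(V_i)>0$, which happens infinitely often a.s.\ by the divergence hypothesis). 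One should also record that $\size{v}_x$ and $\size{v}$ in the statement refer to, respectively, the number of indices $i<\size{v}$ with $v_i=x$ and the length $\size{v}$, so that $\sum_{v\in\subinc{n}}\frac{\size{v}_x}{\size{v}}\proba M(v)$ is literally $\expect[\frac{1}{N_n}\sum_{i<N_n}\mathbf1\{V_i=x\}]$ — the measure $\proba M$ on finite words of length $m$ being the law of $(V_0,\dots,V_{m-1})$, and $\{N_n=m\}$ being a cylinder event determined by $(V_0,\dots,V_{m-1})$. With these identifications in place, the two bounded-convergence invocations finish the proof.
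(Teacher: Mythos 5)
Your proposal is correct and follows essentially the same route as the paper: it identifies the sum over $\subinc{n}$ with the expectation of the Ces\`aro average stopped at the random index $N_n$ (the paper's $\ell_{V,n}$), uses the a.s.\ divergence of $N_n$ to transfer the a.s.\ convergence of the ordinary Ces\`aro averages, and concludes with bounded convergence. Your explicit recurrent-class decomposition merely spells out the ``classical result'' on a.s.\ convergence that the paper cites, so there is no substantive difference.
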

\begin{proof}

We define the random variable $F_{x,n}$ as the ratio $\frac{\size{V_{[0,\ell_{V,n}-1]}}_{x}}{\ell_{V,n}}$ where $\ell_{V,n}$ is the smallest integer such that $\sum_{i=0}^{\ell_{V,n}-1}\increas(V_{i})+\kappa\geq n$.

Since, under the assumptions of the lemma, $\lim_{n\to\infty}\ell_{V,n}=\infty$ with probability $1$, we have  
\begin{dmath}\label{eqLimit}
\lim_{n\to\infty} F_{x,n} =  \lim_{n\to\infty}\frac{\size{V_{[0,\ell_{V,n}-1]}}_{x}}{\ell_{V,n}} =\lim_{k\to\infty} \frac{\size{V_{[0,k-1]}}_{x}}{k} \condition[]{with probability $1$.}
\end{dmath}

The fact that $\frac{\size{V_{[0,k-1]}}_{x}}{k}$ converges almost surely (a.s.) as $k$ goes to $\infty$ is a classical result of Markov chains. In particular, The Ergodic Theorem states that if the chain is irreducible $\frac{\size{V_{[0,k-1]}}_{x}}{k}$ converges a.s. to the probability of the state $x$ in its stationary distribution \cite{Feller1968}.

Let us  remark that, for all $v\in\Mstate^{*}$, the probability $\p\left\{\ell_{V,n} = \size{v} \condi V_{[0, \size{v}-1]}=v\right\}$
 is $1$ if $v$ verifies $\sum_{i=0}^{\size{v}-2}\increas(v_{i})+\kappa< n \leq \sum_{i=0}^{\size{v}-1}\increas(v_{i})+\kappa$, and $0$ otherwise. We have, for all $v\in\Mstate^{*}$, 
\begin{dmath*}
\p\{\ell_{V,n} = \size{v} \mbox{ and }V_{[0, \size{v}-1]} = v\} = \left\{\begin{array}{ll}\proba{M}(v)& \mbox{if } v\in\subinc{n},\\ 0 &\mbox{otherwise.}\end{array}\right.
\end{dmath*}

It follows that
\begin{dmath*}
\expect(F_{x,n})=\sum_{k>0}\left(\sum_{v\in\Mstate^{k}} \frac{\size{v}_{x}}{k}\p\{\ell_{V,n} \hiderel{=} k \mbox{ and } V_{[0, k-1]} \hiderel{=} v\}\right)
= \sum_{v\in\subinc{n}} \frac{\size{v}_{x}}{\size{v}}\proba{M}(v).
\end{dmath*}

Moreover, since $F_{x,n}\leq 1$, the bounded convergence theorem gives us that
\begin{dmath*}
\lim_{n\to\infty} \expect(F_{x,n}) = \expect(\lim_{n\to\infty} F_{x,n}).
\end{dmath*}
The sum $\sum_{v\in\subinc{n}} \frac{\size{v}_{x}}{\size{v}}\proba{M}(v)$ does converge as $n$ goes to $\infty$, to 
\begin{dmath*}\lim_{k\to\infty}\sum_{v\in\Mstate^{k}} \frac{\size{v}_{x}}{k}\proba{M}(v)\condition[]{(Equation \ref{eqLimit}).}
\end{dmath*}
\end{proof}
We are now able to prove that the asymptotic speed of a matching machine does exist under an HMM.

\begin{theorem}\label{theoASExist}
Let $H$ be a HMM and $\matchine$ be a valid matching machine. The sum  $\sum_{t\in\alp^{n}} \frac{\size{t}}{\tac{\matchine}(t)}\prob_{H}(t)$ converges as $n$ goes to infinity.
\end{theorem}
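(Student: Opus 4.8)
The plan is to rewrite $\frac{\size{t}}{\tac{\matchine}(t)}$ as, up to an asymptotically negligible term, a weighted sum of the frequencies controlled by Lemma~\ref{lemmaConv}, applied to the Markov model provided by Theorem~\ref{theoHMM}, and then to conclude by taking expectations over texts of length $n$ and letting $n$ tend to infinity.

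First I would apply Theorem~\ref{theoHMM}: since $t$ follows an HMM, there are a Markov model $M=(\Hinit[H'],\Htrans[H'])$ on a finite state set $\Hstate[H']$ and a map $\increas\colon\Hstate[H']\to\{0,\dots,\size{w}\}$ emitting the sequence of shifts, so that, writing $(\rv_{i})_{i}$ for the associated Markov chain, one has $\curShift{\matchine}{t}{i}=\increas(\rv_{i})$ and hence $\curPos{\matchine}{t}{i}=\sum_{j=0}^{i-1}\increas(\rv_{j})$. For a text $t$ of length $n\geq\size{w}$, the generic algorithm performs an iteration exactly while the current position is $\leq n-\size{w}$, so $\tac{\matchine}(t)$ is the least $m$ with $\sum_{j=0}^{m-1}\increas(\rv_{j})+\size{w}-1\geq n$; in other words $\tac{\matchine}(t)$ is precisely the index $\ell_{\rv,n}$ occurring in Lemma~\ref{lemmaConv} for the constant $\kappa=\size{w}-1$, up to a bounded correction caused by the accesses that fall near the right end of $t$ (this correction will be harmless because $\tac{\matchine}(t)\geq\lfloor n/\size{w}\rfloor$ by Lemma~\ref{lemmaShift}). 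To be allowed to invoke Lemma~\ref{lemmaConv}, I must verify its hypothesis $\lim_{n}\sum_{i=0}^{n}\increas(\rv_{i})=\infty$ almost surely; this is forced by validity, since Item~2 of Remark~\ref{remValid} rules out $\size{\stateset}$ consecutive null shifts in an execution of a valid machine (two states would otherwise recur with only null shifts in between), whence $\sum_{i=0}^{n}\increas(\rv_{i})\geq\lfloor n/\size{\stateset}\rfloor$ along every realisation of the chain.

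Next, setting $\ell=\tac{\matchine}(t)$ and $n=\size{t}$, the definition of $\ell$ together with $\increas\leq\size{w}$ yields $n-\size{w}+1\leq\sum_{j=0}^{\ell-1}\increas(\rv_{j})\leq n$; decomposing $\sum_{j=0}^{\ell-1}\increas(\rv_{j})=\sum_{x\in\Hstate[H']}\increas(x)\,\size{\rv_{[0,\ell-1]}}_{x}$ and dividing by $\ell$ produces a constant $C$ depending only on $w$ and $\matchine$ with
\begin{dmath*}
\left\lvert\frac{\size{t}}{\tac{\matchine}(t)}-\sum_{x\in\Hstate[H']}\increas(x)\,\frac{\size{\rv_{[0,\ell-1]}}_{x}}{\ell}\right\rvert\leq\frac{C}{\tac{\matchine}(t)}\leq\frac{C\,\size{w}}{n-\size{w}},
\end{dmath*}
a bound which is deterministic and tends to $0$. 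Taking expectations over the texts of length $n$ under $H$ --- equivalently, over the Markov chain, using the identification above --- together with the identity $\expect\!\left[\size{\rv_{[0,\ell-1]}}_{x}/\ell\right]=\sum_{v\in\subinc{n}}\frac{\size{v}_{x}}{\size{v}}\proba{M}(v)$ established inside the proof of Lemma~\ref{lemmaConv}, one obtains
\begin{dmath*}
\left\lvert\sum_{t\in\alp^{n}}\frac{\size{t}}{\tac{\matchine}(t)}\prob_{H}(t)-\sum_{x\in\Hstate[H']}\increas(x)\sum_{v\in\subinc{n}}\frac{\size{v}_{x}}{\size{v}}\proba{M}(v)\right\rvert\leq\frac{C\,\size{w}}{n-\size{w}}.
\end{dmath*}
By Lemma~\ref{lemmaConv} each inner sum $\sum_{v\in\subinc{n}}\frac{\size{v}_{x}}{\size{v}}\proba{M}(v)$ converges as $n\to\infty$; since $\Hstate[H']$ is finite the middle term converges, and since the right-hand side tends to $0$ the sum $\sum_{t\in\alp^{n}}\frac{\size{t}}{\tac{\matchine}(t)}\prob_{H}(t)$ converges as well. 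The step I expect to be the main obstacle is the bookkeeping of the second paragraph: correctly identifying $\tac{\matchine}(t)$ with the stopping index of Lemma~\ref{lemmaConv} (fixing $\kappa$ and absorbing the end-of-text accesses into an $O(1)$ correction) and extracting the almost-sure divergence hypothesis from Remark~\ref{remValid}; once these are secured, the deterministic comparison inequality and Lemma~\ref{lemmaConv} finish the argument.
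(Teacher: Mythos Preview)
Your proposal is correct and follows essentially the same approach as the paper's proof: invoke Theorem~\ref{theoHMM} to realise the sequence of shifts as a deterministic emission of a finite Markov chain, rewrite $\size{t}/\tac{\matchine}(t)$ (up to a vanishing error) as a finite linear combination of empirical frequencies of the chain's states, and apply Lemma~\ref{lemmaConv} to each frequency. The only cosmetic differences are that the paper sandwiches $\size{t}/\tac{\matchine}(t)$ via the inequality $\sum_{i=0}^{\tac{\matchine}(t)-2}\curShift{\matchine}{t}{i}+\size{w}<\size{t}\leq\sum_{i=0}^{\tac{\matchine}(t)-1}\curShift{\matchine}{t}{i}+\size{w}$ rather than writing an explicit $O(1/n)$ error term, and it verifies the almost-sure divergence hypothesis of Lemma~\ref{lemmaConv} through Lemma~\ref{lemmaShift} (whose proof is precisely your Remark~\ref{remValid} argument) rather than citing Remark~\ref{remValid} directly.
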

\begin{proof}
Let $H=(\Hstate, \Hinit, \Htrans, \Hemis)$ be a HMM and $t$ be a text. The number of iterations of the generic algorithm on the input $(\matchine, t)$ is equal to the number $\tac{\matchine}(t)$ of text accesses. From the loop condition of the generic algorithm, we get that 

\begin{dmath}\label{eqIneq}
\frac{\sum_{i=0}^{\tac{\matchine}(t)-2}\curShift{\matchine}{t}{i}}{\tac{\matchine}(t)} + \frac{\size{w}}{\tac{\matchine}(t)}\hiderel{<}\frac{\size{t}}{\tac{\matchine}(t)} \hiderel{\leq} \frac{\sum_{i=0}^{\tac{\matchine}(t)-1}\curShift{\matchine}{t}{i}}{\tac{\matchine}(t)} + \frac{\size{w}}{\tac{\matchine}(t)}.
\end{dmath}

Since the validity of $\matchine$ implies that $\lim_{\size{t}\to\infty}\tac{\matchine}(t) = \infty$ (Lemma \ref{lemmaShift}), Inequality \ref{eqIneq} leads to
\begin{dmath}\label{eqLim}
\lim_{n\to\infty}\sum_{t\in\alp^{n}} \frac{\size{t}}{\tac{\matchine}(t)}\prob_{H}(t) = \lim_{n\to\infty}\sum_{t\in\alp^{n}} \frac{\sum_{i=0}^{\tac{\matchine}-1}\curShift{\matchine}{t}{i}}{\tac{\matchine}(t)}\prob_{H}(t).
\end{dmath}

From Theorem \ref{theoHMM}, if $t$ follows the HMM $H$ then the sequence of shifts $(\curShift{\matchine}{t}{i})_{0\leq i<\tac{\matchine}(t)}$ follows a HMM $H'=(\Hstate[H'], \Hinit[H'], \Htrans[H'], \HemisBis[H'])$, which is assumed to have a deterministic emission without loss of generality. $H'=(\Hstate[H'], \Hinit[H'], \Htrans[H'], \Hemis[H'])$. 
By setting  
\begin{dmath*}
\subinc{n} \hiderel{=} \left\{v\hiderel{\in}\Hstate[H']^{*} \condi\sum_{i=0}^{\size{v}-2}\HemisBis[H'](v_{i})+\kappa\hiderel{<} n \leq \sum_{i=0}^{\size{v}-1}\HemisBis[H'](v_{i})+\kappa\right\},
\end{dmath*}
Equation \ref{eqLim} becomes

\begin{dmath*}\label{eqBof}
\lim_{n\to\infty}\sum_{t\in\alp^{n}} \frac{\size{t}}{\tac{\matchine}(t)}\prob_{H}(t) = \lim_{n\to\infty}\sum_{v\in\subinc[\size{w}]{n}} \frac{\sum_{i=0}^{\size{v}-1}\HemisBis[H'](v_{i})}{\size{v}}\proba{H'}(v)\\
 = \lim_{n\to\infty}\sum_{v\in\subinc[\size{w}]{n}} \sum_{q\in\Hstate[H']}\HemisBis[H'](q)\frac{\size{v}_{q}}{\size{v}}\proba{H'}(v).
\end{dmath*}
Interchanging the order of summation gives us that
\begin{dmath}\label{eqExist}
\sum_{v\in\subinc[\size{w}]{n}} \sum_{d\in\Hstate[H']}\HemisBis[H'](d)\frac{\size{v}_{d}}{\size{v}}\proba{H'}(v) = \sum_{d\in\Hstate[H']}\HemisBis[H'](d)\left(\sum_{v\in\subinc[\size{w}]{n}} \frac{\size{v}_{d}}{\size{v}}\proba{H'}(v)\right).
\end{dmath}
Since the sequence of shifts follows $H'$ when the text follows $H$, for all $v\in\Hstate[H']^{*}$ such that $\proba{(\Hinit[H'], \Htrans[H'])}(v)\hiderel{>}0$, there exists a text $t$ with $\proba{H}(t)>0$ and such that the sequence of shifts parsed on the input $(\matchine, t)$ is $\HemisBis[H'](v)$ and $\size{v}$ is the number of iterations (or text accesses). Under the assumption that $\matchine$ is valid, Lemma \ref{lemmaShift} implies that 
\begin{dmath*}
\size{v}\leq(\size{\Hstate[H']}+1)(\size{t}+1)\leq(\size{\Hstate[H']}+1)\left(\sum_{i=0}^{\size{v}}\HemisBis[H'](v_{{i}})+1\right)
\end{dmath*}
thus
\begin{dmath*}
\sum_{i=0}^{|v|-1}\HemisBis[H'](v_{{i}})\geq \frac{\size{v}}{\size{\Hstate[H']}+1}-1
\end{dmath*}
In short, we have $\proba{(\Hinit[H'], \Htrans[H'])}(v)\hiderel{>}0\Rightarrow\sum_{i=0}^{|v|-1}\HemisBis[H'](v_{{i}})\geq \beta\size{v}$ with $\beta>0$, which implies that the Markov model $(\Hinit[H'], \Htrans[H'])$ and the map  $\HemisBis[H']$ satisfy the assumptions of
Lemma \ref{lemmaConv}. We get that the sum $\sum_{v\in\subinc[\size{w}]{n}} \frac{\size{v}_{q}}{\size{v}}\proba{H'}(v)$ converges to a limit frequency $\freq_{q}$ as $n$ goes to $\infty$.
From Equation \ref{eqExist}, the asymptotic speed $\as{H}{\matchine}$ does exist and is equal to
\begin{dmath*}
\sum_{q\in\Hstate[H']}\HemisBis[H'](q)\freq_{q}.
\end{dmath*}
\end{proof}

A more precise result can be stated in the case where the model is Bernoulli and the machine is standard.

\begin{theorem}
Let $\matchine = (\stateset, \init, \prematchset, \next, \trans, \shift)$ be a standard and valid $w$-matching machine and $\piid$ a Bernoulli model. The asymptotic speed of $\matchine$ under $\piid$ is equal to:
\begin{dmath*}\label{eqAsympFreq}
\as{\piid}{\matchine} = \sum_{q\in\stateset} \alpha_{q}\expstate{q},
\end{dmath*}
where $(\alpha_{q})_{q\in{\stateset}}$ are the limit frequencies of the states of the Markov model associated to $\matchine$ and $\piid$, given in Theorem \ref{theoiid} and 
\begin{dmath*}
\expstate{q} = \frac{\sum_{x,\trans(q,x) \neq \sink} \shift(q,x)\piid_{x}}{\sum_{x,\trans(q,x) \neq \sink} \piid_{x}}.
\end{dmath*}
\end{theorem}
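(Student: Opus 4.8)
The plan is to specialize Theorem~\ref{theoASExist} to the Bernoulli case, where Theorem~\ref{theoiid} tells us that the sequence of states $(\curState{\matchine}{t}{i})_{i}$ itself (not some HMM refinement) follows a Markov model $M=(\Minit,\Mtrans)$ on $\stateset$. First I would invoke Equation~\ref{eqLim} from the proof of Theorem~\ref{theoASExist}, which already reduces the asymptotic speed to $\lim_{n\to\infty}\sum_{t\in\alp^{n}}\frac{\sum_{i=0}^{\tac{\matchine}(t)-1}\curShift{\matchine}{t}{i}}{\tac{\matchine}(t)}\prob_{\piid}(t)$. The key simplification is that, because the machine is standard and the text is iid, the shift $\curShift{\matchine}{t}{i}$ depends only on $\curState{\matchine}{t}{i}$ together with the single symbol read, which is drawn from $\piid$ conditioned on $\trans(\curState{\matchine}{t}{i},\cdot)\neq\sink$ (Lemma~\ref{lemmaStandard} guarantees the shift is a well-defined function of the state and the successor state, and the conditioning on avoiding the sink is exactly what appears in Theorem~\ref{theoiid}). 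Hence the expected shift, conditioned on being in state $q$, is precisely $\expstate{q}=\frac{\sum_{x,\trans(q,x)\neq\sink}\shift(q,x)\piid_{x}}{\sum_{x,\trans(q,x)\neq\sink}\piid_{x}}$.

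The second step is to rewrite the running-average of shifts as a weighted count over states: along a run visiting states $q_{0},q_{1},\ldots,q_{\tac{\matchine}(t)-1}$, we have $\sum_{i}\curShift{\matchine}{t}{i}=\sum_{q\in\stateset}\sum_{i:q_{i}=q}\curShift{\matchine}{t}{i}$. Grouping the text-sum by the state-sequence it induces, and using that the shifts at the visits to a fixed state $q$ are, conditionally on the state sequence, i.i.d.\ with mean $\expstate{q}$, the average $\frac{1}{\tac{\matchine}(t)}\sum_{i}\curShift{\matchine}{t}{i}$ is close (in $L^{1}$, by a law-of-large-numbers / bounded-convergence argument as in the proof of Theorem~\ref{theoASExist}) to $\sum_{q}\frac{\size{v}_{q}}{\size{v}}\expstate{q}$, where $v$ is the realized state sequence and $\size{v}_{q}$ counts occurrences of $q$. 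Then Lemma~\ref{lemmaConv}, applied with the Markov model $M$ on $\stateset$ and the increment map $\increas(q)=\expstate{q}$ (which is positive on reachable non-sink states, and whose positivity-on-average hypothesis follows exactly as in Theorem~\ref{theoASExist} from the bounds of Lemma~\ref{lemmaShift}), shows that $\sum_{v\in\subinc[\size{w}]{n}}\frac{\size{v}_{q}}{\size{v}}\proba{M}(v)$ converges to the limit frequency $\alpha_{q}$ of state $q$. Interchanging the finite sum over $q\in\stateset$ with the limit yields $\as{\piid}{\matchine}=\sum_{q\in\stateset}\alpha_{q}\expstate{q}$.

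**Main obstacle.** The delicate point is making rigorous the passage from ``the sum of shifts along a run'' to ``$\sum_q \size{v}_q \expstate{q}$'': the shifts are not a deterministic function of the state sequence alone (two different symbols $x,y$ with $\trans(q,x)=\trans(q,y)$ can only give equal shifts by Lemma~\ref{lemmaStandard}, but distinct successor states can carry distinct shifts), so one must handle the conditional randomness of the shifts given the state path. The cleanest route is probably to enrich the Markov chain to the pair $(\curState{\matchine}{t}{i},\curShift{\matchine}{t}{i})$ — which is still Markov on the finite state set $\stateset\times\{0,\ldots,\size{w}\}$ when the text is iid and the machine is standard — apply Lemma~\ref{lemmaConv} to that enriched chain with the increment being the shift coordinate, and observe that the stationary frequency of the pair $(q,s)$ factors as $\alpha_q$ times the conditional shift-distribution at $q$, whose mean is $\expstate{q}$. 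This avoids any separate law-of-large-numbers bookkeeping and reduces everything to an already-proved lemma; the remaining work is the routine verification that this enriched chain inherits the hypotheses of Lemma~\ref{lemmaConv} from Lemma~\ref{lemmaShift}, exactly as in Theorem~\ref{theoASExist}.
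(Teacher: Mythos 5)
Your final (enriched-chain) route is essentially the paper's own proof: the paper likewise reduces the speed to the limiting average of shifts, notes via Lemma \ref{lemmaStandard} that the shift is a deterministic function $\myshift(q_i,q_{i+1})$ of consecutive states, applies Lemma \ref{lemmaConv} to the finite Markov chain of consecutive state pairs (its hypotheses checked through Lemma \ref{lemmaShift} exactly as in Theorem \ref{theoASExist}), and identifies the limiting pair frequencies as $\alpha_{d_0}\Mtrans(d_0,d_1)$, which with Theorem \ref{theoiid} yields $\sum_{q}\alpha_{q}\expstate{q}$; your chain on $\stateset\times\{0,\ldots,\size{w}\}$ is just a repackaging of that pair chain. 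Note only that your ``main obstacle'' dissolves for precisely this reason (the shift is determined by the consecutive pair of states), and that your intermediate suggestion of feeding Lemma \ref{lemmaConv} the expected increment $\expstate{q}$ on the bare state chain would not be legitimate as stated (the lemma requires the realized integer increments that define $\subinc{n}$), so the enriched formulation you settle on is the correct and intended one.
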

\begin{proof}
Since $\matchine$ is standard and valid, Lemma \ref{lemmaStandard} states that 
any transition from a state $r$ to a state $s$ (whatever the symbol read from the text) is associated to a unique shift which will be referred to as $\myshift(r,s)$. 

For all texts $t$, we then have
\begin{dmath*}\label{eqIneqBis}
\frac{\sum_{i=0}^{\tac{\matchine}(t)-3}\myshift(\curState{\matchine}{t}{i},\curState{\matchine}{t}{i+1})}{\tac{\matchine}(t)} + \frac{\size{w}}{\tac{\matchine}(t)}\hiderel{<}\frac{\size{t}}{\tac{\matchine}(t)} \hiderel{\leq} \frac{\sum_{i=0}^{\tac{\matchine}(t)-2}\myshift(\curState{\matchine}{t}{i},\curState{\matchine}{t}{i+1})}{\tac{\matchine}(t)} + \frac{\size{w}}{\tac{\matchine}(t)}.
\end{dmath*}
Theorem \ref{theoiid} tells us that if $t$ is drawn according $\piid$ then the sequence $(\curState{\matchine}{t}{i})_{i}$ follows a Markov model $M = (\Minit, \Mtrans)$. 
Let us define
\begin{dmath*}
\subinc{n} \hiderel{=} \left\{v\hiderel{\in}\stateset^{*} \condi\sum_{i=0}^{\size{v}-2}\myshift(v_{i}, v_{i+1})+\kappa\hiderel{<} n \leq \sum_{i=0}^{\size{v}-1}\myshift(v_{i}, v_{i+1})+\kappa\right\}.
\end{dmath*}

From the fact that $\matchine$ is valid, we get $\lim_{\size{t}\to\infty}\tac{\matchine}(t) = \infty$ (Lemma \ref{lemmaShift}) and 
\begin{dmath*}
\lim_{n\to\infty}\sum_{t\in\alp^{n}} \frac{\size{t}}{\tac{\matchine}(t)}\piid(t) = \lim_{n\to\infty}\sum_{v\in\subinc{n}} \frac{\sum_{i=0}^{\size{v}-2}\myshift(v_{i},v_{i+1})}{\size{v}}\prob_{M}(v).
\end{dmath*}
Basically, we have that
\begin{dmath*}
\frac{\sum_{i=0}^{\size{v}-2}\myshift(v_{i},v_{i+1})}{\size{v}} = \sum_{d\in\stateset^{{2}}}\myshift(d_{0},d_{1})\frac{\size{v}_{d}}{\size{v}}
\end{dmath*}

The sequence $(v_{i}v_{i+1})_{i}$ follows a Markov model with states in $\stateset^{2}$. The same argument as in the proof of Theorem \ref{theoASExist} shows that the assumption of Lemma \ref{lemmaConv} is granted with $(v_{i}v_{i+1})_{i}$ and $\myshift$, which gives us that, for all $d\in\stateset^{{2}}$, 
\begin{dmath*}
\lim_{n\to\infty}\sum_{v\in\subinc{n}} \frac{\size{v}_{d}}{\size{v}}\prob_{M}(v)
= \lim_{k\to\infty}\sum_{v\in\stateset^{k}} \frac{\size{v}_{d}}{k}\prob_{M}(v)\\  
 = \lim_{k\to\infty}\sum_{v\in\stateset^{k}} \frac{\size{v}_{d_{0}}}{k}\times\frac{\size{v}_{d}}{\size{v}_{d_{0}}}\prob_{M}(v)\\
= \alpha_{d_{0}}\Mtrans(d_{0}, d_{1})
\end{dmath*}

Finally we have
\begin{dgroup*}
\begin{dmath*}
\lim_{n\to\infty}{\sum_{t\in\alp^{n}} \frac{\size{t}}{\tac{\matchine}(t)}\piid(t)}
= \lim_{n\to\infty}\sum_{v\in\subinc{n}}\frac{\sum_{i=0}^{\size{v}-2}\myshift(v_{i},v_{i+1})}{\size{v}}\prob_{M}(v)
\end{dmath*}
\begin{dmath*}
= \sum_{d\in\stateset^{{2}}}\myshift(d_{0},d_{1})\lim_{n\to\infty}\sum_{v\in\subinc{n}} \myshift(d_{0},d_{1})\frac{\size{v}_{d}}{\size{v}}\prob_{M}(v)
\end{dmath*}
\begin{dmath*}
= \sum_{d\in\stateset^{{2}}}\myshift(d_{0},d_{1})\alpha_{d_{0}}\Mtrans(d_{0}, d_{1})
\end{dmath*}
\begin{dmath*}
= \sum_{d_{0}\in\stateset} \alpha_{d_{0}}\sum_{d_{1}\in\stateset}\myshift(d_{0},d_{1})\Mtrans(d_{0}, d_{1})
\end{dmath*}
\end{dgroup*}

With Theorem \ref{theoiid}, we have that 
\begin{dmath*}
\sum_{d_{1}\in\stateset}\myshift(d_{0},d_{1})\Mtrans(d_{0}, d_{1})
=\sum_{d_{1}\in\stateset}\myshift(d_{0},d_{1})\frac{\displaystyle\smashoperator[r]{\sum_{{x, \trans(d_{0},x) = d_{1}}}} \piid(x)}{\displaystyle\smashoperator[r]{\sum_{{x,\trans(d_{0},x) \neq \sink}}} \piid(x)}
=\expstate{d_{0}}
\end{dmath*}

\end{proof}

\section{Withdrawing inefficient states}\label{s:states}

We shall see that some states of a matching machine may be removed without decreasing its asymptotic speed under a given iid model.

\subsection{Redirecting transitions}

\begin{theorem}\label{theoRedir}
	Let $\piid$ be an iid model, $t$ a text drawn from $\piid$, $\matchine=(\stateset, \init, \prematchset, \next, \trans, \shift)$ be a $w$-matching machine and $\sta$ and $\stb$ be two states which are such that, under the notations of Section \ref{secFull},
	\begin{itemize}
	\item for all states $r$ and all symbols $x$ and $y$, $\trans(r,x) = \trans(r,y) \Rightarrow \shift(r,x) = \shift(r,y)$;
	\item the sequence of states of the generic algorithm on the input $(\matchine, t)$ follows a Markov model $M=(\Minit, \Mtrans)$;
	\item the sequence of states of the generic algorithm on the input $(\redir{\matchine}{\sta}{\stb}, t)$ follows a Markov model $\moda=(\Minit[\moda], \Mtrans[\moda])$ which is such that
	\begin{itemize}
	\item if $\stb\neq\init$ then $\Minit[\moda]=\Minit$, otherwise $\Minit[\moda](\sta) = 1$ and $\Minit[\moda](s) = 0$ for all states $s\neq\sta$,
	\item $\Mtrans[\moda](r,s) = \Mtrans(r,s)$ for all states $s\neq\sta$,
	\item $\Mtrans[\moda](r,\sta) = \Mtrans(r,\sta) + \Mtrans(r,\stb)$;
	\end{itemize}
	\item the sequence of states of the generic algorithm on the input $(\redir{\matchine}{\stb}{\sta}, t)$ follows a Markov model $\modb=(\Minit[\modb], \Mtrans[\modb])$ which is such that
	\begin{itemize}
	\item if $\sta\neq\init$ then $\Minit[\modb]=\Minit$, otherwise $\Minit[\modb](\stb) = 1$ and $\Minit[\modb](s) = 0$ for all states $s\neq\stb$,
	\item $\Mtrans[\modb](r,s) = \Mtrans(r,s)$ for all states $s\neq\stb$,
	\item $\Mtrans[\modb](r,\stb) = \Mtrans(r,\sta) + \Mtrans(r,\stb)$.
	\end{itemize}
	\end{itemize}
	We have
	\begin{dmath*}
	\as{\piid}{\matchine}\leq\max \{\as{\piid}{\redir{\matchine}{\sta}{\stb}}, \as{\piid}{\redir{\matchine}{\stb}{\sta}}\}.
	\end{dmath*}
\end{theorem}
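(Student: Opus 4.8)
The plan is to express the asymptotic speeds of $\matchine$, $\redir{\matchine}{\sta}{\stb}$ and $\redir{\matchine}{\stb}{\sta}$ in terms of the stationary distributions of the three associated Markov models $M$, $\moda$, $\modb$, and then exhibit $\as{\piid}{\matchine}$ as a weighted average (convex combination) of the other two speeds, from which the $\max$ bound is immediate. The first hypothesis (that $\trans(r,x)=\trans(r,y)$ forces $\shift(r,x)=\shift(r,y)$) means that, as in the proof of the last theorem of Section~\ref{s:models}, every transition $r\to s$ carries a well-defined shift $\myshift(r,s)$, and by Theorem~\ref{theoASExist} together with Lemma~\ref{lemmaConv} applied to the chain of pairs $(v_iv_{i+1})_i$, we have, writing $(\alpha_q)$, $(\alpha^{\sta}_q)$, $(\alpha^{\stb}_q)$ for the limit state frequencies of $M$, $\moda$, $\modb$,
\begin{dmath*}
\as{\piid}{\matchine} = \sum_{r,s}\alpha_{r}\Mtrans(r,s)\myshift(r,s),
\end{dmath*}
and analogously for $\redir{\matchine}{\sta}{\stb}$ with $\alpha^{\sta}$, $\Mtrans[\moda]$ and $\myshift[\moda]$, and for $\redir{\matchine}{\stb}{\sta}$ with $\alpha^{\stb}$, $\Mtrans[\modb]$ and $\myshift[\modb]$. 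One checks from the definition of $\redir{\matchine}{\sta}{\stb}$ that $\myshift[\moda](r,s)=\myshift(r,s)$ whenever $s\neq\sta$ and $\myshift[\moda](r,\sta)$ agrees with the common value of $\myshift(r,\sta)$ and $\myshift(r,\stb)$ (these coincide because $\mem{\sta}=\mem{\stb}$ is not needed here, only that the redirected transitions inherit their shifts); similarly for $\modb$.

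\textbf{Relating the three chains.} The key structural observation is that the transition matrices $\Mtrans[\moda]$ and $\Mtrans[\modb]$ are obtained from $\Mtrans$ by \emph{merging} the two states $\sta$ and $\stb$: in $\moda$ the merged state is called $\sta$ and in $\modb$ it is called $\stb$, but the \emph{outgoing} rows from $\sta$ and from $\stb$ are left untouched (the hypotheses only prescribe the incoming probabilities $\Mtrans[\moda](r,\sta)=\Mtrans(r,\sta)+\Mtrans(r,\stb)$ and likewise for $\modb$). Hence $\moda$ and $\modb$ are really the same ``lumped'' chain described with two different labels for one state, \emph{except} that the lumped state uses the out-row of $\sta$ in one case and the out-row of $\stb$ in the other. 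I would make this precise by introducing a single parameter: let $\alpha_{\sta}$, $\alpha_{\stb}$ be the $M$-frequencies and let $\lambda = \alpha_{\sta}/(\alpha_{\sta}+\alpha_{\stb})$ (handling the degenerate case $\alpha_{\sta}+\alpha_{\stb}=0$ separately, where all three speeds trivially coincide on the recurrent part). Then I expect to show that the stationary frequencies of $\moda$ satisfy $\alpha^{\sta}_{\sta}=\alpha_{\sta}+\alpha_{\stb}$, $\alpha^{\sta}_{q}=\alpha_{q}$ for $q\notin\{\sta,\stb\}$, \emph{provided} one is in the situation where $\sta$ and $\stb$ have proportional behaviour; in general one must argue via the renewal/ergodic structure that the long-run frequency with which the merged state is ``entered through the $\sta$-copy'' versus the ``$\stb$-copy'' is exactly $\lambda$ versus $1-\lambda$, because incoming mass is split according to $\Mtrans(\cdot,\sta)$ vs $\Mtrans(\cdot,\stb)$ and these sum to the merged incoming mass. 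This is the technical heart of the argument and the step I expect to be the main obstacle: carefully justifying, using only the hypotheses that the state sequences of the redirected machines follow the prescribed Markov models, that the pair-frequencies $\alpha_r\Mtrans(r,s)$ of the original chain recombine correctly into the pair-frequencies of $\moda$ and $\modb$. Concretely I would prove
\begin{dmath*}
\alpha^{\sta}_{r}\Mtrans[\moda](r,s) = \alpha_{r}\Mtrans(r,s) \quad\text{for } s\neq\sta,\qquad
\alpha^{\sta}_{r}\Mtrans[\moda](r,\sta) = \alpha_{r}\bigl(\Mtrans(r,\sta)+\Mtrans(r,\stb)\bigr),
\end{dmath*}
for $r\notin\{\sta,\stb\}$, together with the analogous identities where the out-rows of $\sta$, respectively $\stb$, are folded in for the transitions leaving the merged state, and symmetrically for $\modb$.

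\textbf{Conclusion by convexity.} Granting the frequency identities above, I would substitute into the speed formulas. Writing $\mu_{\sta} = \sum_{s}\myshift(\sta,s)\Mtrans(\sta,s)$ and $\mu_{\stb}=\sum_{s}\myshift(\stb,s)\Mtrans(\stb,s)$ for the expected shifts leaving $\sta$ and $\stb$ in the original chain, and $\rho$ for the common contribution of all the other state-transitions (which is identical in all three machines once the incoming mass to $\{\sta,\stb\}$ is fixed), I expect to obtain
\begin{dmath*}
\as{\piid}{\matchine} = \rho + \alpha_{\sta}\mu_{\sta} + \alpha_{\stb}\mu_{\stb},\qquad
\as{\piid}{\redir{\matchine}{\sta}{\stb}} = \rho + (\alpha_{\sta}+\alpha_{\stb})\mu_{\sta},\qquad
\as{\piid}{\redir{\matchine}{\stb}{\sta}} = \rho + (\alpha_{\sta}+\alpha_{\stb})\mu_{\stb}.
\end{dmath*}
Then $\as{\piid}{\matchine} = \lambda\,\as{\piid}{\redir{\matchine}{\sta}{\stb}} + (1-\lambda)\,\as{\piid}{\redir{\matchine}{\stb}{\sta}}$ with $\lambda\in[0,1]$, so it lies between the two and in particular is at most their maximum, which is the claim. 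The only loose ends are the degenerate cases: if $\sta$ or $\stb$ is unreachable in the relevant machine (frequency $0$) the corresponding speed equation degenerates but the inequality still holds trivially, and the bijection-of-states bookkeeping when $\init\in\{\sta,\stb\}$ is handled by the stated special form of $\Minit[\moda]$, $\Minit[\modb]$, which does not affect stationary frequencies.
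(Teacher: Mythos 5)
Your overall strategy -- exhibit $\as{\piid}{\matchine}$ as a convex combination of $\as{\piid}{\redir{\matchine}{\sta}{\stb}}$ and $\as{\piid}{\redir{\matchine}{\stb}{\sta}}$ -- is indeed the right idea (it is how the paper concludes its hardest case), but the specific identities you plan to prove at the ``technical heart'' are false, so the plan as written cannot be carried out. Merging $\sta$ and $\stb$ while keeping only the out-row of $\sta$ is \emph{not} a lumping of the chain, and the stationary frequencies do not recombine the way you claim: in general $\alpha^{\sta}_{\sta}\neq\alpha_{\sta}+\alpha_{\stb}$, $\alpha^{\sta}_{q}\neq\alpha_{q}$ for $q\notin\{\sta,\stb\}$, and the contribution $\rho$ of the remaining transitions is \emph{not} common to the three machines (states fed only by the out-row of $\stb$ may even become unreachable in $\redir{\matchine}{\sta}{\stb}$). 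Concretely, take states $\sta,\stb,c,d_{1},d_{2}$ with $c$ going to $\sta$ or $\stb$ with probability $1/2$ each, $\sta\to c$ in one step and $\stb\to d_{1}\to d_{2}\to c$ in three steps: then $\alpha_{\sta}=\alpha_{\stb}=1/6$, but in $\redir{\matchine}{\sta}{\stb}$ the merged state has frequency $1/2\neq 1/3$ and $d_{1},d_{2}$ disappear; attaching shifts to the edges, one checks that your convex combination with weight $\lambda=\alpha_{\sta}/(\alpha_{\sta}+\alpha_{\stb})=1/2$ fails, whereas the identity does hold with weights $\alpha_{\sta}\expect(C_{\sta})$ and $\alpha_{\stb}\expect(C_{\stb})$, where $C_{\sta}$ (resp.\ $C_{\stb}$) is the length of an excursion starting at $\sta$ (resp.\ $\stb$) and ending at the next visit to $\{\sta,\stb\}$. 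That renewal--reward decomposition of the trajectory into excursions indexed by visits to $\{\sta,\stb\}$ is exactly what the paper does (its Case 4b), and it is the ingredient your argument is missing: your weights agree with the correct ones only when the mean excursion lengths from $\sta$ and $\stb$ coincide.

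A second gap is that working with stationary frequencies silently assumes the chain is irreducible. The hypotheses of the theorem do not give that: $M$ may have transient states and several recurrent classes, and $\sta$, $\stb$ may be transient, recurrent in different classes, or one of each. In those situations redirection changes the \emph{absorption probabilities} into the recurrent classes (it can even turn an entire recurrent class transient), so your ``degenerate case $\alpha_{\sta}+\alpha_{\stb}=0$, where all three speeds trivially coincide'' is not trivial at all -- when $\sta,\stb$ are transient the three speeds genuinely differ, and one needs a last-visit/first-passage decomposition of the hitting probabilities $\fh{}{\init}{\mathcal{C}_{m}}$ to show the original speed is again a convex mixture of the two redirected ones. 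The paper devotes its Cases 1 through 4a to precisely this bookkeeping before reaching the within-class renewal argument, and some version of it is unavoidable in a complete proof.
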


\begin{proof}
Under the assumptions of the theorem, the sequence $(\curState{\matchine}{t}{i})_{i}$ follows a Markov model $M = (\Minit, \Mtrans)$ and any transition from a state $r$ to a state $s$ (whatever the symbol read from the text) is associated to a unique shift which will be referred to as $\myshift(r,s)$. 

By defining the  set $\subinc{n}$ as  
\begin{dmath*}
\subinc{n} = \{v\hiderel{\in}\stateset^{*} \hiderel{\condi} \sum_{i=0}^{\size{v}-3}\myshift(v_{i},v_{i+1})+\kappa\hiderel{<} n \hiderel{\leq} \sum_{i=0}^{\size{v}-2}\myshift(v_{i},v_{i+1})+\kappa\},
\end{dmath*}
we have
\begin{dmath*}\label{eqDefAS0}
\as{\piid}{\matchine} = \lim_{n\to\infty}\sum_{t\in\alp^{n}} \frac{\size{t}}{\tac{\matchine}(t)}\prob_{\piid}(t)
= \lim_{n\to\infty}\sum_{v\in\subinc[\size{w}]{n}} \frac{\sum_{i=0}^{\size{v}-2}\myshift(v_{i},v_{i+1})}{\size{v}}\proba{M}(v).
\end{dmath*}

A similar argument as that of the proof of Lemma \ref{lemmaConv} shows that
\begin{dmath}\label{eqDefAS1}
\as{\piid}{\matchine} =  \lim_{k\to\infty}\sum_{v\in\stateset^{k}} \frac{\sum_{i=0}^{\size{v}-2}\myshift(v_{i},v_{i+1})}{\size{v}}\proba{M}(v).
\end{dmath}

Let now consider the Markov chain $V=(V_{i})_{i}$ where $V_{0}=\init$ and, for all $i\geq 0$, $\p\{V_{i+1} = s  \condi V_{i} = r\} = \Mtrans(r,s)$. The chain $V$ models the execution process of the generic algorithm on the input $(\matchine, t)$ with $t$ iid. Let us rewrite Equation \ref{eqDefAS1} as
\begin{dmath*}\label{eqDefAS2}
\as{\piid}{\matchine} =  \lim_{k\to\infty} \expect\left(\frac{\sum_{i=0}^{k-1} \myshift(V_{i}, V_{i+1})}{k}\right).
\end{dmath*}

The set of states of $V$ (or of any Markov chain) may be partitioned in a unique way into the class $\mathcal{T}$ of its transient states, which may be empty, and a positive number $c$ of non-empty recurrent classes (i.e. closed communicating classes) $\set{C}_{1}$, $\set{C}_{2}$, \ldots, $\set{C}_{c}$.

For all $1\leq m\leq c$, we define the Markov chains $V^{(m)}=(V^{(m)}_{i})_{i}$ where $V^{(m)}_{0}\in\set{C}_{m}$ and, for all $i\geq 0$, $\p\{V_{i+1} = s  \condi V_{i} = r\} = \Mtrans(r,s)$. All the chains $V^{(m)}$ are irreducible. The asymptotic speed of the class $\set{C}_{m}$ is noted $\asC{m}{\piid}{\matchine}$ and defined as
\begin{dmath*}\label{eqDefAS3}
\asC{m}{\piid}{\matchine} =  \lim_{k\to\infty} \expect\left(\frac{\sum_{i=0}^{k-1} \myshift(V^{(m)}_{i}, V^{(m)}_{i+1})}{k}\right).
\end{dmath*}

For two subsets $\set{E}$ and $\set{F}$ of $\stateset$ and $r$ a state of $\stateset_{\setminus \set{E}}$, let $\fhn{\set{E}}{r}{\set{F}}$ be the probability for the random variable $V_{n}$ to be in $\set{F}$ without visiting any state of $\set{E}\cup\set{F}$ from $1$ to $n-1$, being given that $V_{0} = r$, namely
\begin{dmath*}
	\fhn{\set{E}}{r}{\set{F}} \hiderel{=} \p\{V_{n} \in \set{F}\mbox{ and } V_{k}\not\in\set{E}\cup\set{F}\mbox{ for all } 0\hiderel{<}k\hiderel{<}n  \condi V_{0} \hiderel{=} r\}.
\end{dmath*}
We also define
\begin{dmath*}
	\fh{\set{E}}{r}{\set{F}} =  \sum_{n=1}^{\infty}\fhn{\set{E}}{r}{\set{F}}.
\end{dmath*}

Starting from $\init$ (or any state), the chain $V$ may visit some transient states but goes to one or another recurrent class in a time which is a.s. finite. Then, it stays in this recurrent class indefinitely.
By writing $\fh{}{r}{\set{F}}$ for $\fh{\emptyset}{r}{\set{F}}$, we have
\begin{dmath}\label{eqASMix}
	\as{\piid}{\matchine} = \sum_{m=1}^{c}\fh{}{\init}{\set{C}_{m}}\asC{m}{\piid}{\matchine}.
\end{dmath}

Since the chain $V$ ends up in a recurrent class with probability $1$, the law of total probability gives us that 
\begin{dmath*}\label{eqSumCoeff}
\sum_{m=1}^{c}\fh{}{\init}{\set{C}_{m}}=1.
\end{dmath*}

In order to prove the inequality of the theorem, we have to distinguish different cases according to which classes the states $\sta$ and $\stb$ belong.

\subsubsection*{Case 1 --  $\sta$ and $\stb$ are both transient}
Since $\sta$ and $\stb$ are reachable (from our implicit assumption), the state $\init$, which leads to $\sta$ and $\stb$, is transient as well. 

For all subsets $\set{S}\subset\stateset$ and all states $r\in\set{S}$, let $\fon{\set{S}}{r}$ denote the probability that $V_{n}=r$ is the last state of $\set{S}$ occurring in $V$, conditioned on  $V_{0}\in\set{S}$, namely
\begin{dmath*}
	\fon{\set{S}}{r} \hiderel{=} \p\{V_{n} \hiderel{=} r \mbox{ and } V_{k}\not\in\set{S}\mbox{ for all }k\hiderel{>}n  \condi V_{0}\in\set{S}\}.
\end{dmath*}
We also define
\begin{dmath*}
	\fo{\set{S}}{r} =  \sum_{n=1}^{\infty}\fon{\set{S}}{r}.
\end{dmath*}

Let us remark that if $\set{S}$ contains any recurrent state reachable from $r$ then both $\fon{\set{S}}{r}$ and $\fo{\set{S}}{r}$ are zero.

We have 
\begin{dmath}\label{eqGenC1}
\fh{}{\init}{\set{C}_{m}} = \fh{\{\sta,\stb\}}{\init}{\set{C}_{m}}+\fh{}{\init}{\{\sta,\stb\}}\left[\fo{\{\sta,\stb\}}{\sta}\fh{\{\sta,\stb\}}{\sta}{\set{C}_{m}}+\fo{\{\sta,\stb\}}{\stb}\fh{\{\sta,\stb\}}{\stb}{\set{C}_{m}}\right].
\end{dmath}

Substituting the coefficients of Equation \ref{eqGenC1} in Equation \ref{eqASMix}, gives us
\begin{dmath*}\label{eqGenC2}
\as{\piid}{\matchine} = \sum_{m=1}^{c}\left(\fh{\{\sta,\stb\}}{\init}{\set{C}_{m}}+\fh{}{\init}{\{\sta,\stb\}}\left[\fo{\{\sta,\stb\}}{\sta}\fh{\{\sta,\stb\}}{\sta}{\set{C}_{m}}+\fo{\{\sta,\stb\}}{\stb}\fh{\{\sta,\stb\}}{\stb}{\set{C}_{m}}\right]\right)\asC{m}{\piid}{\matchine}
= \sum_{m=1}^{c}\fh{\{\sta,\stb\}}{\init}{\set{C}_{m}}\asC{m}{\piid}{\matchine}+\fh{}{\init}{\{\sta,\stb\}}\left[\fo{\{\sta,\stb\}}{\sta}\sum_{m=1}^{c}\fh{\{\sta,\stb\}}{\sta}{\set{C}_{m}}\asC{m}{\piid}{\matchine}+\fo{\{\sta,\stb\}}{\stb}\sum_{m=1}^{c}\fh{\{\sta,\stb\}}{\stb}{\set{C}_{m}}\asC{m}{\piid}{\matchine}\right].
\end{dmath*}

Since both $\sta$ and $\stb$ are transient, there exists almost surely an integer $n$ such that neither $\sta$ nor $\stb$ occurs after $n$. Altogether with the law of total probability, this implies that 
\begin{dmath}\label{eqConvG}
\fo{\{\sta,\stb\}}{\sta}+\fo{\{\sta,\stb\}}{\stb} = 1.
\end{dmath}

Let now consider the matching machines $\redir{\matchine}{\sta}{\stb}$ and $\redir{\matchine}{\stb}{\sta}$. Under the assumptions of the theorem, the states of the generic algorithm follows the Markov models $\moda$ and $\modb$ on the inputs $(\redir{\matchine}{\sta}{\stb}, t)$ and $(\redir{\matchine}{\stb}{\sta}, t)$. Still from these assumptions, the models $\moda$
and $\modb$ differ with  $M$ only in the probability transitions ending on $\sta$ and on $\stb$ respectively. By putting $\typa{\fhsgn}$ and $\typb{\fhsgn}$ for the analogs of $\fhsgn$ with $\moda$
and $\modb$ respectively, we have, for all $1\leq m \leq c$,
\begin{itemize}
	\item $\asC{m}{\piid}{\redir{\matchine}{\sta}{\stb}} = \asC{m}{\piid}{\redir{\matchine}{\stb}{\sta}}=\asC{m}{\piid}{\matchine}$,
	\item $\fh{\{\sta,\stb\}}{\init}{\set{C}_{m}} = \fha{\sta}{\init}{\set{C}_{m}} =\fhb{\stb}{\init}{\set{C}_{m}}$,
	\item $\fh{}{\init}{\{\sta,\stb\}} = \fha{}{\init}{\sta} =\fhb{}{\init}{\stb}$,
	\item $\fh{\{\sta,\stb\}}{\sta}{\set{C}_{m}} = \fha{\sta}{\sta}{\set{C}_{m}}$,
	\item $\fh{\{\sta,\stb\}}{\sta}{\set{C}_{m}} = \fhb{\stb}{\stb}{\set{C}_{m}}$.
\end{itemize}

It follows that
\begin{dmath*}
\fha{}{\init}{\set{C}_{m}} = \fha{\sta}{\init}{\set{C}_{m}}+\fha{}{\init}{\sta}\fha{\sta}{\sta}{\set{C}_{m}}
=\fh{\{\sta,\stb\}}{\init}{\set{C}_{m}}+\fh{}{\init}{\{\sta,\stb\}}\fh{\{\sta,\stb\}}{\sta}{\set{C}_{m}}
\end{dmath*}
and
\begin{dmath*}
\fhb{}{\init}{\set{C}_{m}} = \fhb{\stb}{\init}{\set{C}_{m}}+\fhb{}{\init}{\stb}\fhb{\stb}{\stb}{\set{C}_{m}}
=\fh{\{\sta,\stb\}}{\init}{\set{C}_{m}}+\fh{}{\init}{\{\sta,\stb\}}\fh{\{\sta,\stb\}}{\stb}{\set{C}_{m}}.
\end{dmath*}

Considering Equation \ref{eqASMix} for the asymptotic speeds of $\redir{\matchine}{\sta}{\stb}$ and $\redir{\matchine}{\stb}{\sta}$ and the relations just above, leads to
\begin{dmath*}
\as{\piid}{\redir{\matchine}{\sta}{\stb}} = \sum_{m=1}^{c}\fha{}{\init}{\set{C}_{m}} \asC{m}{\piid}{\redir{\matchine}{\sta}{\stb}}
= \sum_{m=1}^{c}\fh{\{\sta,\stb\}}{\init}{\set{C}_{m}}\asC{m}{\piid}{\matchine}+\fh{}{\init}{\{\sta,\stb\}}\sum_{m=1}^{c}\fh{\{\sta,\stb\}}{\sta}{\set{C}_{m}}\asC{m}{\piid}{\matchine}
\end{dmath*}
and
\begin{dmath*}
\as{\piid}{\redir{\matchine}{\stb}{\sta}} = \sum_{m=1}^{c}\fhb{}{\init}{\set{C}_{m}} \asC{m}{\piid}{\redir{\matchine}{\stb}{\sta}}
= \sum_{m=1}^{c}\fh{\{\sta,\stb\}}{\init}{\set{C}_{m}}\asC{m}{\piid}{\matchine}+\fh{}{\init}{\{\sta,\stb\}}\sum_{m=1}^{c}\fh{\{\sta,\stb\}}{\stb}{\set{C}_{m}}\asC{m}{\piid}{\matchine}.
\end{dmath*}

Since a convex combination is smaller than the greatest of its elements, Equation \ref{eqConvG} gives us

\begin{dmath*}
\fo{\{\sta,\stb\}}{\sta}\sum_{m=1}^{c}\fh{\{\sta,\stb\}}{\sta}{\set{C}_{m}}\asC{m}{\piid}{\matchine}+\fo{\{\sta,\stb\}}{\stb}\sum_{m=1}^{c}\fh{\{\sta,\stb\}}{\stb}{\set{C}_{m}}\asC{m}{\piid}{\matchine}
\leq\max\left\{\sum_{m=1}^{c}\fh{\{\sta,\stb\}}{\sta}{\set{C}_{m}}\asC{m}{\piid}{\matchine}, \sum_{m=1}^{c}\fh{\{\sta,\stb\}}{\stb}{\set{C}_{m}}\asC{m}{\piid}{\matchine}\right\}
\end{dmath*}
and, finally,
\begin{dmath*}
\as{\piid}{\matchine}
\leq\max\left\{\as{\piid}{\redir{\matchine}{\sta}{\stb}}, \as{\piid}{\redir{\matchine}{\stb}{\sta}}\right\}.
\end{dmath*}

\subsubsection*{Case 2 -- $\sta$ is transient and $\stb$ is recurrent}
Let $\set{C}_{k}$ be the recurrent class to which $\stb$ belongs. We distinguish between two sub-cases according to whether or not $\set{C}_{k}$ is the only recurrent class reachable from $\stb$.
\subcase{Case 2a -- $\fh{}{\sta}{\set{C}_{k}}<1$}
It implies that $\sta$ leads to a recurrent class $\set{C}_{\ell}$ with $\ell\neq k$. Let us consider the Markov chain $\typa{V}=(\typa{V}_{i})_{i}$ where $\typa{V}_{0}=\init$ and, for all $i\geq 0$, $\p\{\typa{V}_{i+1} = s  \condi \typa{V}_{i} = r\} = \Mtrans[\moda](r,s)$. The set of states of  $\typa{V}$ is $\setMinus{\stateset}{\{\stb\}}$. Redirecting all the transitions that end with $\stb$, to $\sta$ makes all the states of $\setMinus{\set{C}_{k}}{\{\stb\}}$ transient in $\typa{V}$. In particular, the part of the asymptotic speed which comes from $\set{C}_{k}$ in $\as{\piid}{\matchine}$ just vanishes in $\as{\piid}{\redir{\matchine}{\sta}{\stb}}$.
For all $m\neq k$, the different ways of reaching $\set{C}_{m}$ from $\sta$ in the chain $\typa{V}$ may be split into the ways which follows a redirected transition and the ways which don't.
Under the theorem's assumptions, any path from $\sta$ to $\set{C}_{m}$ in $\typa{V}$ which contains no redirected transition has the same probability as in the chain $V$. Reciprocally, a path from $\sta$ to $\set{C}_{m}$ in $V$ contains no state of $\set{C}_{k}$, thus no transition which is redirected in $\typa{V}$. In other words, the probability of reaching $\set{C}_{m}$ in $\typa{V}$ without following a redirected transition being given that we start at $\sta$, is exactly $\fh{}{\sta}{\set{C}_{m}}$.

On the other hand, since $\stb\in\set{C}_{k}$ and $\set{C}_{k}$ is a recurrent class of $V$, the probability of following at least a redirected transition in a path being given that the path starts from $\sta$, is $\fh{}{\sta}{\set{C}_{k}}$. Moreover, since all the redirected transitions end at $\sta$, we have that

\begin{dmath*}
\fha{}{\sta}{\set{C}_{m}} = \fh{}{\sta}{\set{C}_{m}} + \fh{}{\sta}{\set{C}_{k}} \fha{}{\sta}{\set{C}_{m}} \condition{ thus }
\end{dmath*}
 \begin{dmath*}
\fha{}{\sta}{\set{C}_{m}} = \frac{\fh{}{\sta}{\set{C}_{m}}}{1-\fh{}{\sta}{\set{C}_{k}}} \hiderel{=} \frac{\fh{}{\sta}{\set{C}_{m}}}{\sum_{\substack{\ell=1\\ \ell\neq k}}^{c}\fh{}{\sta}{\set{C}_{m}}}\condition{for all $m\neq k$.}
\end{dmath*}

For all $1\leq m \leq c$ with $m\neq k$ and since  the paths of $\set{C}_{m}$ and those from $\init$ to $\sta$ or to a state of $\set{C}_{m}$, never visit $\stb$, we have 
\begin{itemize}
	\item $\asC{m}{\piid}{\redir{\matchine}{\sta}{\stb}} = \asC{m}{\piid}{\matchine}$,
	\item $\fha{\sta}{\init}{\set{C}_{m}} = \fh{\sta}{\init}{\set{C}_{m}}$,
	\item $\fha{}{\init}{\sta} = \fh{}{\init}{\sta}$.
\end{itemize}

Conversely, redirecting all the transitions that end with $\sta$, to $\stb$ increases the part of the asymptotic speed which comes from $\set{C}_{k}$.
We have, for all $1\leq m \leq c$,
\begin{itemize}
	\item $\asC{m}{\piid}{\redir{\matchine}{\stb}{\sta}}=\asC{m}{\piid}{\matchine}$,
	\item $\fhb{}{\init}{\set{C}_{m}} = \fh{}{\init}{\set{C}_{m}}-\fh{\sta}{\init}{\set{C}_{m}}$ if $m\neq k$,
	\item $\fhb{}{\init}{\set{C}_{k}}=\fh{}{\init}{\set{C}_{k}}+\sum_{\substack{\ell=1\\ \ell\neq k}}^{c}\fh{}{\sta}{\set{C}_{m}}$.
\end{itemize}

For all $1\leq m \leq c$, we have 
\begin{dmath*}
\fh{}{\init}{\set{C}_{m}} = \fh{\sta}{\init}{\set{C}_{m}}+\fh{}{\init}{\sta}\fh{}{\sta}{\set{C}_{m}}
\end{dmath*}
and, for $1\leq m \leq c$ with $m\neq k$,
\begin{dmath*}
\fha{}{\init}{\set{C}_{m}} = \fha{\sta}{\init}{\set{C}_{m}}+\fha{}{\init}{\sta}\fha{}{\sta}{\set{C}_{m}}.
\end{dmath*}

With Equation \ref{eqASMix}, it implies that
\begin{dgroup*}
\begin{dmath}\label{eqCas2b1}
\as{\piid}{\matchine} = \sum_{m=1}^{c}\fh{\sta}{\init}{\set{C}_{m}} \asC{m}{\piid}{\matchine}+\fh{}{\init}{\sta}\sum_{m=1}^{c}\fh{}{\sta}{\set{C}_{m}}\asC{m}{\piid}{\matchine},
\end{dmath}
\begin{dmath}\label{eqCas2b2}
\as{\piid}{\redir{\matchine}{\sta}{\stb}} = \sum_{m=1}^{c}\fh{\sta}{\init}{\set{C}_{m}} \asC{m}{\piid}{\matchine}+\fh{}{\init}{\sta}\frac{\sum_{\substack{\ell=1\\ \ell\neq k}}^{c}\fh{}{\sta}{\set{C}_{m}}\asC{m}{\piid}{\matchine}}{\sum_{\substack{\ell=1\\ \ell\neq k}}^{c}\fh{}{\sta}{\set{C}_{m}}}, 
\end{dmath}
\begin{dmath}\label{eqCas2b3}
\as{\piid}{\redir{\matchine}{\stb}{\sta}} = \sum_{m=1}^{c}\fh{\sta}{\init}{\set{C}_{m}} \asC{m}{\piid}{\matchine}+\fh{}{\init}{\sta}\asC{k}{\piid}{\matchine}.
\end{dmath}
\end{dgroup*}

We recall that $\sum_{m=1}^{c}\fh{}{\sta}{\set{C}_{m}} = 1$. From Equations \ref{eqCas2b1}, \ref{eqCas2b2} and \ref{eqCas2b3}, we get that
\begin{itemize}
	\item $\as{\piid}{\matchine}\geq \as{\piid}{\redir{\matchine}{\sta}{\stb}}$ if and only if $\asC{k}{\piid}{\matchine}\leq\frac{\sum_{\substack{\ell=1\\ \ell\neq k}}^{c}\fh{}{\sta}{\set{C}_{m}}\asC{m}{\piid}{\matchine}}{\sum_{\substack{\ell=1\\ \ell\neq k}}^{c}\fh{}{\sta}{\set{C}_{m}}}$,
	\item $\as{\piid}{\matchine}\leq \as{\piid}{\redir{\matchine}{\stb}{\sta}}$ if and only if $\asC{k}{\piid}{\matchine}\geq\frac{\sum_{\substack{\ell=1\\ \ell\neq k}}^{c}\fh{}{\sta}{\set{C}_{m}}\asC{m}{\piid}{\matchine}}{\sum_{\substack{\ell=1\\ \ell\neq k}}^{c}\fh{}{\sta}{\set{C}_{m}}}$.
\end{itemize}
In all cases, we have 
\begin{dmath*}
\as{\piid}{\matchine}\leq\max \{\as{\piid}{\redir{\matchine}{\sta}{\stb}}, \as{\piid}{\redir{\matchine}{\stb}{\sta}}\}.
\end{dmath*}

\subcase{Case 2b -- $\fh{}{\sta}{\set{C}_{k}}=1$}
It means that $\sta$ leads only to the recurrent class $\set{C}_{k}$. Redirecting all the transitions that end with $\stb$, to $\sta$ just replaces the recurrent class $\set{C}_{k}$ of $\mm$, by the recurrent class $\typa{\set{C}}_{k}$ of $\moda$, in which $\stb$ plays the role of $\sta$. Let $\asC{k}{\piid}{\redir{\matchine}{\sta}{\stb}}$ be the asymptotic speed of $\typa{\set{C}}_{k}$. We remark, first, that $\fha{}{\init}{\typa{\set{C}}_{k}} = \fh{}{\init}{\set{C}_{k}}$ and, second, that, for all $m\neq k$, $\asC{m}{\piid}{\redir{\matchine}{\sta}{\stb}} = \asC{m}{\piid}{\matchine}$.

Conversely, redirecting all the transitions that end with $\sta$, to $\stb$ does not change any recurrent class between $M$ and $\modb$. Moreover we have $\fha{}{\init}{\set{C}_{k}} = \fh{}{\init}{\set{C}_{k}}$ and, more generally, $\fha{}{\init}{\set{C}_{m}} = \fh{}{\init}{\set{C}_{m}}$ for all $1\leq m \leq c$. With Equation \ref{eqASMix}, we get that $\as{\piid}{\matchine}=\as{\piid}{\redir{\matchine}{\stb}{\sta}}$.

In short, we have $\as{\piid}{\redir{\matchine}{\sta}{\stb}}\geq\as{\piid}{\matchine}=\as{\piid}{\redir{\matchine}{\sta}{\stb}}$ if and only if $\asC{k}{\piid}{\redir{\matchine}{\sta}{\stb}}\geq\asC{k}{\piid}{\matchine}$, which leads to 
\begin{dmath*}
\as{\piid}{\matchine}
\leq\max\left\{\as{\piid}{\redir{\matchine}{\sta}{\stb}}, \as{\piid}{\redir{\matchine}{\stb}{\sta}}\right\}.
\end{dmath*}
\subsubsection*{Case 3 -- $\sta$ is recurrent and $\stb$ is transient}
This case is perfectly symmetrical with Case 2.

\subsubsection*{Case 4 -- $\sta$ and $\stb$ are both recurrent}
We have to distinguish between two sub-cases according to whether $\sta$ and $\stb$ are in the same recurrent class.
\subcase{Case 4a -- $\sta$ and $\stb$ are in two different recurrent classes} Let $\set{C}_{k}$ be the class of $\sta$ and $\set{C}_{\ell}$ be the class of $\stb$. Redirecting all the transitions that end with $\stb$, to $\sta$ makes all the states of $\set{C}_{\ell}$ transient (i.e. $\set{C}_{\ell}$ is not a recurrent class of $\moda$). Moreover, since all the states of $\set{C}_{\ell}$ lead to $\sta\in\set{C}_{k}$ in $\moda$, we have $\fha{}{\init}{\set{C}_{k}} = \fh{}{\init}{\set{C}_{k}}+\fh{}{\init}{\set{C}_{\ell}}$ and $\fha{}{\init}{\set{C}_{m}} = \fh{}{\init}{\set{C}_{m}}$ for all $m$ different from both $k$ and $\ell$. Redirecting all the transitions that end with $\sta$, to $\stb$ leads to symmetrical considerations. It follows that we have
\begin{itemize}
	\item $\as{\piid}{\redir{\matchine}{\sta}{\stb}}\geq\as{\piid}{\matchine}$ if and only if $\asC{k}{\piid}{\matchine}\geq\asC{\ell}{\piid}{\matchine}$,
	\item $\as{\piid}{\redir{\matchine}{\stb}{\sta}}\geq\as{\piid}{\matchine}$ if and only if $\asC{k}{\piid}{\matchine}\leq\asC{\ell}{\piid}{\matchine}$. 
\end{itemize}
We get again
\begin{dmath*}
\as{\piid}{\matchine}\leq\max \{\as{\piid}{\redir{\matchine}{\sta}{\stb}}, \as{\piid}{\redir{\matchine}{\stb}{\sta}}\}.
\end{dmath*}

\subcase{Case 4b -- $\sta$ and $\stb$ belong to a same recurrent class $\set{C}_{k}$}
 
Redirecting transitions toward $\sta$ or $\stb$ does not change neither the asymptotic speeds of the recurrent classes $(\set{C}_{m})_{m\neq k}$, nor the probabilities to end up in one of these classes from $\init$. We start by focusing on the class $\set{C}_{k}$. 

Since $V^{(k)}$ is irreducible, assuming that $V^{(k)}_{0} = \sta$ is convenient and does not influence $\asC{k}{\piid}{\matchine}$.

Let us define $\rvp{n}$ as the position of the $n^{\mbox{\tiny th}}$ occurrence of $\sta$ or $\stb$ in $V^{(k)}$. Namely $(\rvp{n})_{n}$ is such that $\rvp{0}=0$ (since we assume $V^{(k)}_{0} = \sta$) and for all $n\geq 0$,
\begin{itemize}
\item $V^{(k)}_{\rvp{n}}\in\{\sta, \stb\}$,
\item for all $\rvp{n}< i < \rvp{n+1}$, $V^{(k)}_{i}\not\in\{\sta, \stb\}$.
\end{itemize}
Let $\rvia{i}$ (resp. $\rvib{i}$) be such that $\rvp{\rvia{i}}$ (resp. $\rvp{\rvib{i}}$) is the position of the $i^{\mbox{\tiny th}}$ occurrence of $\sta$ (resp. of $\stb$) in  $V^{(k)}$. 
For all positions $i$, we put $\rvnia{i}$ (resp. $\rvnib{i}$) for the number of occurrences of $\sta$ (resp. of $\stb$) in $V^{(k)}_{0}, \ldots, V^{(k)}_{i}$. We set $\rvn{i}=\rvnia{i}+\rvnib{i}$ and we define the binary random variables $\rvea{i}$ and  $\rveb{i}$ as 
\begin{dmath*}
\rvea{i} \hiderel{=} \left\{\begin{array}{ll} 1 &\mbox{if } \rvia{\rvnia{i}}>\rvib{\rvnib{i}},\\ 0 & \mbox{otherwise,}\end{array}\right. \mbox{ and } \rveb{i} \hiderel{=} \left\{\begin{array}{ll} 1 &\mbox{if } \rvia{\rvnia{i}}<\rvib{\rvnib{i}},\\ 0 & \mbox{otherwise.}\end{array}\right.
\end{dmath*}

By setting $\rvr{i} = \sum_{j=\rvp{i}}^{\rvp{i+1}-1} \myshift(V^{(k)}_{j}, V^{(k)}_{j+1})$ we get, for all integers $n$,
\begin{dmath*}
\frac{\sum_{i=0}^{\rvn{n}-1}\rvr{i}}{n} \hiderel{\leq} \frac{\sum_{j=0}^{n}\myshift(V^{(k)}_{j}, V^{(k)}_{j+1})}{n} \hiderel{\leq} \frac{\sum_{i=0}^{\rvn{n}}\rvr{i}}{n}.
\end{dmath*}

Decomposing the sums above leads to
\begin{dmath*}
\frac{\sum_{i=0}^{\rvnia{n}-\rvea{n}}\rvr{\rvia{i}}}{n} +\frac{\sum_{i=0}^{\rvnib{n}-\rveb{n}}\rvr{\rvib{i}}}{n}\hiderel{\leq} \frac{\sum_{j=0}^{n}\myshift(V^{(k)}_{j}, V^{(k)}_{j+1})}{n} \hiderel{\leq} \frac{\sum_{i=0}^{\rvnia{n}}\rvr{\rvia{i}}}{n} +\frac{\sum_{i=0}^{\rvnib{n}}\rvr{\rvib{i}}}{n}.
\end{dmath*}

Let now consider $\redir{\matchine}{\sta}{\stb}$ and the corresponding Markov model $\moda$. We define the Markov chain $\rva=(\rva_{i})_{i}$ with $\rva_{0}=\init$ and, for all $i\geq 0$, $\p\{\rva_{i+1} = s  \condi \rva_{i} = r\} = \Mtrans[\moda](r,s)$. 
By construction, the chain $\rva$ contains all the recurrent classes $(\set{C}_{m})_{m\neq k}$. Moreover, since $\sta$ and $\stb$ are both in a recurrent class of $V$, all the states which were transient with $V$ are still transient in $\rva$. In particular, for all $m\neq k$, no path from $\init$ to a recurrent class $\set{C}_{m}$ contains a redirected transition. We have 
\begin{dmath*}
\fh{}{\init}{\set{C}_{m}} = \fh{\{\sta, \stb\}}{\init}{\set{C}_{m}} = \fha{\{\sta, \stb\}}{\init}{\set{C}_{m}} = \fha{}{\init}{\set{C}_{m}}.
\end{dmath*}

Since all the states that lead to $\sta$ in the chain $V$, still lead to $\sta$ in the chain  $\rva$, $\rva$ contains a recurrent class $\typa{\set{C}}_{k}$ to which $\sta$ belongs.
Let $q\neq\sta$ be a state of $\set{C}_{k}$. Several possibilities arise:
\begin{itemize}
	\item if $q$ is reachable from $\sta$ in $\rva$ then $q\in\typa{\set{C}}_{k}$;
	\item if $q$ is reachable from $\init$ but not from $\sta$ then $q$ is transient in $\rva$;
	\item if $q$ is not reachable from $\init$ then it is not a state of $\rva$.
\end{itemize}
In short, the chain $\rva$ contains all the recurrent classes $(\set{C}_{m})_{m\neq k}$, a non-empty recurrent class  $\typa{\set{C}}_{k}\subset\set{C}_{k}$, a set of transient states which contains that of $V$. We have
\begin{dmath*}
\fha{}{\init}{\typa{\set{C}}_{k}} = \fh{}{\init}{\set{C}_{k}}.
\end{dmath*}
By defining, for all $i\geq 0$, 
\begin{itemize}
\item $\rvpa{i}$ as the position of the $i^{\mbox{\tiny th}}$ occurrence of $\sta$ in $\rva^{(k)}$,  
\item $\rvnia{i}$ as the number of occurrences of $\sta$ in $\rva^{(k)}_{0}, \ldots, \rva^{(k)}_{i}$, 
\item $\rvra{i}$ as $\rvra{i} = \sum_{j=\rvpa{i}}^{\rvpa{i+1}-1} \myshift(\rva^{(k)}_{j}, \rva^{(k)}_{j+1})$, 
\end{itemize}
we have for all $n>0$,

\begin{dmath*}
\frac{\sum_{i=0}^{\rvna{n}-1}\rvra{i}}{n} \hiderel{\leq} \frac{\sum_{j=0}^{n}\myshift(\rva^{(k)}_{j}, \rva^{(k)}_{j+1})}{n} \hiderel{\leq} \frac{\sum_{i=0}^{\rvna{n}}\rvra{i}}{n}.
\end{dmath*}

By setting $\rvca{i} = \rvpa{i+1}-\rvpa{i}$ for all $i\geq 0$, we have
\begin{dmath*}
\frac{\sum_{i=0}^{\rvna{n}-1}\rvca{i}}{\rvna{n}}\hiderel{\leq} \frac{n}{\rvna{n}}\hiderel{\leq}\frac{\sum_{i=0}^{\rvna{n}}\rvca{i}}{\rvna{n}}.
\end{dmath*}

The argument is essentially the same as for the proof of the \emph{renewal reward theorem}. All the $\rvca{i}$ are independent and identically distributed (the Markov chain $\rva^{(k)}$ is homogeneous and $\rva^{(k)}_{\rvpa{i}}=\sta$ for all $i$). We put $\expect(\rvca{})$ for their expectation. Moreover, the chain $\rva^{(k)}$ is irreducible an contains a finite number of states, which are thus all positive recurrent. In particular, the mean recurrence time for $\sta$ is finite. Since, whatever $i$, the random variable $\rvca{i}$ accounts for the recurrence time of $\sta$, the expectation $\expect(\rvca{})$ is finite, which implies that $\lim_{n\to\infty}\rvna{n} = \infty$.
The strong law of large number gives us that 
\begin{dmath*}
\lim_{n\to\infty} \frac{n}{\rvna{n}} = \expect(\rvca{})\condition[]{a.s.}
\end{dmath*}

In the same way, the random variables $\rvra{i}$ are independent and identically distributed. Moreover, since $\expect(\rvca{})$ is finite and $\myshift$ is bounded, the identically distributed random variables $\rvra{i}$ have a finite expectation $\expect(\rvra{})$. Applying again the strong law of large number leads to
\begin{dmath*}
\lim_{n\to\infty}\frac{\sum_{i=0}^{\rvna{n}}\rvra{i}}{n} = \lim_{n\to\infty}\frac{\sum_{i=0}^{\rvna{n}}\rvra{i}}{\rvna{n}}\times\frac{\rvna{n}}{n}
=\frac{\expect(\rvra{})}{\expect(\rvca{})}\condition[]{a.s.}
\end{dmath*}

From the bounded convergence theorem, we get that
\begin{dmath*}
\asC{k}{\piid}{\redir{\matchine}{\sta}{\stb}} =\lim_{n\to\infty} \expect\left(\frac{\sum_{j=0}^{n}\myshift(\rva^{(k)}_{j}, \rva^{(k)}_{j+1})}{n}\right)
 = \lim_{n\to\infty} \expect\left(\frac{\sum_{i=0}^{\rvna{n}}\rvra{i}}{n}\right) \hiderel{=}  \expect\left(\lim_{n\to\infty}\frac{\sum_{i=0}^{\rvna{n}}\rvra{i}}{n}\right)
 = \frac{\expect(\rvra{})}{\expect(\rvca{})}.
\end{dmath*}

The random variables $\rvc{\rvia{i}}$ are independent and identically distributed (with the same argument as above). Moreover, by construction, they follow the same distribution as the random variables $\rvca{i}$. Since all the transitions that go to $\stb$ in $M$, go to $\sta$ in $\moda$, starting with $\sta$ and ending at the first $\sta$ or $\stb$ in the chain $V^{(k)}$ is the same as starting with $\sta$ and ending at the first $\sta$ in $\rva^{(k)}$. The random variables $\rvc{\rvia{i}}$ have expectation $\expect(\rvca{})$.
In the same way, the random variables $(\rvr{\rvia{i}})_{{i}}$ are independent, identically distributed and follow the same distribution as the variables $(\rvra{i})_{i}$, thus have expectation $\expect(\rvra{})$.
The strong law of large numbers gives us
\begin{dgroup*}
\begin{dmath*}
 \lim_{n\to\infty}\frac{\sum_{i=0}^{\rvnia{n}}\rvr{\rvia{i}}}{\rvnia{n}} = \expect(\rvra{})\condition{a.s.,}
\end{dmath*}
\begin{dmath*}
 \lim_{n\to\infty}\frac{\sum_{i=0}^{\rvnia{n}}\rvc{\rvia{i}}}{\rvnia{n}} = \expect(\rvca{})\condition{a.s.}
\end{dmath*}
\end{dgroup*}

Moreover since the chain $V^{(k)}$ is irreducible, we have
\begin{dmath*}
 \lim_{n\to\infty}\frac{\rvnia{n}}{n} = \stfreqa\condition{a.s.,}
\end{dmath*}
where $\stfreqa$ is the probability of $\sta$ in the stationary distribution of $V^{(k)}$.

\begin{dmath*}
 \lim_{n\to\infty}\frac{\sum_{i=0}^{\rvnia{n}}\rvr{\rvia{i}}}{n} = \lim_{n\to\infty}\frac{\sum_{i=0}^{\rvnia{n}}\rvr{\rvia{i}}}{\rvnia{n}}\times\frac{\rvnia{n}}{n}
 =\expect(\rvra{})\stfreqa\condition{a.s.}
\end{dmath*}

From the bounded convergence theorem, we get that
\begin{dmath*}
 \lim_{n\to\infty}\expect\left(\frac{\sum_{i=0}^{\rvnia{n}}\rvr{\rvia{i}}}{n}\right) = \expect(\rvra{})\stfreqa
=\frac{\expect(\rvra{})}{\expect(\rvca{})}\expect(\rvca{})\stfreqa{}
 =\asC{k}{\piid}{\redir{\matchine}{\sta}{\stb}}\expect(\rvca{})\stfreqa{}.
\end{dmath*}

Symmetrically, we have
\begin{dmath*}
 \lim_{n\to\infty}\expect\left(\frac{\sum_{i=0}^{\rvnib{n}}\rvr{\rvib{i}}}{n}\right) =\asC{k}{\piid}{\redir{\matchine}{\stb}{\sta}}\expect(\rvcb{})\stfreqb{}.
\end{dmath*}

In order to prove that $\expect(\rvca{})\stfreqa+\expect(\rvcb{})\stfreqb{} = 1$, let us define the random variables $\rvc{i} = \rvp{i+1}-\rvp{i}$. 
We have 
\begin{dmath*}
\frac{\sum_{i=0}^{\rvn{n}}\rvc{i}}{n}= \frac{\sum_{i=0}^{\rvnia{n}}\rvc{\rvia{i}}}{n} + \frac{\sum_{i=0}^{\rvnib{n}}\rvc{\rvib{i}}}{n} 
 = \frac{\sum_{i=0}^{\rvnia{n}}\rvc{\rvia{i}}}{\rvnia{n}}\times\frac{\rvnia{n}}{n}+\frac{\sum_{i=0}^{\rvnib{n}}\rvc{\rvib{i}}}{\rvnib{n}}\times\frac{\rvnib{n}}{n}.
\end{dmath*}

Since the expectation of $\rvc{i}$ is smaller than the expected return times of  the positive recurrent states $\sta$ and $\stb$, it is finite. Since moreover
\begin{dmath*}
\frac{\sum_{i=0}^{\rvn{n}-1}\rvc{i}}{n}\hiderel{\leq} n \hiderel{\leq} \frac{\sum_{i=0}^{\rvn{n}}\rvc{i}}{n},
\end{dmath*}
we have
\begin{dmath*}
  \lim_{n\to\infty}\expect\left(\frac{\sum_{i=0}^{\rvn{n}}\rvc{i}}{n}\right) = 1
 = \lim_{n\to\infty}\expect\left(\frac{\sum_{i=0}^{\rvnia{n}}\rvc{\rvia{i}}}{\rvnia{n}}\times\frac{\rvnia{n}}{n}\right)+\lim_{n\to\infty}\expect\left(\frac{\sum_{i=0}^{\rvnib{n}}\rvc{\rvib{i}}}{\rvnib{n}}\times\frac{\rvnib{n}}{n}\right)
 = \expect(\rvca{})\stfreqa+\expect(\rvcb{})\stfreqb{}.
\end{dmath*}

The asymptotic speed of the recurrent class $\set{C}_{k}$ may be written as
\begin{dmath*}
\asC{k}{\piid}{\matchine} = \lim_{n\to\infty}\expect\left(\frac{\sum_{j=0}^{n}\myshift(V^{(k)}_{j}, V^{(k)}_{j+1})}{n}\right)
=\lim_{n\to\infty}\expect\left(\frac{\sum_{i=0}^{\rvnia{n}}\rvr{\rvia{i}}}{n} +\frac{\sum_{i=0}^{\rvnib{n}}\rvr{\rvib{i}}}{n}\right)
=\asC{k}{\piid}{\redir{\matchine}{\sta}{\stb}}\expect(\rvca{})\stfreqa{}+\asC{k}{\piid}{\redir{\matchine}{\stb}{\sta}}\expect(\rvcb{})\stfreqb.
\end{dmath*}
As a convex combination of $\asC{k}{\piid}{\redir{\matchine}{\sta}{\stb}}$ and $\asC{k}{\piid}{\redir{\matchine}{\stb}{\sta}}$, $\asC{k}{\piid}{\matchine}$ is smaller than their maximum. This last case leads again to 
\begin{dmath*}
\as{\piid}{\matchine}\leq\max \{\as{\piid}{\redir{\matchine}{\sta}{\stb}}, \as{\piid}{\redir{\matchine}{\stb}{\sta}}\}
\end{dmath*}
and ends the proof.
\end{proof}

\begin{corollary}\label{coroDup}
	Let $\piid$ be an iid model, $\matchine$ be a standard $w$-matching machine. If the states $\sta$ and $\stb$ are such that $\mem{\sta} = \mem{\stb}$ then we have 
	$$\as{\piid}{\matchine}\leq\max \{\as{\piid}{\redir{\matchine}{\sta}{\stb}}, \as{\piid}{\redir{\matchine}{\stb}{\sta}}\}$$
\end{corollary}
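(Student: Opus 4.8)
The plan is to read the statement off from Theorem~\ref{theoRedir}: the entire job is to verify that the hypothesis $\mem{\sta}=\mem{\stb}$ on a standard $\matchine$ supplies the bulleted assumptions of that theorem, after which the inequality is immediate.

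I would first dispose of the structural hypotheses. Since asymptotic speeds are only defined for valid machines, take $\matchine$ valid and standard; then Lemma~\ref{lemmaRedir} shows that $\mem{\sta}=\mem{\stb}$ makes both $\redir{\matchine}{\sta}{\stb}$ and $\redir{\matchine}{\stb}{\sta}$ valid and standard as well. The first bullet of Theorem~\ref{theoRedir}, that $\trans(r,x)=\trans(r,y)$ forces $\shift(r,x)=\shift(r,y)$ for all $r,x,y$, is then exactly Lemma~\ref{lemmaStandard}.

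Next I would identify the three Markov models. Because $\matchine$, $\redir{\matchine}{\sta}{\stb}$ and $\redir{\matchine}{\stb}{\sta}$ are all standard and $t$ is iid, Theorem~\ref{theoiid} applies to each and gives that each of the three state sequences follows a Markov model whose initial law is the point mass at that machine's initial state and whose $q\to q'$ transition probability is $\big(\sum_{x:\,\trans(q,x)=q'}\piid(x)\big)/\big(\sum_{x:\,\trans(q,x)\neq\sink}\piid(x)\big)$, computed with the transition function of the machine in question. It then suffices to compare the kernel of $\redir{\matchine}{\sta}{\stb}$ with that of $\matchine$. Its transition function $\trans'$ agrees with $\trans$ except that every target equal to $\stb$ is rerouted to $\sta$; since $\stb\neq\sink$ (the sink is never the current state of a valid machine), this rerouting leaves $\{x:\trans(\cdot,x)=\sink\}$ unchanged---hence the normalizing denominator is unchanged---and leaves $\{x:\trans(\cdot,x)=s\}$ unchanged for every state $s\neq\sta$ (automatically $s\neq\stb$, as $\stb$ is no longer a state), while for $s=\sta$ it replaces this set by $\{x:\trans(\cdot,x)=\sta\}\cup\{x:\trans(\cdot,x)=\stb\}$. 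Feeding these identities into the ratio formula yields $\Mtrans[\moda](r,s)=\Mtrans(r,s)$ for $s\neq\sta$ and $\Mtrans[\moda](r,\sta)=\Mtrans(r,\sta)+\Mtrans(r,\stb)$; and the initial state $\init'$ of $\redir{\matchine}{\sta}{\stb}$ is $\sta$ when $\stb=\init$ and $\init$ otherwise, which matches the prescribed $\Minit[\moda]$. The machine $\redir{\matchine}{\stb}{\sta}$ is handled by exchanging the roles of $\sta$ and $\stb$. All the hypotheses of Theorem~\ref{theoRedir} now hold, and that theorem delivers $\as{\piid}{\matchine}\leq\max\{\as{\piid}{\redir{\matchine}{\sta}{\stb}},\as{\piid}{\redir{\matchine}{\stb}{\sta}}\}$.

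The reduction is essentially bookkeeping, so I do not anticipate a genuine obstacle; the one point that deserves care is checking that redirecting transitions neither creates nor destroys a transition to $\sink$, so that $\moda$ differs from $M$ (and $\modb$ from $M$) only in the single column indexed by $\sta$ (resp.\ $\stb$)---exactly the shape Theorem~\ref{theoRedir} requires---together with the small initial-state edge case.
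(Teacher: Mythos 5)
Your proposal is correct and follows essentially the same route as the paper's proof: Lemma \ref{lemmaRedir} for standardness (and validity) of the redirected machines, Lemma \ref{lemmaStandard} for the shift condition, Theorem \ref{theoiid} for the three Markov models, and then Theorem \ref{theoRedir}. The only difference is that you spell out the kernel comparison (unchanged sink-column, merged column at $\sta$, initial-state edge case) which the paper dismisses with ``by construction''; this is a welcome but not substantively different elaboration.
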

\begin{proof}
With Lemma \ref{lemmaStandard} and Theorem \ref{theoiid}, the sequences of states of an execution of $\matchine$ follows a Markov model which satisfies the assumptions of Theorem \ref{theoRedir}. 
From Lemma \ref{lemmaRedir},  $\redir{\matchine}{\sta}{\stb}$ and $\redir{\matchine}{\stb}{\sta}$ are still standard. Again with Theorem \ref{theoiid}, the corresponding sequences of states follows two Markov models $\moda$ and $\modb$, respectively, which, by construction, satisfy the assumptions of Theorem \ref{theoRedir}.
\end{proof}

\subsection{Minimal shift to a match - relevant states}

Let $\matchine = (\stateset, \init, \prematchset, \next, \trans, \shift)$ be a $w$-matching machine. A state $q\in\stateset$ is \defi{relevant} if it leads to a match transition reporting its current position, namely, if there exist a text $t$ and two indexes $i<j$ such that $\curState{\matchine}{t}{i} = q$, $\curState{\matchine}{t}{j}\in\prematchset$, $t_{\curPos{\matchine}{t}{j}+\next(\curState{\matchine}{t}{j})} = w_{\curState{\matchine}{t}{j}}$ and $\curShift{\matchine}{t}{k} = 0$ for all $i\leq k<j$. 
Under the implicit assumptions on matching machines (end of Section \ref{secGen}), all the pre-match states are relevant.

For all states $q\in\stateset$, we recursively define $\mnshft(q)$ as:
\begin{dmath*}
\mnshft(q) = \left\{\begin{array}{ll}
0 & \mbox{if }q\in\prematchset,\\
\min_{x\in\alp} \{\mnshft(\trans(q,x))+\shift(q,x)\} & \mbox{otherwise.}
\end{array}\right.
\end{dmath*}

\begin{remark}
If $\matchine$ is valid, then a state $q$ is relevant if and only if $\mnshft(q) = 0$.
\end{remark}

Let $\expan{\matchine}$ be the full memory expansion of $\matchine$ (Section \ref{secFull}). For all states $q$ of $\matchine$, we define $\exst(q)$ as the set comprising all the elements of $\readset{\ordmatchine}$ associated with $q$ in $\expan{\matchine}$, namely, $\exst(q) = \{H \condi(q,H)\in\expan{\stateset}\}$. If $\matchine$ is standard then for all states $q$, $\exst(q)$ is a singleton. 
A state $q$ of $\matchine$ is said \defi{consistent}, if all pairs $(H,H')$of elements of $\exst(q)$ verify the following properties
\begin{enumerate}
\item $\frst(H)=\frst(H')$,
\item for all $(i, x)\in H$, $i\geq\mnshft(q)\Rightarrow (i, x)\in H'$,
\end{enumerate}
where $\frst(H)$ is the set of position entries of the elements of $H$ (see Section \ref{secFull}).
In particular, all the states of a standard $w$-matching machine are consistent.

Two states $q$ and $q'$ are \defi{interchangeable} if they are both consistent and such that all pairs $(H, H')$ with $H\in\exst(q)$ and $H'\in\exst(q')$ verify the two properties just above.

\begin{lemma}\label{remMnsft}
Let $\matchine$ be a non-redundant $w$-matching machine in which all the states are consistent, and  $\sta$ and $\stb$ be two interchangeable states of $\matchine$.
\begin{enumerate}
	\item All the states of $\redir{\matchine}{\sta}{\stb}$ (resp. of $\redir{\matchine}{\stb}{\sta}$) are consistent.
	\item If $\matchine$ is valid then both $\redir{\matchine}{\sta}{\stb}$ and $\redir{\matchine}{\stb}{\sta}$ are valid.
	\item If, moreover, for all states $r$ and all symbols $x$ and $y$, $\trans(r,x) = \trans(r,y) \Rightarrow \shift(r,x) = \shift(r,y)$ then, for all iid model $\piid$, we have \begin{dmath*}\as{\piid}{\matchine}\leq\max \{\as{\piid}{\redir{\matchine}{\sta}{\stb}}, \as{\piid}{\redir{\matchine}{\stb}{\sta}}\}.\end{dmath*}
\end{enumerate}
\end{lemma}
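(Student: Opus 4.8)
The plan is to adapt, in one package, the proofs of Lemmas~\ref{lemmaRedir} and~\ref{lemmaComp} and of Corollary~\ref{coroDup}, the point being that interchangeability is exactly the hypothesis that lets those arguments go through for non-redundant machines whose states are all consistent (in place of ``standard'' with $\mem{\sta}=\mem{\stb}$). I treat $\redir{\matchine}{\sta}{\stb}$ throughout; $\redir{\matchine}{\stb}{\sta}$ is handled by exchanging the roles of $\sta$ and $\stb$. The backbone is the bookkeeping of Lemma~\ref{lemmaRedir}: for a text $t$, at the first index $k$ where the runs of $\matchine$ and of $\redir{\matchine}{\sta}{\stb}$ on $t$ split, one has $\curState{\redir{\matchine}{\sta}{\stb}}{t}{k}=\sta$, $\curState{\matchine}{t}{k}=\stb$, with positions and shifts agreeing up to $k$; using the reachability of $\sta$ in $\matchine$ and the equality $\frst(H)=\frst(H')$ for all $H\in\exst(\sta)$, $H'\in\exst(\stb)$ (property~1 of interchangeability), one glues a text $v$ so that the run of $\redir{\matchine}{\sta}{\stb}$ on $t$ from $k$ agrees with the run of $\matchine$ on $v$ from the index at which $\sta$ is reached --- in states, shifts, relative accessed positions, and read symbols \emph{at positions at least $\mnshft(\sta)$} --- the construction being iterated at the later indices where the two runs split again.

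The first real step is to record how $\exst$, $\frst$ and $\mnshft$ transform under the redirection. The full-memory-expansion dynamics (subtracting a shift, adjoining $(\next(\cdot),\cdot)$) act uniformly on position entries, so, since $\sta$ and $\stb$ carry the same position-entry profiles, redirection leaves unchanged the set $\frst(H)$ attached to any state: every element of $\exst(q)$ in $\redir{\matchine}{\sta}{\stb}$ has the same $\frst$ as an element already present in $\matchine$. Moreover $\mnshft$ satisfies $\mnshft(q)\le\shift(q,x)+\mnshft(\trans(q,x))$, so a state reached from $\sta$ after a partial run of total shift $\sigma$ has $\mnshft$ at least $\mnshft(\sta)-\sigma$; interchangeability guarantees that elements of $\exst(\sta)$ and of $\exst(\stb)$ can disagree only on position entries $<\mnshft(\sta)$, and once $\sigma\ge\mnshft(\sta)$ those positions have been shifted strictly below $0$ (flushed out). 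Combining these facts, every new element of $\exst(q)$ in $\redir{\matchine}{\sta}{\stb}$ differs from an old one only at position entries strictly below $\mnshft(q)$. Part~1 follows at once: property~1 of consistency holds because $\frst$ is preserved, and property~2 because it disregards exactly those low position entries. The same preservation of $\frst$, combined with the non-redundancy of $\matchine$ --- which says that $\next(q)\notin\frst(H)$ for every reachable $(q,H)$ of the full-memory expansion --- shows that $\redir{\matchine}{\sta}{\stb}$, and symmetrically $\redir{\matchine}{\stb}{\sta}$, is non-redundant as well.

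For Part~2, assume $\matchine$ valid and suppose $\redir{\matchine}{\sta}{\stb}$ reports a spurious occurrence, or misses a genuine one, at a position $m$ of a text $t$. Applying the split-and-glue construction to the suffix of the run from the last moment the current position is $\le m$, I would obtain a text $v$ agreeing with $t$ on all positions the machine reads in the window at $m$ and for which the run of $\matchine$ makes the same report (or non-report) at $m$; the key point is that a report at $m$ is necessarily preceded, within that suffix, by a state visited with current position $m$, which (by the lower bound on $\mnshft$ above) is reached only after a total elapsed shift at least its $\mnshft$, so the discrepancy between $v$ and $t$ on the flushed low positions cannot affect that report. This exhibits a fault of $\matchine$, contradicting validity; hence $\redir{\matchine}{\sta}{\stb}$ is valid, and likewise $\redir{\matchine}{\stb}{\sta}$.

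Part~3 is then a verification of the hypotheses of Theorem~\ref{theoRedir}. Its first hypothesis is the extra assumption of item~3. Since $\matchine$ is non-redundant, Theorem~\ref{theoiid} makes its state sequence on an iid text a Markov chain with $\Mtrans(q,q')=\sum_{x:\trans(q,x)=q'}\piid(x)$; having shown $\redir{\matchine}{\sta}{\stb}$ and $\redir{\matchine}{\stb}{\sta}$ non-redundant, Theorem~\ref{theoiid} applies to them too, and substituting the redirected transition function into that formula gives precisely $\Mtrans[\moda](r,s)=\Mtrans(r,s)$ for $s\ne\sta$ and $\Mtrans[\moda](r,\sta)=\Mtrans(r,\sta)+\Mtrans(r,\stb)$, with initial distribution depending on whether $\stb=\init$ --- the structure required of $\moda$ in Theorem~\ref{theoRedir} --- and symmetrically for $\modb$. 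Theorem~\ref{theoRedir} then yields $\as{\piid}{\matchine}\le\max\{\as{\piid}{\redir{\matchine}{\sta}{\stb}},\as{\piid}{\redir{\matchine}{\stb}{\sta}}\}$. I expect the main obstacle to be Part~2 --- more precisely, making the iterated split-and-glue construction and the accompanying ``flushing'' argument fully rigorous when the two runs separate several times; the rest is a routine adaptation of Lemma~\ref{lemmaRedir} together with the position-entry/$\mnshft$ bookkeeping of the second paragraph.
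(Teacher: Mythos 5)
Your proposal is correct and follows essentially the same route as the paper: Part 1 from the definitions of consistency and interchangeability (with the $\frst$/$\mnshft$ bookkeeping made explicit), Part 2 by adapting the split-and-glue argument of Lemma~\ref{lemmaRedir}, and Part 3 by noting that non-redundancy is preserved and then checking the hypotheses of Theorem~\ref{theoRedir} via Theorem~\ref{theoiid}. The paper's own proof is in fact terser than yours (it simply invokes Lemma~\ref{lemmaRedir} for Part 2), so your extra care about the positions flushed below $\mnshft(\sta)$ is a welcome refinement rather than a divergence.
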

\begin{proof}
Property 1 comes straightforwardly with the definitions of consistency and interchangeability.

Property 2 may be proved in the same way as Lemma \ref{lemmaRedir}.

In order to prove Property 3, we remark that, since $\matchine$ is non-redundant and $\sta$ and $\stb$ are interchangeable, both  $\redir{\matchine}{\sta}{\stb}$ and $\redir{\matchine}{\stb}{\sta}$ are non-redundant. With Theorem \ref{theoiid},  we get that, if $t$ is iid, the sequence of states parsed during an execution of the algorithm on the input $(t,\matchine)$ (resp. $(t,\redir{\matchine}{\sta}{\stb})$, $(t,\redir{\matchine}{\stb}{\sta})$) follows a Markov model $M=(\Minit, \Mtrans)$ (resp. $\moda=(\Minit[\moda], \Mtrans[\moda])$, $\modb=(\Minit[\modb], \Mtrans[\modb])$). Moreover, since for all states $r$ and $s$, we have 
\begin{dmath*}\Mtrans(r,s) = \displaystyle\smashoperator[r]{\sum_{x,\trans(r,x) = s}} \piid(x),\end{dmath*}
\begin{dmath*}\Mtrans[\moda](r,s) = \displaystyle\smashoperator[r]{\sum_{x,\redir{\trans}{\sta}{\stb}(r,x) = s}} \piid(x),\end{dmath*}
\begin{dmath*}\Mtrans[\modb](r,s) = \displaystyle\smashoperator[r]{\sum_{x ,\redir{\trans}{\stb}{\sta}(r,x) = s}} \piid(x),\end{dmath*}
and with the definition of $\redir{\matchine}{\sta}{\stb}$ and $\redir{\matchine}{\stb}{\sta}$, both $\moda$ and $\modb$ satisfy the assumptions of Theorem  \ref{theoRedir}.
If, moreover, for all states $r$ and all symbols $x$ and $y$, $\trans(r,x) = \trans(r,y) \Rightarrow \shift(r,x) = \shift(r,y)$, all the assumptions of Theorem \ref{theoRedir} are granted, which leads to Property 3.
\end{proof}

\begin{lemma}\label{lemmaMnsft2}
Let $\matchine$ be a valid, non-redundant and compact $w$-matching machine containing only consistent states. For all iid models $\piid$, there exists a $w$-matching machine $\matchine'$ such that 
\begin{enumerate}
	\item for all states $q\in\stateset'$, if $\next'(q)<\mnshft(q)$ then for all symbols $x$ and $y$, both $\trans'(q,x) = \trans'(q,y)$ and $\shift'(q,x) = \shift'(q,y)$;
	\item $\as{\piid}{\matchine'}\geq\as{\piid}{\matchine}$.
\end{enumerate}
\end{lemma}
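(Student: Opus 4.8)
The plan is to process one by one the states $q$ with $\next(q)<\mnshft(q)$ — call such a $q$ \emph{deficient} — replacing $\matchine$ each time by a machine that is no slower and in which $q$ has been given a constant behaviour, and then to argue that the process terminates and never turns an untreated state deficient. As a preliminary step, by Proposition \ref{propComp} we may assume from the outset that $\matchine$ is standard, compact and valid (this only increases $\as{\piid}{\matchine}$); it is then non-redundant (Remark \ref{remComp}) and all its states are consistent. The central observation is that the position inspected at a deficient state lies strictly below every position that is later reported: if $\curState{\matchine}{t}{i}=q$ and $j>i$ with $\curState{\matchine}{t}{j}\in\prematchset$, then $\curPos{\matchine}{t}{j}-\curPos{\matchine}{t}{i}$ is the total shift accumulated along a path from $q$ to a pre-match state, hence is at least $\mnshft(q)>\next(q)$. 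Since the current position is non-decreasing, validity of $\matchine$ then forces that, for every text $t$ with $\curState{\matchine}{t}{i}=q$, the pattern $w$ has no occurrence at any position strictly between $\curPos{\matchine}{t}{i}$ and $\curPos{\matchine}{t}{i}+\mnshft(q)$; in particular $q\notin\prematchset$. Sharpening this with compactness, one shows that if $q$ admits a successor in $\prematchset$ then all such successors carry the same shift, that this shift equals $\mnshft(q)=\max_{x}\shift(q,x)$, and that (having identical memories) they may be merged into a single pre-match state $q_{M}$ by Corollary \ref{coroDup}: were it otherwise, a successor with the larger shift would, on a text differing from another only at the position read at $q$ — hence lying outside the occurrence window confirmed by the other branch — skip the occurrence reported through that other branch.

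Neutralising a single deficient state $q$ is then done as follows. Suppose first that $q$ has no successor in $\prematchset$, and put $s_{0}=\min_{x}\shift(q,x)$. For each symbol $x$ I replace the single step ``inspect at $q$, shift by $\shift(q,x)$, move to $\trans(q,x)$'' by two steps: a first step going to a fresh state $q_{x}$ with the common shift $s_{0}$, and a second step reproducing the first step of $\trans(q,x)$ advanced by $\shift(q,x)-s_{0}$ positions. The resulting machine is equivalent to $\matchine$ (hence has the same speed), valid, non-redundant, and with consistent states; in it all transitions from $q$ have shift $s_{0}$, and the states $q_{x}$ share a common set of memory positions, differing only at the entry $\next(q)-s_{0}$, which lies below $\mnshft(q_{x})$ — so the $q_{x}$ are pairwise interchangeable. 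Merging them one pair at a time via item 3 of Lemma \ref{remMnsft}, always keeping as survivor the one lying on a $\mnshft$-optimal path of $\matchine$ (so that $\mnshft(q)$ is left unchanged), preserves validity and never lowers $\as{\piid}{\cdot}$; at the end $q$ has a single successor and, $\trans$ still determining $\shift$, a single shift. If instead $q$ has a successor in $\prematchset$, merge those into $q_{M}$ as above and then set $\trans(q,\cdot):=q_{M}$ and $\shift(q,\cdot):=\mnshft(q)$; this is valid because the reports it may drop are located at positions strictly between the current one and $\curPos{}+\mnshft(q)$, where by the preliminary fact $w$ never occurs, while $q_{M}$ and every later state keep their behaviour; and it does not decrease $\as{\piid}{\cdot}$, which is obtained by the recurrent-class decomposition used in the proof of Theorem \ref{theoRedir}, comparing the modified run on a text $t$ with the run of $\matchine$ on the text obtained from $t$ by overwriting the position read at $q$ with a symbol that sends $q$ to a pre-match successor.

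The iteration processes the deficient states in order of non-increasing $\mnshft$. Each step leaves $\mnshft$ unchanged on every previously treated state, and creates at most one new state — a forward-shifted copy of a successor of $q$ — which is deficient only if that successor was, and then lies strictly nearer a pre-match state; hence the procedure terminates, producing a machine $\matchine'$ with $\as{\piid}{\matchine'}\geq\as{\piid}{\matchine}$ in which every state $q$ with $\next'(q)<\mnshft(q)$ has constant behaviour, which is Properties 1 and 2.

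I expect the main work to be, first, the two refined claims in the preliminary paragraph (the compactness argument ruling out unequal shifts among pre-match successors and forcing $\mnshft(q)=\max_{x}\shift(q,x)$ there), and second, in that same case, the speed comparison together with the verification that the choices of survivors keep $\mnshft$ from ever increasing on an untreated state; the remaining verifications (equivalence, non-redundancy and consistency of the auxiliary construction) are routine computations with $k$-shifts in the full-memory expansion.
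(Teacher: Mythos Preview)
Your proposal takes a substantially more complicated route than the paper, and in doing so introduces a genuine gap.

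The paper's proof is short and uniform: given a state $q$ with $\next(q)<\mnshft(q)$ and two symbols $x,y$ with $\trans(q,x)\neq\trans(q,y)$, it observes directly that the two \emph{existing} successors $\sta=\trans(q,x)$ and $\stb=\trans(q,y)$ are interchangeable (this is exactly where the hypothesis ``all states consistent'' together with $\next(q)<\mnshft(q)$ is used), and then applies Lemma~\ref{remMnsft} as a black box: item~3 gives the speed inequality, items~1--2 preserve validity and consistency. Iterating until no deficient state has two distinct successors terminates because each step strictly decreases the number of triples $(q,x,y)$ with $\trans(q,x)\neq\trans(q,y)$ at a deficient $q$, and no states are created. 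There is no case split on whether successors are pre-match, no fresh states, and no reduction to standard machines.

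Your case split and the auxiliary states $q_x$ are unnecessary, but the pre-match branch is where the argument actually breaks. There you do not merely redirect: you overwrite $\shift(q,\cdot)$ with $\mnshft(q)$. This operation is \emph{not} covered by Theorem~\ref{theoRedir} (which compares machines differing only in the destinations of transitions, with shifts determined by the unchanged rule $\trans(r,x)=\trans(r,y)\Rightarrow\shift(r,x)=\shift(r,y)$), so invoking ``the recurrent-class decomposition used in the proof of Theorem~\ref{theoRedir}'' does not yield the speed comparison. Your coupling sketch (overwrite one text position and compare runs) controls individual trajectories, not the limit of the expected ratio $\size{t}/\tac{\matchine}(t)$; turning it into an inequality on asymptotic speeds would require a separate argument you have not supplied. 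Likewise, your termination argument depends on the claim that merges never raise $\mnshft$ on untreated states; but Lemma~\ref{remMnsft} lets you keep only \emph{one} of the two redirections, and you do not get to choose which --- you must take the one with the larger speed --- so you cannot simultaneously guarantee that the survivor lies on a $\mnshft$-optimal path.

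The fix is to drop the extra machinery: apply Lemma~\ref{remMnsft} directly to $\trans(q,x)$ and $\trans(q,y)$, and verify interchangeability from the consistency hypothesis.
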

\begin{proof}
Let us first remark that under the assumptions that $\matchine$ is valid and all its states consistent, if there exist two symbols $x$ and $y$ such that $\trans(q,x) \neq \trans(q,y)$  then $\shift(q,x) \neq \shift(q,y)$ (it can be proved is the same way as Lemma \ref{lemmaStandard}).

Let us assume that there exists a state $q\in\stateset$ such that both $\next(q)<\mnshft(q)$ and $\trans(q,x) \neq \trans(q,y)$. Since all the states are consistent, the states $\sta=\trans(q,x)$ and $\stb=\trans(q,y)$ are interchangeable. Both $\redir{\matchine}{\sta}{\stb}$ and $\redir{\matchine}{\sta}{\stb}$ are such that $\redir{\trans}{\sta}{\stb}(q,x) \neq \redir{\trans}{\sta}{\stb}(q,y)$. Lemma \ref{remMnsft} ensures that both  $\redir{\matchine}{\sta}{\stb}$ and $\redir{\matchine}{\sta}{\stb}$ contain only consistent states, are valid and such that \begin{dmath*}\as{\piid}{\matchine}\leq\max \{\as{\piid}{\redir{\matchine}{\sta}{\stb}}, \as{\piid}{\redir{\matchine}{\stb}{\sta}}\}.\end{dmath*}

We put $\matchine'$ for the machine with the greatest asymptotic speed among $\redir{\matchine}{\sta}{\stb}$ and $\redir{\matchine}{\sta}{\stb}$. If $\matchine'$ is such that for all states $q\in\stateset'$, if $\next'(q)<\mnshft(q)$ then for all symbols $x$ and $y$, both $\trans'(q,x) = \trans'(q,y)$ and $\shift'(q,x) = \shift'(q,y)$, the lemma is proved. Otherwise, we replace $\matchine$ by $\matchine'$ which still satisfies the assumptions of the lemma and has a greater asymptotic speed before iterating the same process. Since at each iteration, there is a state $q$ and two symbols $x$ and $y$ such that  $\trans(q,x) \neq \trans(q,y)$ and $\trans'(q,x) = \trans'(q,y)$, we eventually end with a machine $\matchine'$ with the desired property.
\end{proof}

Let $\matchine = (\stateset, \init, \prematchset, \next, \trans, \shift)$ be a $w$-matching machine verifying $\next(q)\geq\mnshft(q)$ for all states $q\in\stateset$. The $w$-matching machine $\posit{\matchine} = (\posit{\stateset}, \posit{\init}, \posit{\prematchset}, \posit{\next}, \posit{\trans}, \posit{\shift})$ is defined as
\begin{itemize}
	\item $\posit{\stateset} = \stateset$,
	\item $\posit{\init} = \init$,
	\item $\posit{\prematchset} = \prematchset$,
	\item $\posit{\next}(q) = \next(q)-\mnshft(q)$,
	\item $\posit{\trans}(q,x) = \trans(q,x)$,
	\item $\posit{\shift}(q,x) = \shift(q,x)-\mnshft(q)+\mnshft(\trans(q,x))$.
\end{itemize}
for all states $q \in \stateset$.

If $\matchine$ is such that $\next(q)\geq\mnshft(q)$ for all $q\in\stateset$, the quantities $\posit{\next}(q)$ and $\posit{\shift}(q,x)$ are non-negative for all states $q$ and all symbols $x$ (i.e. $\posit{\matchine}$ is well a $w$-matching machine).

\begin{remark}\label{remPlus}
 For all texts $t$, the sequences of accessed  positions coincide during the executions of the generic algorithm on the inputs $(\matchine, t)$ and $(\posit{\matchine}, t)$. In particular, $\matchine$ is valid if and only if  $\posit{\matchine}$ is valid and the asymptotic speeds of $\matchine$ and $\posit{\matchine}$ are equal.
\end{remark}

\begin{theorem}\label{theoMain}
Let $\matchine$ be a valid $w$-matching machine. For all iid models $\piid$, there exists a $w$-matching machine $\matchine_{\piid}$ with $\as{\piid}{\matchine}\leq\as{\piid}{\matchine_{\piid}}$ and which is
\begin{itemize}
	\item standard, 
	\item compact,
	\item valid,
	\item in which all the states are relevant,
	\item such that there is no pair of states $(q,q')$ such that $q\neq q'$ and $\mem[\matchine_{\piid}]{q} = \mem[\matchine_{\piid}]{q'}$.
\end{itemize}
\end{theorem}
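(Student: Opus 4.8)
\emph{Plan.} The strategy is to turn $\matchine$ into $\matchine_{\piid}$ by a finite chain of elementary transformations, each of which does not decrease the asymptotic speed under $\piid$ and each of which installs one more of the five requested properties without destroying the ones already obtained. The starting move is Proposition~\ref{propComp}: it replaces $\matchine$ by a standard, compact and valid machine of speed at least $\as{\piid}{\matchine}$, which, being standard and compact, is moreover non-redundant (Remark~\ref{remComp}) and has only consistent states; thus all four hypotheses of Lemma~\ref{lemmaMnsft2} are met.

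Next I would make every state relevant. Apply Lemma~\ref{lemmaMnsft2} to obtain, at no cost in speed, a valid machine in which every state $q$ with $\next(q)<\mnshft(q)$ has $\trans(q,\cdot)$ and $\shift(q,\cdot)$ constant; such a state satisfies the second non-compactness assertion, so it can be removed by the operation $\matchine\mapsto\remo{\matchine}{q}$ of Lemma~\ref{lemmaComp}, which preserves validity and only speeds the machine up. Because the transitions out of $q$ are constant, $\mnshft$ is unchanged on the surviving states, so none of them ceases to be removable, and $\init$ is never touched because $\mnshft(\init)=0$ for a valid machine (an occurrence at position~$0$ must be reported). Iterating these removals (the number of states strictly decreases) and recompacting with Lemma~\ref{lemmaComp} leaves a compact valid machine with $\next(q)\ge\mnshft(q)$ for every state $q$, so $\posit{\matchine}$ is defined; by Remark~\ref{remPlus} it is valid and has the same accessed positions, hence the same speed. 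Along any path from a state $q$ to a pre-match state the $\posit{\shift}$-values telescope to $(\text{total of }\shift)-\mnshft(q)$, which is $\ge0$ and equals $0$ on a path realising $\mnshft(q)$; therefore $\mnshft\equiv0$ in $\posit{\matchine}$, so, by the characterisation of relevant states in valid machines, all its states are relevant.

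Then I would restore the structural properties. Replace the current machine by its full memory expansion, which by Proposition~\ref{propFull} is standard, valid and of the same speed, and recompact it with Lemma~\ref{lemmaComp}; since $\mnshft$ is preserved both by the full memory expansion and by each such removal, the resulting standard, compact and valid machine still has only relevant states. Finally, as long as there are two distinct states $\sta,\stb$ with $\mem{\sta}=\mem{\stb}$, replace the machine by whichever of $\redir{\matchine}{\sta}{\stb}$ and $\redir{\matchine}{\stb}{\sta}$ has the larger speed: by Corollary~\ref{coroDup} that speed is at least the current one, by Lemma~\ref{lemmaRedir} the new machine is again standard and valid, it is still all-relevant (a redirection alters $\mnshft$ only through transitions reaching $\sta$ or $\stb$, and $\mnshft(\sta)=\mnshft(\stb)=0$), and recompacting it once more with Lemma~\ref{lemmaComp} (which leaves every set $\mem{q}$ unchanged, hence creates no new coincidence) keeps every property. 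Each such round strictly decreases the number of states, so the procedure terminates at a machine $\matchine_{\piid}$ that is standard, compact, valid, has only relevant states and no two distinct states of equal memory, with $\as{\piid}{\matchine}\le\as{\piid}{\matchine_{\piid}}$.

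The hard part is organisational rather than computational: Lemma~\ref{lemmaMnsft2} and the full memory expansion pull in opposite directions — the former typically breaks standardness, the latter restores it but may reintroduce non-compact states and coinciding memories — so the operations must be ordered with care and one must check, after each one, that the properties already secured are not lost. The fact that makes everything fit together is that $\mnshft$ is an invariant of the full memory expansion, of the state-removal operation of Lemma~\ref{lemmaComp}, and of a redirection between two states of equal memory; hence once $\posit{\cdot}$ has forced $\mnshft\equiv0$, the ``all states relevant'' property survives all the remaining steps.
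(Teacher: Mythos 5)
Your proposal is correct and follows essentially the same route as the paper's proof: Proposition~\ref{propComp}, then Lemma~\ref{lemmaMnsft2}, removal of the states with $\next(q)<\mnshft(q)$ via Lemma~\ref{lemmaComp}, the $\posit{\cdot}$ transformation with Remark~\ref{remPlus}, a second application of Proposition~\ref{propComp}, and finally iterated use of Corollary~\ref{coroDup}. Your additional bookkeeping (invariance of $\mnshft$ under removal, expansion and redirection, hence preservation of relevance) only makes explicit steps the paper leaves implicit.
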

\begin{proof}
	
With Proposition \ref{propComp}, there exists a standard, compact and valid $w$-matching machine $\matchine_{a}$ such that $\as{\piid}{\matchine}\leq\as{\piid}{\matchine_{a}}$.
Since the machine $\matchine_{a}$ satisfies the assumptions of Lemma \ref{lemmaMnsft2}, there exists a $w$-matching machine $\matchine_{b}$ which is
\begin{itemize}
	\item valid, 
	\item in which all the states are consistent, 
	\item with a asymptotic speed greater than $\matchine_{a}$,
	\item such that  for all states $q\in\stateset_{b}$, if $\next_{b}(q)<\mnshft(q)$ then for all symbols $x$ and $y$, both $\trans_{b}(q,x) = \trans_{b}(q,y)$ and $\shift_{b}(q,x) = \shift_{b}(q,y)$.
\end{itemize} 

The machine $\matchine_{b}$ is still non-redundant. It is possibly non-compact but this can occur only in the case where there exists states $q$ with $\next_{b}(q)<\mnshft(q)$. In this case, Lemma \ref{lemmaComp} may be applied, possibly several times, in order get a compact and valid $w$-matching machine $\matchine_{c}$ with a greater asymptotic speed than $\matchine_{b}$.
Moreover, $\matchine_{c}$ does not contain any state $q$ with $\next_{c}(q)<\mnshft(q)$. 

Let us put $\matchine_{d}$ for $\posit{(\matchine_{c})}$. All the states of $\matchine_{d}$ are relevant. If  $\matchine_{d}$ is not standard and compact, Proposition \ref{propComp} ensures that there exists a $w$-matching machine
$\matchine_{e}$ which is valid, standard, compact, in which all the states are relevant and with a greater asymptotic speed than $\matchine_{d}$.

Finally, applying Corollary \ref{coroDup} on $\matchine_{e}$ as long as there exist two states $q\neq q'$ with $\mem[\matchine_{e}]{q} = \mem[\matchine_{e}]{q'}$, eventually leads  to a $w$-matching machine with the desired properties.
\end{proof}

\begin{corollary}\label{coroFin}
Let $w$ be a pattern, $\piid$ be an iid model and $n$ an integer greater than $\size{w}-1$. Among all the valid $w$-matching machines of order $n$, there exists a machine $\matchine$ with a maximal asymptotic speed which verifies the properties of Theorem \ref{theoMain}. In particular, it is standard, non-redundant and such that $\stateset$ is in bijection with a subset of the partial functions $f$ from $\{0,\ldots, n\}$ to $\alp$, verifying that if $f(i)$ is defined and $i<\size{w}$ then $f(i) = w_{i}$.
\end{corollary}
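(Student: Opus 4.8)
The plan is to combine Theorem~\ref{theoMain} with a finiteness argument, the key point being that the transformations used to prove Theorem~\ref{theoMain} never raise the order of a machine: the full memory expansion, the compaction $\remo{\matchine}{\sta}$ and the redirection $\redir{\matchine}{\sta}{\stb}$ all leave $\next$ unchanged on the surviving states, the map $\matchine\mapsto\posit{\matchine}$ satisfies $\posit{\next}(q)=\next(q)-\mnshft(q)\le\next(q)$, and deleting unreachable states only removes states. Hence applying Theorem~\ref{theoMain} to a valid $w$-matching machine $\matchine_0$ of order at most $n$ yields a machine $\matchine_{\piid}$, again of order at most $n$, which is standard, compact, valid, has only relevant states, contains no two states with the same memory, and satisfies $\as{\piid}{\matchine_0}\le\as{\piid}{\matchine_{\piid}}$. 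Consequently the supremum of $\as{\piid}{\cdot}$ over all valid $w$-matching machines of order at most $n$ equals its supremum over the machines that additionally satisfy every property listed in Theorem~\ref{theoMain}.

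I would then show that, up to renaming states, this second family is finite. For a standard machine the map $q\mapsto\mem{q}$ is well defined, and by the last property of Theorem~\ref{theoMain} it is injective; its values lie in $\readset{\ordmatchine}\subseteq\readset{n}$, which, being the set of partial functions from $\{0,\ldots,n\}$ to $\alp$, is finite. So one may take $\stateset\subseteq\readset{n}$, with $\init$ equal to the empty partial function (the memory at iteration $0$ is empty) and $\prematchset\subseteq\stateset$, and with $\next$ valued in $\{0,\ldots,n\}$. It remains to bound the shift: for a standard valid machine of order at most $n$, Theorem~\ref{theoStdValid}(2) bounds $\shift(q,x)$, whenever $\trans(q,x)\ne\sink$, by the least $k$ satisfying a condition that holds vacuously as soon as $k>\max(\frst(\mem{q})\cup\{\next(q)\})$ — a quantity at most the order, hence at most $n$ — so $\shift(q,x)\le n+1$. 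Since $\stateset$, $\prematchset$ and the codomains of $\next$, $\trans$, $\shift$ are then all drawn from fixed finite sets, only finitely many machines remain, and $\as{\piid}{\cdot}$ attains a maximum on them; by the previous paragraph this is the maximum over all valid $w$-matching machines of order at most $n$. Fix an optimal such machine and call it $\matchine$; it is standard by Theorem~\ref{theoMain}, and, being standard and compact, non-redundant by Remark~\ref{remComp}.

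Finally I would check the compatibility condition on memories. Let $q$ be a state of $\matchine$. Since $q$ is relevant there are a text $t$ and indices $i<j$ with $\curState{\matchine}{t}{i}=q$, with $\curShift{\matchine}{t}{k}=0$ for all $i\le k<j$, and with a match transition at iteration $j$, so the generic algorithm reports an occurrence at position $\curPos{\matchine}{t}{j}=\curPos{\matchine}{t}{i}$; validity of $\matchine$ forces $w$ to occur at $\curPos{\matchine}{t}{i}$ in $t$. By Remark~\ref{remPos} (and the definition of the full memory expansion), each pair $(\ell,x)\in\mem{q}$ certifies that position $\curPos{\matchine}{t}{i}+\ell$ of $t$ has been read and carried symbol $x$; if moreover $\ell<\size{w}$ then $t_{\curPos{\matchine}{t}{i}+\ell}=w_{\ell}$, whence $x=w_{\ell}$. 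So $\mem{q}$, as an element of $\readset{n}$, is a partial function $f$ with $f(\ell)=w_{\ell}$ whenever $f(\ell)$ is defined and $\ell<\size{w}$, and $q\mapsto\mem{q}$ realises the asserted bijection of $\stateset$ with a subset of these $w$-compatible partial functions from $\{0,\ldots,n\}$ to $\alp$.

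I expect the finiteness step to be the main obstacle, and within it the uniform bound on the shift function: one must ensure that validity of a standard machine of order at most $n$ really caps every shift — a priori a machine could make arbitrarily large jumps landing back in the initial state — and it is exactly the bound $\shift(q,x)\le n+1$ drawn from Theorem~\ref{theoStdValid}(2) that confines the candidates to a finite set, turning the supremum of the asymptotic speed into an attained maximum.
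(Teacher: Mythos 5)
Your proof is correct and follows essentially the same route as the paper's: Theorem \ref{theoMain} supplies an at-least-as-fast machine that is standard, compact (hence non-redundant), with pairwise distinct memories and only relevant states; the injective map $q\mapsto\mem{q}$ gives the bijection with a subset of the partial functions, and relevance together with validity forces $f(i)=w_{i}$ whenever $f(i)$ is defined and $i<\size{w}$. You additionally make explicit two points the paper leaves implicit — that the constructions behind Theorem \ref{theoMain} never increase the order, and that the candidate family is finite (via the shift bound extracted from Theorem \ref{theoStdValid}) so the supremum is actually attained — which is consistent with, and somewhat more careful than, the paper's own argument.
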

\begin{proof}
With Theorem \ref{theoMain}, a valid $w$-matching machine of order $n$ achieving the greatest asymptotic speed among the machines of order $n$, may be found among the $w$-matching machines $\matchine$ which are, among other properties, 
\begin{enumerate}
	\item standard,
	\item such that there is no pair of states $(q,q')$ such that $q\neq q'$ and $\mem[\matchine_{\piid}]{q} = \mem[\matchine_{\piid}]{q'}$,
	\item in which all the states are relevant.
\end{enumerate}
The first property just ensures that the second one makes sense. The second property implies the bijection between the set of states and a subset of partial functions from $\{0,\ldots, n\}$ to $\alp$ by associating the states $q$ with the partial function $f_{q}$ corresponding to $\mem[\matchine]{q}$. If for a state $q$ and a position $i<\size{w}$, we have $f(i) \neq w_{i}$, then the state $q$ is not relevant (or the $\matchine$ is not valid), which contradicts the property 3 (or the validity of $\matchine$).
\end{proof}

\begin{corollary}\label{coroFinbis}
Let $w$ be a pattern, $\piid$ be an iid model and $n$ an integer greater than $\size{w}-1$. A $w$-matching machine achieving the greatest asymptotic speed among all the $w$-matching machines of order smaller than $n$ can be computed in a finite time and with a finite amount of memory.
\end{corollary}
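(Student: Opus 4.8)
The plan is to use Corollary~\ref{coroFin} to confine a maximal-speed machine of each relevant order to an explicitly finite family, and then to observe that membership in this family, validity, and the asymptotic speed are all effectively computable, so that a finite search returns the desired machine. Throughout, ``$w$-matching machine'' is read as a \emph{valid} one: the asymptotic speed of a non-valid machine need not exist (Theorem~\ref{theoASExist}) and, when it does, can be made arbitrarily large by inflating shifts, so the statement is only meaningful for valid machines.

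By Item~1 of Remark~\ref{remValid} every valid $w$-matching machine has order at least $\size{w}-1$, so it suffices, for each integer $k$ with $\size{w}-1\leq k\leq n-1$, to produce effectively a valid machine of order $k$ of maximal asymptotic speed, and then to keep the best of these finitely many outputs. Fix such a $k$. Corollary~\ref{coroFin} (whose argument applies verbatim to every order $\geq\size{w}-1$, since it rests on Theorem~\ref{theoMain}, which has no order restriction, while the transformations involved never increase the order) guarantees that the supremum of $\as{\piid}{\cdot}$ over the valid machines of order $k$ is attained by a machine which is standard and non-redundant and whose set of states is in bijection with a subset of the finite set $\mathcal{F}_{k}$ of partial functions $f$ from $\{0,\ldots,k\}$ to $\alp$ with $f(i)=w_{i}$ whenever $f(i)$ is defined and $i<\size{w}$.

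I would then enumerate the finitely many $6$-tuples $(\stateset,\init,\prematchset,\next,\trans,\shift)$ in which $\stateset\subseteq\mathcal{F}_{k}\cup\{\sink\}$ (each non-sink state identified with its memory in $\readset{k}$), $\init\in\stateset$, $\prematchset\subseteq\stateset$, $\next$ maps into $\{0,\ldots,k\}$, $\trans$ maps into $\stateset$, and $\shift$ maps into $\{0,\ldots,k+1\}$ --- this last bound being harmless, since by Item~2 of Theorem~\ref{theoStdValid} every valid machine of order $k$ satisfies $\shift(q,x)\leq\max(\frst(\mem{q})\cup\{\next(q)\})+1\leq k+1$. For each such tuple one can decide by a finite computation whether it is a standard, non-redundant $w$-matching machine (compute its full-memory expansion, which is finite because the order is $\leq k$, compare cardinalities, and check $\next(q)\notin\frst(\mem{q})$ for the reachable states), and, for those that are, decide validity by testing the three conditions of Theorem~\ref{theoStdValid}: Conditions~1 and~2 range over finitely many states, symbols and positions, and Condition~3 is the absence of a zero-shift cycle in the finite transition graph. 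The machine supplied by Corollary~\ref{coroFin} passes all these tests, so the supremum over order-$k$ machines is attained inside the enumerated family. For each valid member $\matchine$ of it, $\as{\piid}{\matchine}$ is computed from Theorem~\ref{theoiid} --- which turns the sequence of parsed states into the explicit Markov model $M$ --- together with the closed-form identity $\as{\piid}{\matchine}=\sum_{q\in\stateset}\alpha_{q}\expstate{q}$ of Section~\ref{s:models}: the $\expstate{q}$ are given explicitly, and the limit frequencies $\alpha_{q}$ are obtained by finite linear algebra (decompose $M$ into its transient states and recurrent classes, solve the stationary equations of each recurrent class, and compute the absorption probabilities from $\init$), all using a bounded amount of memory. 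Taking the maximum over the valid members gives a valid machine of order $k$ of maximal asymptotic speed, and the best of these over $k\in\{\size{w}-1,\ldots,n-1\}$ is the sought machine.

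The main obstacle is pinning down that the search space is genuinely finite: the shift function a priori takes values in the infinite set $\mathbb{N}$ and $\trans$ is unconstrained, and it is the reduction to \emph{standard} non-redundant machines provided by Corollary~\ref{coroFin} that both bounds the shifts (through validity) and makes $\trans$ finitely valued (and in fact determined by the memories and shifts), so that only finitely many tuples need be examined. A secondary caveat, if one wants a literal complexity claim, is that $\as{\piid}{\matchine}$ is a rational function of the entries of $\piid$; ``finite time and memory'' should accordingly be understood either under the hypothesis that those entries are rational, or as an exact symbolic computation, which does not affect decidability.
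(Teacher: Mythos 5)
Your proposal is correct and follows essentially the same route as the paper: restrict the search to the finite family of machines whose states correspond to partial functions from $\{0,\ldots,n\}$ to $\alp$ (justified by Corollary~\ref{coroFin}), decide validity of each candidate via Theorem~\ref{theoStdValid}, and compute each asymptotic speed by finite Markov-chain limit-frequency calculations before taking the maximum. Your additional details (explicit shift bound $k+1$, per-order enumeration, and the remark on exact/rational arithmetic) merely flesh out steps the paper leaves implicit.
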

\begin{proof}
We first remark that, from Theorem \ref{theoStdValid}, checking if a given $w$-matching machine is valid can be performed in a finite time. Computing its asymptotic speed with regard to an iid model $\piid$ just needs to determine the limit frequencies of a finite Markov chain, which can also be done in a finite time.

Finally, since the subsets of partial functions from $\{0,\ldots, n\}$ to $\alp$ is finite, checking the validity and computing the asymptotic speeds with regard to an iid model $\piid$ of all the $w$-matching machines of which the set of states is in bijection with a partial function from $\{0,\ldots, n\}$ to $\alp$, can be performed in a finite time.
\end{proof}

Being given a pattern $w$, an iid model $\piid$ and an order $n$, it is thus possible to determine with certainty a $w$-matching machine which achieves the greatest asymptotic speed, thus somehow  the smallest asymptotic average complexity on texts following the distribution $\piid$. In the companion paper \cite{didierY}, we provide an algorithm for determining, being given any pattern $w$,  an optimal $w$-matching machine of order $\size{w}-1$ with regard to a given iid model $\piid$ (i.e. with the greatest asymptotic speed under $\piid$). Table \ref{table} displays the asymptotic speeds of some standard algorithms (see \cite{Crochemore1994,Charras2004}) and that of the optimal ``fastest'' one under a given iid model.

\begin{table}
\includegraphics[width=\textwidth]{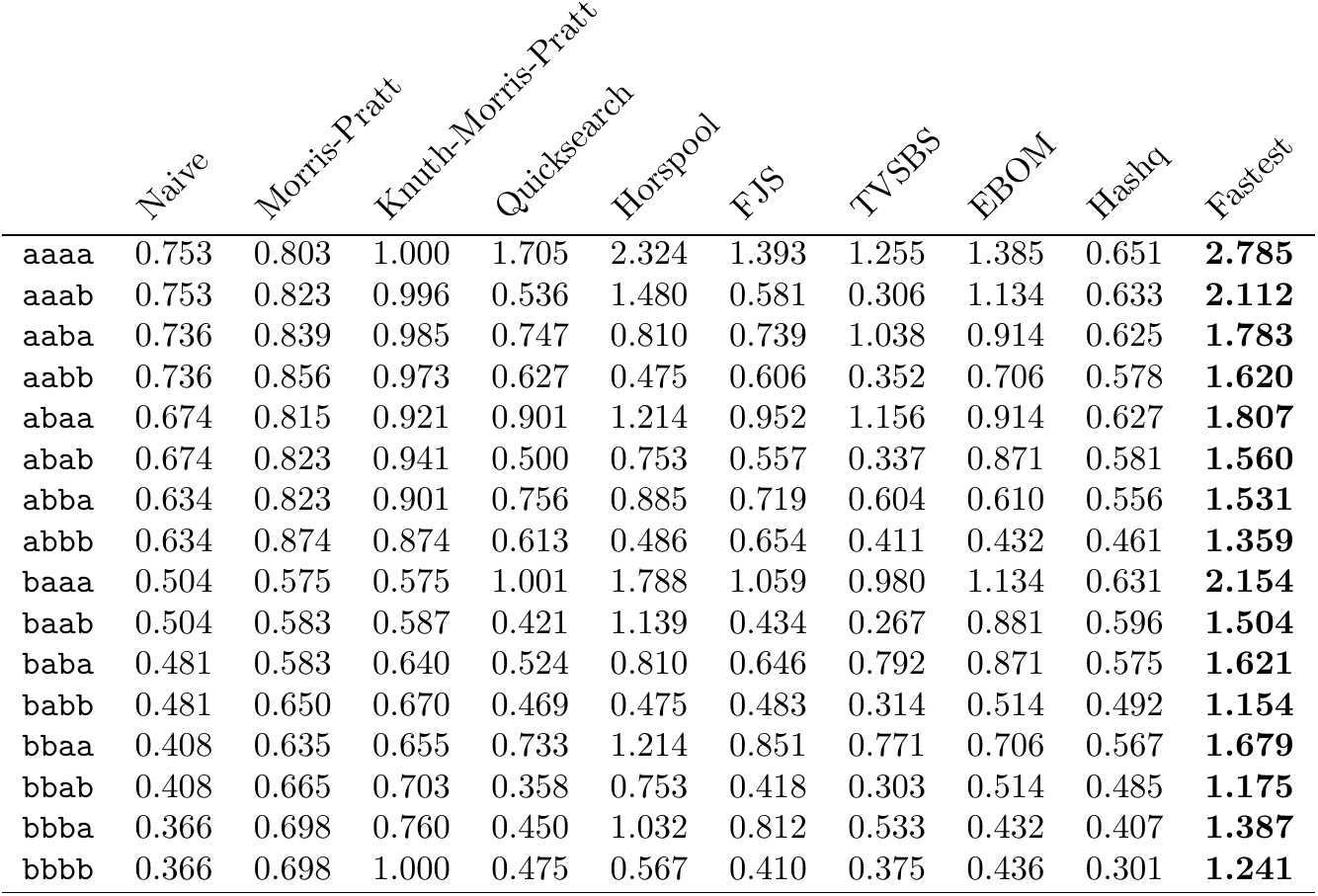}
\caption{Asymptotic speeds of standard algorithms for all the patterns of length $4$ over \{\texttt{a}, \texttt{b}\} with $\piid_{\mbox{\texttt{a}}} = 0.25$ and $\piid_{\mbox{\texttt{b}}} = 0.75$.}\label{table}
\end{table}

\begin{table}
\includegraphics[width=\textwidth]{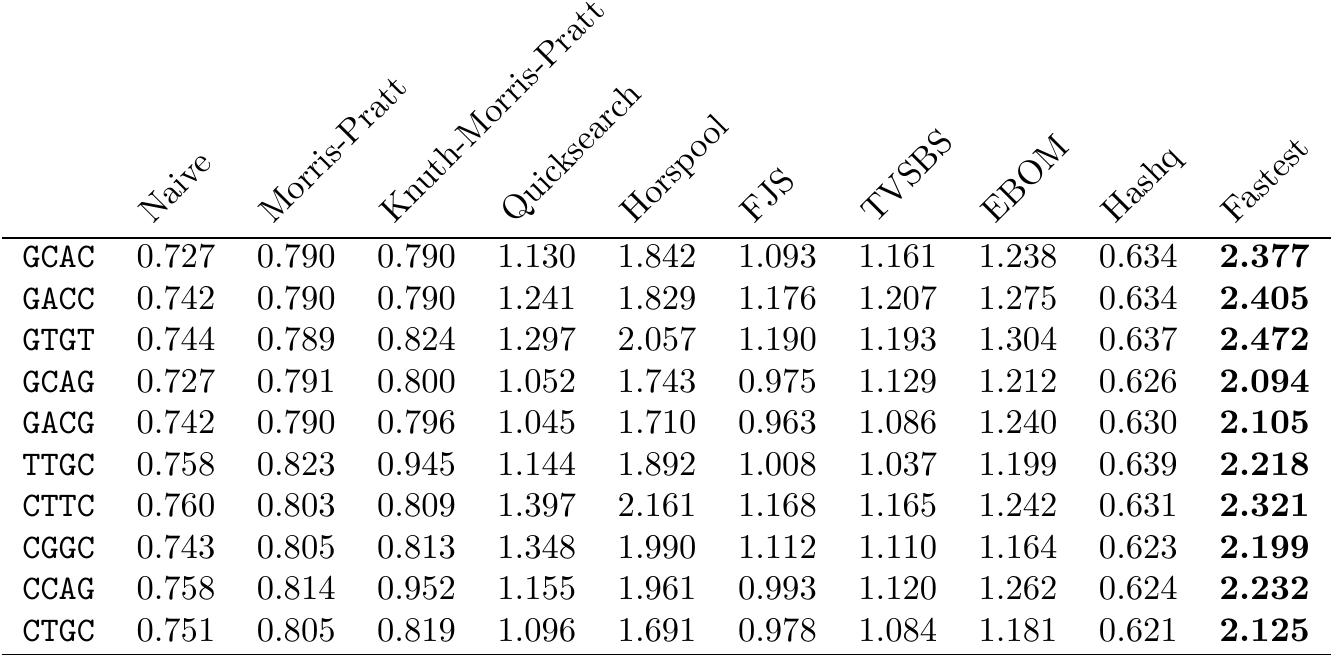}
\caption{Average speeds of standard algorithms over \textit{E. Coli} genome for $10$ patterns randomly picked in the sequence.}\label{tableColi}
\end{table}

\bibliographystyle{abbrv}
\bibliography{PatternMatching}

\end{document}